\numberwithin{equation}{section}
\numberwithin{figure}{section}
\numberwithin{table}{section}
\let\c@figure\c@table
\newcommand{\specialcell}[2][c]{%
\begin{tabular}[#1]{@{}c@{}}#2\end{tabular}}
\theoremstyle{plain}
\newtheorem{theorem}{Theorem}[section]
\newtheorem{lemma}[theorem]{Lemma}
\theoremstyle{definition}
\newtheorem{definition}[theorem]{Definition}
\newtheorem{assumption}[theorem]{Assumption}
\theoremstyle{remark}
\definecolor{codegreen}{rgb}{0,0.6,0}
\definecolor{codegray}{rgb}{0.5,0.5,0.5}
\definecolor{codepurple}{rgb}{0.58,0,0.82}
\definecolor{backcolour}{rgb}{0.95,0.95,0.92}
\lstdefinestyle{mystyle}{
    backgroundcolor=\color{backcolour},   
    commentstyle=\color{codegreen},
    keywordstyle=\color{magenta},
    numberstyle=\tiny\color{codegray},
    stringstyle=\color{codepurple},
    basicstyle=\footnotesize,
    breakatwhitespace=false,         
    breaklines=true,                 
    captionpos=b,                    
    keepspaces=true,                 
    numbers=left,                    
    numbersep=5pt,                  
    showspaces=false,                
    showstringspaces=false,
    showtabs=false,                  
    tabsize=2
}
\DeclarePairedDelimiter\floor{\lfloor}{\rfloor}
\date{\today}
\renewcommand{\P}{\mathbb{P}}
\crefname{assumption}{assumption}{assumptions}
\Crefname{assumption}{Assumption}{Assumptions}
\crefname{theorem}{theorem}{theorems}
\Crefname{theorem}{Theorem}{Theorems}
\crefname{figure}{fig.}{}
\Crefname{figure}{Fig.}{}
\title{\textbf{Density estimation with atoms, and functional estimation for mixed discrete-continuous data}}
\author[1]{Aytijhya Saha}
\author[2]{Aaditya Ramdas}
\affil[1]{Massachusetts Institute of Technology.  \texttt{aytijhya@mit.edu}}
\affil[2]{Carnegie Mellon University. 
\texttt{aramdas@cmu.edu}}
\begin{document}

\date{\today}
\maketitle

\begin{abstract}
In classical density (or density-functional) estimation, it is standard to assume that the underlying distribution has a density with respect to the Lebesgue measure.
However, when the data distribution is a mixture of continuous and discrete components, the resulting methods are inconsistent in theory and perform poorly in practice. In this paper, we point out that a minor modification of existing methods for nonparametric density (functional) estimation can allow us to fully remove this assumption while retaining nearly identical theoretical guarantees and improved empirical performance.
Our approach is very simple: data points that appear exactly once are likely to originate from the continuous component, whereas repeated observations are indicative of the discrete part. Leveraging this observation, we modify existing estimators for a broad class of functionals of the continuous component of the mixture; this modification is a ``wrapper'' in the sense that the user can use any underlying method of their choice for continuous density functional estimation. Our modifications deliver consistency without requiring knowledge of the discrete support, the mixing proportion, and without imposing additional assumptions beyond those needed in the absence of the discrete part. Thus, various theorems and existing software packages can be made automatically more robust, with absolutely no additional price when the data is not truly mixed.

\end{abstract}

\section{Introduction}

Estimating a probability density function or a functional thereof is a fundamental problem in statistics and machine learning. Classical nonparametric approaches such as $k$-nearest neighbor methods, histogram-based estimators, and kernel density estimation \cite{silverman1986density,devroye1985nonparametric} typically assume that the underlying distribution is either absolutely continuous with respect to the Lebesgue measure or purely discrete with respect to the counting measure. However, for many problems, we argue that this may be an entirely avoidable assumption, and one can easily deal with mixed discrete-continuous distributions with a countable number of atoms. Thus, we term the method ``density estimation with atoms''. 

  There is rich literature on estimating functionals of the underlying distribution, such as entropy, mutual information, and divergence measures, but again, these methods either assume fully continuous data~\cite{birge1995estimation,laurent1996efficient,bickel1988estimating,NIPS2015_06138bc5,singh2016finite,moon2017ensemble,moon2018ensemble} or fully discrete data~\cite{antos2001convergence,jiao2017maximum,jiao2015minimax}. When the data come from a mixed discrete-continuous distribution, i.e. a mixture distribution containing both a continuous and discrete component, the presented estimators are inconsistent in theory and perform poorly in practice.


 In this work, we propose a simple approach to this problem. Specifically, observations that appear only once are unlikely to have come from the discrete component (at least at large sample sizes), while repeated observations are almost surely drawn from the discrete part. Leveraging this observation, we isolate the continuous component directly from the data without any prior knowledge of the support or structure of the discrete distribution. Our method is fully nonparametric and adapts to the underlying mixture automatically.

To formalize this idea, suppose we have observations $$X_1,\cdots,X_n\sim (1-\pi_1)F+\pi_1 H_1,$$ where $\pi_1\in(0,1)$, $F$ has density $f$ with respect to the Lebesgue measure and $H_1$ is a discrete distribution with countable support.
We first define the kernel density estimator (KDE) for $f$ based on the unique observations:
\begin{equation}
\label{eq:kde}
     \hat{f}_{\mathcal{U}_n^1}(x) = \frac{1}{(|\mathcal{U}_n^1| \vee 1)h} \sum_{i\in[n]:X_i \in \mathcal{U}_n^1} K\left( \frac{x - X_i}{h} \right),
\end{equation}
 where \( K(\cdot) \) is a kernel function and \( h > 0 \) is the bandwidth parameter and $[n]$ denotes the set of integers from $1$ to $n$ and $$ \mathcal{U}_n^1 = \{ X_i \mid X_i \text{ appears exactly once in } \{X_1, \dots, X_n\}, i\in\{1,\cdots,n\} \}.$$
  Notably, the discrete structure can also be directly estimated from the mixture by assigning point masses at locations with repeated observations, weighted by their empirical frequencies.
  
Standard kernel density estimators (KDEs), including those implemented in widely used softwares (such as \textsf{R}, \textsf{Python}), are designed under the assumption of fully continuous data and fail dramatically in such mixed settings. This failure is vividly demonstrated in \Cref{fig:1}, when applied to samples from a simple mixture of a Gaussian and a Binomial distribution, the naïve KDE fails. In contrast, our simple modification, as discussed above, results in accurate recovery of the underlying density and probability mass function. This motivates our work: to formalize and generalize such atom-aware estimators for a broad class of statistical functionals. Since many real datasets are inherently mixed in nature, 
consistent and efficient estimators that are robust to such heterogeneity could substantially enhance
the reliability of modern data-driven applications. 

 \begin{figure*}[!htb]
\centering
\centering
\subfloat[$n=100$]{\includegraphics[width=0.33\linewidth,height=0.22\linewidth]{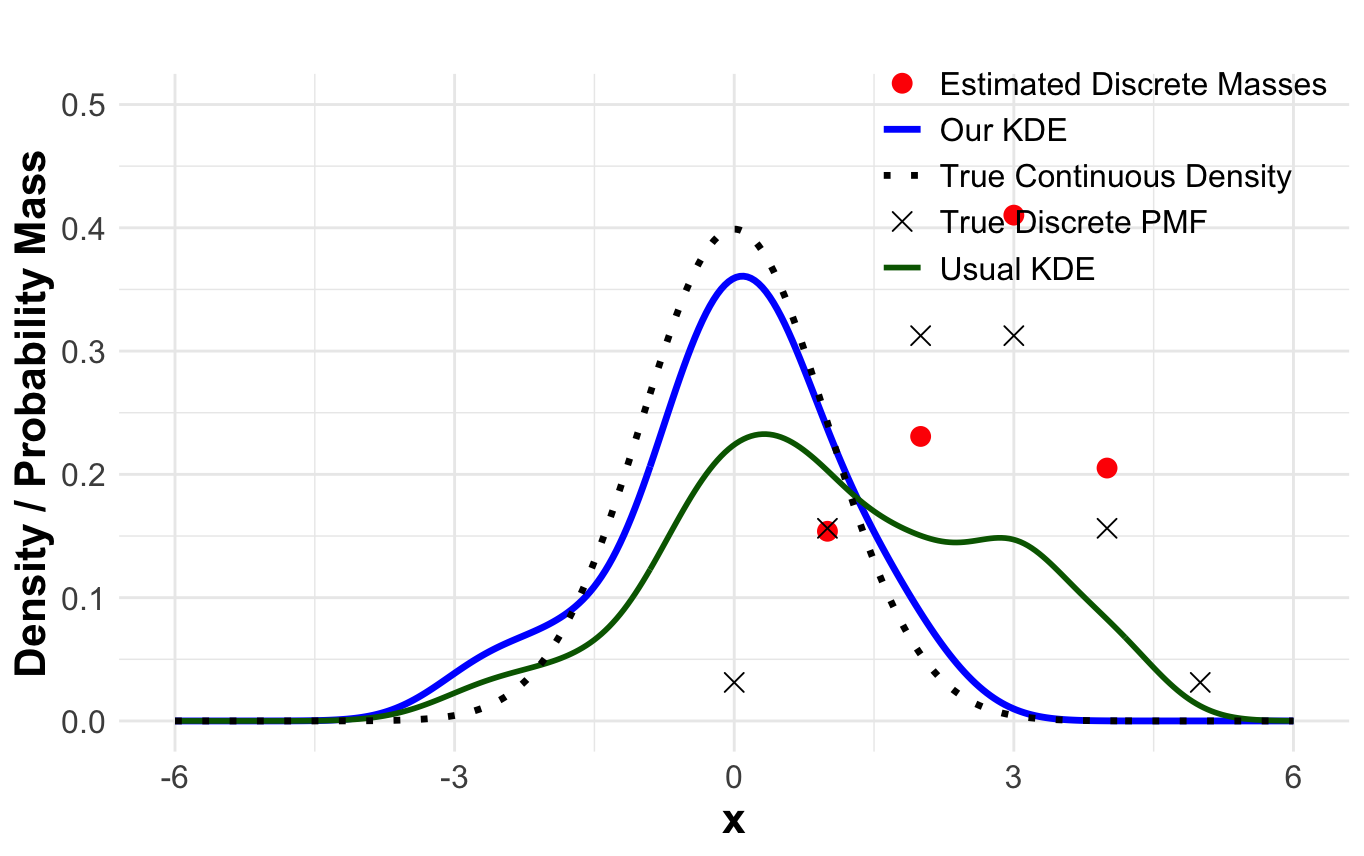}} 
\subfloat[$n=1000$]{\includegraphics[width=0.33\linewidth,height=0.22\linewidth]{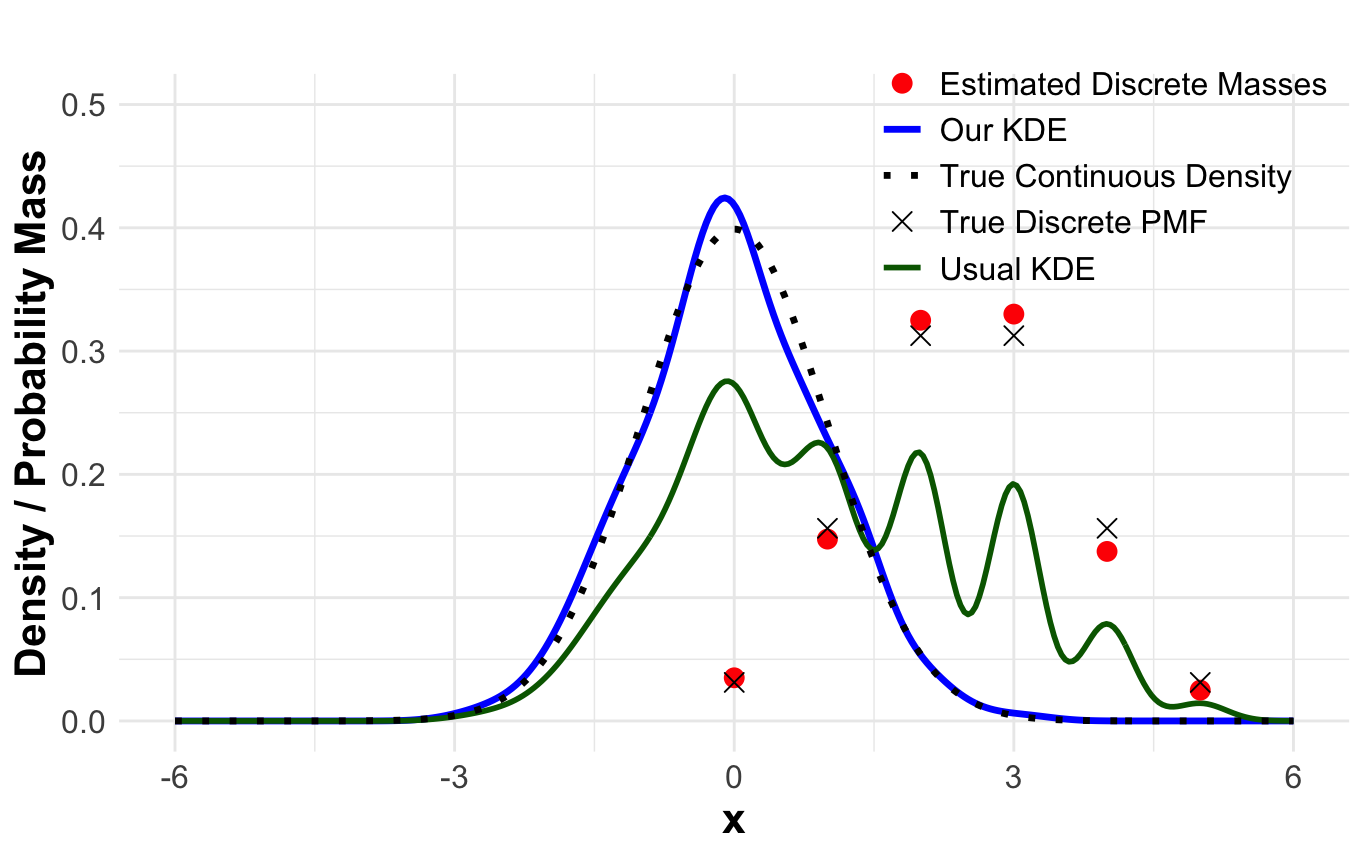}}
\subfloat[$n=10000$]{\includegraphics[width=0.33\linewidth,height=0.22\linewidth]{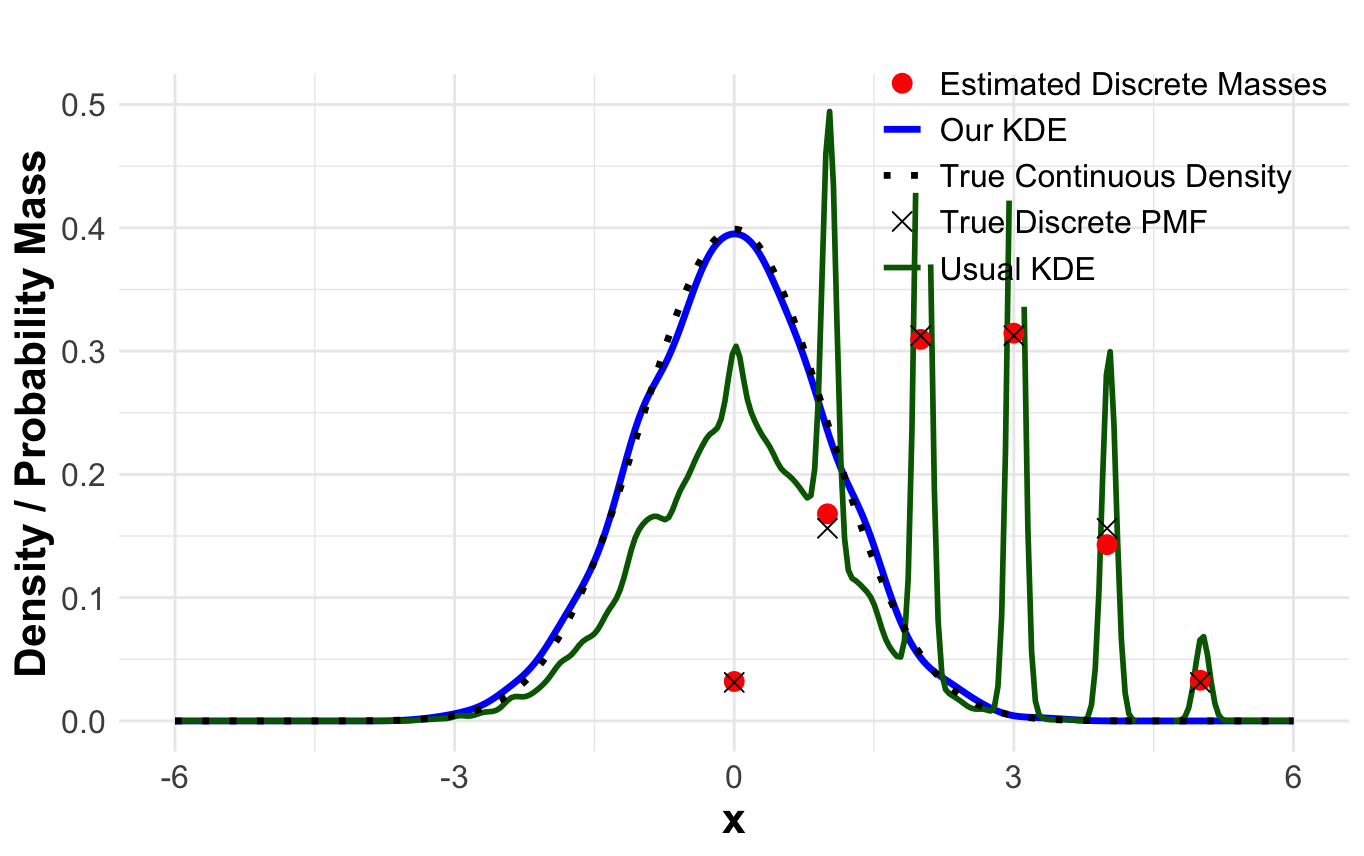}}
\caption[]{Usual KDE (implemented using the \texttt{kde} function from the \texttt{ks} package in \textsf{R}) fails in the presence of atoms. However, our simple modification allows consistent estimation of the density.} 
\label{fig:1}  
\end{figure*}


 Our framework also naturally extends to the estimation of functionals using modern techniques, such as the leave-one-out estimators developed in \cite{NIPS2015_06138bc5}, as we demonstrate in \Cref{sec:meth}.
While we focus on specific estimators to establish theoretical guarantees and illustrate practical performance, our core methodology is not tied to them. It is important to emphasize that the core insight of our approach ---distinguishing between the continuous and discrete components of a distribution based on whether an observation appears uniquely or repeatedly in the sample --- is general and can be readily integrated into a broad range of estimation procedures and statistical problems involving mixed discrete-continuous data.

\subsection{Related works}

\subsubsection{Zero-inflated models}

Zero-inflated models have been extensively studied as a particular instance of mixed discrete-continuous distributions, where the discrete component is a point mass at zero. Such models are common in ecological and biomedical applications, where the variable of interest (e.g., the abundance of a species or the intensity of a clinical measurement) is continuous but exhibits an excess of zero values. Notable contributions include \cite{ancelet2010modelling, lecomte2013compound, liu2019statistical}. However, while these models provide useful insights, they are limited in scope: they typically assume the discrete part consists only of zeros and do not generalize to arbitrary discrete supports as considered in our work.

\subsubsection{Estimation with data mixed discrete-continuous observations}

Our setting is most closely related to works on modeling and estimation from discrete-continuous mixture distributions. \cite{orlitsky2004modeling} and \cite{dragi2017estimating} develop methods for estimating the probability mass function of the discrete component in such mixtures. However, their focus remains confined to characterizing the discrete part, leaving out the estimation of the continuous component or its functionals. Moreover, \cite{marx2021estimating, rahimzamani2018estimators, mesner2020conditional} propose estimators for mutual information and conditional mutual information that can accommodate mixed data. There are two key differences between our work and the preceding papers. First, they focus on mutual information, while we can handle arbitrary functionals. Second, these works propose new estimators to handle the mixed data, while we propose a simple wrapper around any existing estimator that works for continuous data.

\subsubsection{Estimation with data having mixed discrete-continuous features}
There is a substantial body of work on statistical estimation and learning in settings where the feature space is comprised of both discrete and continuous variables, see e.g., \cite{li2021nonparametric} and \cite{bhadra2018inferring}.
However, this setting is fundamentally different from ours: while they address mixed-type features (i.e., columns), our work deals with mixed-type observations (i.e., rows) --- where the observed data itself is drawn from a hybrid distribution over a union of discrete and continuous domains. 

 \begin{figure*}[!htb]
\centering
\centering
\subfloat[Mixed discrete-continuous observations]{\includegraphics[width=0.45\linewidth]{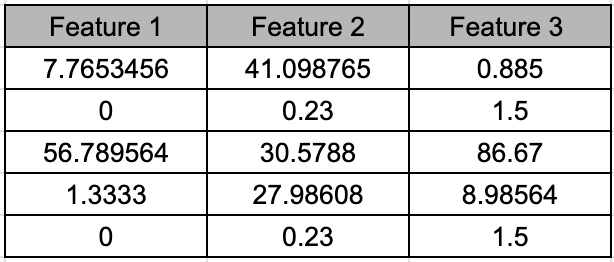}} \hspace{1cm}
\subfloat[Mixed discrete-continuous features]{\includegraphics[width=0.45\linewidth]{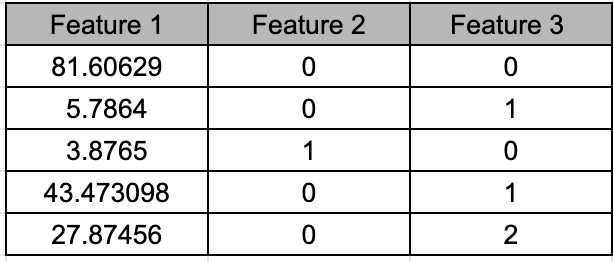}}
\caption[]{In this paper, we focus on the setup on the left side only. Each table has 5 data points from a three-dimensional distribution. The left table shows data with mixed discrete-continuous \emph{observations}: each datapoint comes either from a distribution with a density, or from a discrete distribution with unknown support. The right table shows data with mixed discrete-continuous \emph{features}: each feature is either discrete (categorical) or continuous (real-valued).} 
\label{fig:2}  
\end{figure*}

\subsubsection{Huber-robust estimation.} Robust estimation under contamination has a rich history, with the Huber contamination model \citep{huber1964robust,huber1965robust}, where a fraction of the data is assumed to be corrupted by an arbitrary distribution. Several works, including \cite{liu2017density,uppal2020robust}, have proposed density estimation methods under the Huber-contamination model. Although superficially similar to our setting, where a portion of the data arises from a discrete component (which can be viewed as the contamination part in the Huber-contamination model), there are key differences. First, existing Huber-robust estimators are typically inconsistent for the uncontaminated target distribution. In contrast, our proposed method achieves consistency by adaptively identifying and separating the discrete and continuous components. Secondly, Huber-robust procedures often assume the contamination level (i.e., proportion of corrupted samples) is known, whereas our approach adaptively estimates this proportion from the data. Finally, Huber-robust methods pay a price (and are not optimal) when the data has no contamination, as they are designed to guard against worst-case scenarios. However, our method incurs no additional cost when the data is purely continuous, thereby retaining optimality in the absence of discrete contamination.

 \subsection{Our contributions and paper outline.} 
 Our main contributions are as follows:
\begin{itemize}
    \item We propose a general framework for nonparametric estimation of the density and density functionals corresponding to the continuous component using data generated from a discrete-continuous mixture.
    \item We show that the modified KDE \eqref{eq:kde} is consistent and achieves minimax optimal mean integrated absolute error (MIAE), under the standard assumptions of the purely continuous setting, in the presence of atoms in the data distribution. 
   \item We provide rigorous theoretical guarantees for the consistency of these estimators for the density functionals without making additional assumptions. Our estimators still achieve $n^{-1/2}$ consistency whenever the density of the continuous component is sufficiently smooth and the support of the discrete component is allowed to grow in a triangular-array set-up. We have consistency even when the support of the discrete component is countable.
    \item We demonstrate empirically that our approach performs well in practice, while the standard methods fail in the presence of atoms.
 \end{itemize}

  The remainder of the paper is organized as follows. In \Cref{sec:kde}, we discuss the consistency and convergence rate of our modified KDE (defined in \eqref{eq:kde}) in the presence of atoms. In \Cref{sec:meth}, we introduce our framework and methodology for estimating the density functionals for discrete-continuous mixtures. \Cref{sec:theory} presents the theoretical analysis of our estimators for the functionals, establishing consistency and convergence rates. We report empirical results demonstrating the effectiveness of our approach in \Cref{sec:expt}. We discuss the future directions and conclude the article in \Cref{sec:conc}. Detailed proofs of the theoretical results and some additional experimental results are provided in the Appendix.

\section{Consistent density estimation in the presence of atoms}
\label{sec:kde}
Some smoothness assumptions on the densities are required to study the convergence properties of the KDE. Here we assume the Hölder smoothness, which is a standard in nonparametric literature.
 
 \begin{definition}
    Let \( X \subset \mathbb{R}^d \) be a compact space. For any multi-index \( r = (r_1, \ldots, r_d) \), with \( r_i \in \mathbb{N} \), define \( |r| = \sum_i r_i \), and let 
$D^r = \frac{\partial^{|r|}}{\partial x_1^{r_1} \cdots \partial x_d^{r_d}}.$
The Hölder class \( \Sigma(s, L) \) is the set of functions \( f \in L^2(X) \) satisfying
$|D^r f(x) - D^r f(y)| \leq L \|x - y\|^{s - |r|}$
for all multi-indices \( r \) such that \( |r| \leq \lfloor s \rfloor \), and for all \( x, y \in X \).
Moreover, define the Bounded Hölder class \( \Sigma(s, L, B_0, B) \) to be
$\left\{ f \in \Sigma(s, L) : B_0 < f < B \right\}$.
\end{definition}
This smoothness assumption allows us to quantify the convergence behavior of the KDE in terms of the mean integrated absolute error (MIAE), which is a widely used performance metric in density estimation \cite{devroye1985nonparametric,HALL198859}.
The next theorem shows that our modified KDE \eqref{eq:kde} is consistent and achieves the minimax optimal rate $\mathcal{O}(n^{-\frac{s}{2s+d}})$ for MIAE for fully continuous settings \cite{devroye1985nonparametric}, when the discrete part has finite support, which is allowed to grow with the sample size, in a triangular-array set-up.
\begin{theorem}
 \label{thm:kde}
Suppose that $X_1,\cdots,X_n\sim (1-\pi_1)F+\pi_1 H_1,$ where $\pi_1\in(0,1)$, $F$ has density $f$ with respect to the Lebesgue measure and $H_1$ is any discrete distribution with countable support. If $nh\to\infty$ and $h\to0$ as $n\to\infty$, then, the estimator  \( \hat{f}_{\mathcal{U}_n^1}(x) \) defined in \eqref{eq:kde} satisfies
$$\mathbb E\left( \int|\hat{f}_{\mathcal{U}_n^1}(x)-{f}(x)|dx\right) \to 0\quad \text{as } n \to \infty.
 $$ 
 Further, suppose $H_1$ has finite support $\mathcal{S}_n$, which may grow with $n$, and let its probability mass function (p.m.f.) be $\{p^{(n)}_s\}_{s\in \mathcal{S}_n}$. Assume that the minimum mass of an atom satisfies $\min_{s\in \mathcal{S}_n}p_s^{(n)}\geq\frac{1}{\pi_1}(1-(cn^{-\frac{2s}{2s+d}})^\frac{1}{n-1})$, for some constant $c>0$. Let $K$ be a kernel of order $\floor{s}$ satisfying $\int K^2(u)du <\infty$ and $\int |u|^\beta|K(u)|du <\infty$, $f\in \Sigma(s, L)$, $h=\alpha n^{-\frac{1}{2s+d}}$, for some $\alpha>0$. Then,
  \[
 \mathbb E\left( \int|\hat{f}_{\mathcal{U}_n^1}(x)-{f}(x)|dx\right) =\mathcal{O}(n^{-\frac{s}{2s+d}}).
  \]
 \end{theorem}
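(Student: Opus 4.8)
The plan is to compare $\hat f_{\mathcal{U}_n^1}$ with an \emph{oracle} kernel estimator built only from the data points genuinely drawn from the continuous component. Classical KDE theory then handles the oracle term, so the real work is to show that the ``spurious singletons'' contributed by the discrete part are asymptotically negligible --- and, for the second claim, negligible at rate $\mathcal{O}(n^{-s/(2s+d)})$, which is exactly where the minimum-mass assumption enters. Throughout, $\hat f_{\mathcal{U}_n^1}$ is read in the $d$-dimensional normalization (the displayed formula \eqref{eq:kde} being the $d=1$ shorthand), consistent with the target rate.

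\emph{Step 1 (a deterministic decomposition).} Let $N_c$ be the number of the $n$ observations drawn from $F$; conditionally on $N_c=m$ these are i.i.d.\ $Y_1,\dots,Y_m\sim F$. Since $F$ is continuous and $H_1$ has countable support, almost surely no draw from $F$ coincides with any other observation, so every continuous draw lies in $\mathcal{U}_n^1$; write $U_d:=|\mathcal{U}_n^1|-N_c\ge 0$ for the number of \emph{discrete} singletons, and let $\tilde f(x)=\frac{1}{N_c h^d}\sum_{j=1}^{N_c}K\big((x-Y_j)/h\big)$ be the oracle KDE (with $\tilde f\equiv 0$ on $\{N_c=0\}$). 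On $\{N_c\ge 1\}$ one checks that
\[
\hat f_{\mathcal{U}_n^1}(x)-\tilde f(x)=\frac{N_c\,S_d(x)-U_d\,S_c(x)}{N_c\,(N_c+U_d)\,h^d},
\]
where $S_c,S_d$ are the kernel sums over the continuous and the discrete singletons respectively. Integrating and using $\int|K(x/h)|\,dx=h^d\|K\|_1$ gives the key bound
\[
\int\big|\hat f_{\mathcal{U}_n^1}(x)-\tilde f(x)\big|\,dx\ \le\ \frac{2\,\|K\|_1\,U_d}{N_c+U_d}\ \le\ \frac{2\,\|K\|_1\,U_d}{N_c}\qquad\text{on }\{N_c\ge 1\}.
\]

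\emph{Step 2 (the discrete contamination is small).} An atom of mixture-mass $\pi_1 p_s$ is observed $\mathrm{Bin}(n,\pi_1 p_s)$ times, so $\mathbb E[U_d]=\sum_s n\pi_1 p_s(1-\pi_1 p_s)^{n-1}$. For the consistency claim: each summand tends to $0$ and $\sum_s\pi_1 p_s=\pi_1<\infty$, so dominated convergence gives $\mathbb E[U_d]/n\to 0$; since $N_c\ge(1-\pi_1)n/2$ off an event of exponentially small probability (on which the bound in Step~1 is at most $\|K\|_1+1$), this forces $\mathbb E\int|\hat f_{\mathcal{U}_n^1}-\tilde f|\to 0$. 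For the rate: $p\mapsto(1-\pi_1 p)^{n-1}$ is decreasing, so $\sum_s\pi_1 p_s^{(n)}(1-\pi_1 p_s^{(n)})^{n-1}\le\pi_1\big(1-\pi_1\min_s p_s^{(n)}\big)^{n-1}$, and the stated lower bound on $\min_s p_s^{(n)}$ is precisely the condition making the right-hand side $\le\pi_1 c\,n^{-2s/(2s+d)}$ (raise $1-\pi_1\min_s p_s^{(n)}\le (cn^{-2s/(2s+d)})^{1/(n-1)}$ to the power $n-1$). Hence $\mathbb E[U_d]/n=\mathcal O(n^{-2s/(2s+d)})$ and, as above, $\mathbb E\int|\hat f_{\mathcal{U}_n^1}-\tilde f|=\mathcal O(n^{-2s/(2s+d)})=\mathcal O(n^{-s/(2s+d)})$.

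\emph{Step 3 (oracle term and assembly).} It remains to bound $\mathbb E\int|\tilde f-f|$. Conditionally on $N_c=m\ge 1$, $\tilde f$ is the ordinary KDE on $m$ i.i.d.\ draws from $F$; for the consistency claim this tends to $0$ whenever $h\to 0$ and $mh^d\to\infty$ by classical $L_1$-consistency of the KDE \cite{devroye1985nonparametric}, and since $N_c\asymp n$ with overwhelming probability while $nh^d\to\infty$ and $\int|\tilde f-f|\le\|K\|_1+1$, dominated convergence gives $\mathbb E\int|\tilde f-f|\to 0$. For the rate, split into bias and stochastic parts: $\mathbb E[\tilde f(x)\mid N_c]=(K_h*f)(x)$ is deterministic, and the order-$\lfloor s\rfloor$ property of $K$, its moment condition, and $f\in\Sigma(s,L)$ give $\sup_x|(K_h*f)(x)-f(x)|\lesssim h^{s}$ by Taylor expansion; meanwhile $\int\mathrm{Var}\big(\tilde f(x)\mid N_c=m\big)\,dx\le \frac{\int K^2}{m h^d}$, so Cauchy--Schwarz over the compact support yields $\mathbb E\big[\int|\tilde f-\mathbb E[\tilde f\mid N_c]|\ \big|\ N_c=m\big]\lesssim(mh^d)^{-1/2}$. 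With $h=\alpha n^{-1/(2s+d)}$ and $N_c\asymp n$, both terms are $\mathcal O(n^{-s/(2s+d)})$, the atypical event $\{N_c<(1-\pi_1)n/2\}$ contributing an exponentially small amount. The triangle inequality $\int|\hat f_{\mathcal{U}_n^1}-f|\le\int|\hat f_{\mathcal{U}_n^1}-\tilde f|+\int|\tilde f-f|$ then finishes both statements. The main obstacle is Steps~1--2 --- exactly isolating the oracle estimator (continuity of $F$ is what makes every continuous draw a singleton) and showing $U_d$ is small enough, the mass condition being used precisely so that $(1-\pi_1\min_s p_s^{(n)})^{n-1}$ decays like $n^{-2s/(2s+d)}$; passing from the conditional (fixed-$m$) KDE bounds to unconditional ones is routine given the exponential concentration of $N_c\sim\mathrm{Bin}(n,1-\pi_1)$.
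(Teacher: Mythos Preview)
Your argument is correct and follows the same overall strategy as the paper: compare $\hat f_{\mathcal{U}_n^1}$ to an oracle KDE, bound the discrepancy by the number $U_d$ of discrete singletons (with the minimum-mass assumption used exactly as you do, to force $(1-\pi_1\min_s p_s^{(n)})^{n-1}\le c\,n^{-2s/(2s+d)}$), and then handle the oracle term by conditioning on the effective sample size and invoking classical KDE results, with Hoeffding for $N_c$ to pass from conditional to unconditional bounds.

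The one substantive difference is the \emph{choice of oracle}. The paper works with the latent representation $X_i=(1-\Lambda_i)V_i+\Lambda_iU_i$ and takes as oracle $\hat f^V_{\mathcal{U}_n^1}(x)=\frac{1}{(|\mathcal{U}_n^1|\vee 1)h^d}\sum_{i:X_i\in\mathcal{U}_n^1}K\big((x-V_i)/h\big)$, i.e., it keeps the same index set and normalization but swaps $X_i$ for the hidden continuous draw $V_i$; the difference then vanishes except at discrete singletons, and the oracle term is analysed by conditioning on $|\mathcal{U}_n^1|$ (using that $V_i$ is independent of the event $[X_i\in\mathcal{U}_n^1]$ up to a null set). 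Your oracle instead sums only over the genuine continuous draws and normalizes by $N_c$, which yields the clean deterministic bound $2\|K\|_1\,U_d/(N_c+U_d)$ and lets you condition on $N_c\sim\mathrm{Bin}(n,1-\pi_1)$ directly. Your route is slightly more transparent for this theorem; the paper's swap-$X_i$-for-$V_i$ device, on the other hand, is set up so that it can be reused verbatim in the later proofs of Theorems~4.2--4.5, where one also needs to replace $X_i$ by $V_i$ inside influence-function evaluations.
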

Note that the above assumption on the p.m.f. is trivially satisfied if $H_1$ has a fixed support $S$ that does not change with $n$ (i.e., we are not in a triangular array setup). 
So, our simple strategy—focusing on unique observations --- does not sacrifice statistical efficiency in the continuous regime.

In contrast to standard KDE, which can be severely biased near atoms (as illustrated in \Cref{fig:1}), our estimator effectively disentangles the discrete and continuous parts. It works automatically, without requiring prior knowledge of the atom locations or proportions, and adapts to the structure of the data.

\section{Estimation of density functionals}
\label{sec:meth}
Having discussed the core intuition behind our approach, we now focus on estimating density functionals of the continuous component in a mixed discrete-continuous distribution. Examples of such functionals include entropy, mutual information, and divergence measures, which are widely used in statistics, information theory, and machine learning. We first review some estimators designed for fully continuous data, such as those proposed in \cite{NIPS2015_06138bc5}, and then show how they can be extended and adapted to the mixed data setting. These modifications are simple yet powerful: they preserve the statistical guarantees of the original estimators while making them robust to the presence of atoms.
\subsection{Preliminaries}
Let
$F$ and $G$ be measures over a compact space $\mathcal{X}\subseteq\mathbb R^d$ that are absolutely continuous w.r.t the Lebesgue measure. Let $f,g \in L_2(\mathcal{X})$ be the
density (Radon-Nikodym derivatives) with respect to the Lebesgue measure. Given observations 
\begin{equation}
\label{prev-setup}
    X_1,\cdots,X_n\stackrel{i.i.d.}{\sim}F ~~\text{ and }~~ Y_1,\cdots,Y_m\stackrel{i.i.d.}{\sim}G,
\end{equation}
\cite{NIPS2015_06138bc5} develops a recipe for estimating statistical
functionals of one or more nonparametric distributions of the form
\begin{equation}
\label{eq:func_form}
    T(F)=T(f)=\phi\left(\int\nu(f)d\mu\right) ~~~\text{ or }~~~ T(F,G)=T(f,g)=\phi\left(\int\nu(f,g)d\mu\right),
\end{equation}
where $\phi$ and $\nu$ are real-valued Lipschitz functions that are twice differentiable. They use the following functional Taylor expansion on the densities
\begin{equation}
\label{eq:tylor}
    T(f)=T(g)+\mathbb E_F\psi(X;g)+ \mathcal{O}(\|f-g\|^2),
\end{equation}
where $\psi$ is the influence function, which is defined in terms of the Gâteaux derivative by $$\psi(x;F)=\frac{\partial T((1-t)F+t\delta_x)}{\partial t}\big|_{t=0},$$ where $\delta_x$ is the dirac delta function at $x$. They study data-splitting (DS) and leave-one-out (LOO) type estimators and analyze their
convergence. For the DS estimator, half of the data is used to compute the density estimator, and the remaining half is used to compute the sample mean of the influence function.
\begin{equation}
    \label{eq-ds-original}
\hat{T}^{(1)}_{\text{DS}}=T(\hat{f}^{(1)})+\frac{1}{n/2}\sum_{i=\floor{n/2}+1}^n \psi(X_i;\hat{f}^{(1)})
\end{equation}
and $\hat{T}^{(2)}_{\text{DS}}$ is defined similarly. The final estimator is $\hat{T}_{\text{DS}} =   (\hat{T}^{(1)}_{\text{DS}}+\hat{T}^{(2)}_{\text{DS}})/2$.
They propose a Leave-One-Out (LOO) version of the above
estimator
\begin{equation}
    \label{eq-loo-original}
    \hat{T}_{\text{LOO}}=\frac{1}{n}\sum_{i=1}^n (T(\hat{f}_{-i})+\psi(X_i;\hat{f}_{-i})),
\end{equation}
where $\hat{f}_{-i}$ is a density estimate using all the samples  except for $X_i$.

Akin to the one distribution case, they propose the following DS and LOO versions for the two distribution case.
\begin{equation}
    \label{eq-ds-original-2}
\hat{T}^{(1)}_{\text{DS}}=T(\hat{f}^{(1)},\hat{g}^{(1)})+\frac{1}{n/2}\sum_{i=\floor{n/2}+1}^n \psi_f(X_i;\hat{f}^{(1)},,\hat{g}^{(1)})+\frac{1}{m/2}\sum_{i=\floor{m/2}+1}^m\psi_g(Y_i;\hat{f}^{(1)},,\hat{g}^{(1)}),
\end{equation}
\begin{equation}
    \label{eq-loo-original-2}
    \hat{T}_{\text{LOO}}=\frac{1}{\max(n,m)}\sum_{i=1}^{\max(n,m)} (T(\hat{f}_{-i},\hat{g}_{-i})+\psi_f(X_i;\hat{f}_{-i},\hat{g}_{-i})+\psi_g(Y_i;\hat{f}_{-i},\hat{g}_{-i})).
\end{equation}
For the LOO estimator, if $n > m$, the points $Y_1,\cdots,Y_m$ are cycled through until all $X_i$'s have been summed over, or vice versa.
These estimators are not consistent in the presence of atoms in the distribution. 
In the following subsection, we propose a simple modification of the above estimators that can consistently estimate density functionals of the continuous part in the presence of (countably many) atoms in the data distributions. 

\subsection{Our extension}

In contrast to the classical set-up where data arises from either a continuous or a discrete distribution, we have samples from a discrete-continuous mixture
\begin{equation}
    X_1,\cdots,X_n\stackrel{i.i.d.}{\sim} (1-\pi_1)F+\pi_1 H_1  ~~\text{ and }~~ Y_1,\cdots,Y_m\stackrel{i.i.d.}{\sim} (1-\pi_2)G+\pi_2 H_2,
\end{equation}
where $\pi_1,\pi_2\in(0,1)$ are unknown constants, $F,G$ have Lebesgue densities $f,g$ respectively and $H_1,H_2$ are discrete distributions having countable supports. We develop estimators for functionals \eqref{eq:func_form} using data generated from the above mixed distributions.


Define the sets of unique observations as 
\begin{equation}
    \mathcal{U}_n^1 = \{ X_i \mid X_i \text{ appears exactly once in } \{X_1, \dots, X_n\}, i\in\{1,\cdots,n\} \},
\end{equation}
\begin{equation}
    \mathcal{U}_m^2 = \{ Y_i \mid Y_i \text{ appears exactly once in } \{Y_1, \dots, Y_m\}, i\in\{1,\cdots,m\} \}.
\end{equation}

Analogous to \eqref{eq-ds-original} and \eqref{eq-ds-original-2}, we split $\mathcal{U}_n^1$  into two parts: $\mathcal{U}_n^{1,1}:=\{X_i: i\leq \floor{n/2}, X_i\in\mathcal{U}_n^1\}$ and $\mathcal{U}_n^{1,2}:=\{X_i: i\geq \floor{n/2}+1, X_i\in\mathcal{U}_n^1\}$ and similarly split $\mathcal{U}_m^2$ into $\mathcal{U}_m^{2,1}:=\{X_i: i\leq \floor{m/2}, X_i\in\mathcal{U}_m^2\}$ and $\mathcal{U}_m^{2,2}:=\{X_i: i\geq \floor{m/2}+1, X_i\in\mathcal{U}_m^2\}$. And our DS estimators are defined below; the first part is used to compute the density estimator, and the remaining part is used to compute the sample mean of the influence function:
\begin{equation}
\label{eq-ds}
\hat{T}^{\text{DS},1}_{\mathcal{U}_n^1}=T(\hat{f}_{\mathcal{U}_n^{1,1}})+\frac{1}{|\mathcal{U}_n^{1,2}| \vee 1} \sum_{X_i \in \mathcal{U}_n^{1,2}}\psi(X_i;\hat{f}_{\mathcal{U}_n^{1,1}}),
\end{equation}
\begin{equation}
\label{eq-ds-2}
\hat{T}^{\text{DS},1}_{\mathcal{U}_n^1,\mathcal{U}_m^2}=T(\hat{f}_{\mathcal{U}_n^{1,1}},\hat{g}_{\mathcal{U}_m^{2,1}})+\frac{\sum_{X_i \in \mathcal{U}_n^{1,2}}\psi_f(X_i;\hat{f}_{\mathcal{U}_n^{1,1}},\hat{g}_{\mathcal{U}_m^{2,1}})}{|\mathcal{U}_n^{1,2}| \vee 1}+\frac{\sum_{Y_i \in \mathcal{U}_m^{2,2}}\psi_g(Y_i;\hat{f}_{\mathcal{U}_n^{1,1}},\hat{g}_{\mathcal{U}_m^{2,1}})}{|\mathcal{U}_m^{2,2}| \vee 1}.
\end{equation}
 Similarly, we have $\hat{T}^{\text{DS},2}_{\mathcal{U}_n^1}$ and $\hat{T}^{\text{DS},2}_{\mathcal{U}_n^1,\mathcal{U}_m^2}$. Our final DS estimators are defined as $$\hat{T}^{\text{DS}}_{\mathcal{U}_n^1}=\frac{\hat{T}^{\text{DS},1}_{\mathcal{U}_n^1}+\hat{T}^{\text{DS},2}_{\mathcal{U}_n^1}}{2}~~\text{ and }~~\hat{T}^{\text{DS}}_{\mathcal{U}_n^1,\mathcal{U}_m^2}=\frac{\hat{T}^{\text{DS},1}_{\mathcal{U}_n^1,\mathcal{U}_m^2}+\hat{T}^{\text{DS},2}_{\mathcal{U}_n^1,\mathcal{U}_m^2}}{2}.$$
 
 Now, we define the following LOO estimators, which are
analogous to \eqref{eq-loo-original} and \eqref{eq-loo-original-2}:
\begin{equation}
\label{eq-loo}
    \hat{T}^{\text{LOO}}_{\mathcal{U}_n^1}=\frac{1}{|\mathcal{U}_n^1| \vee 1} \sum_{i:X_i \in \mathcal{U}_n^1}\left(T(\hat{f}_{\mathcal{U}_n^1}^{(-i)})+\psi(X_i;\hat{f}_{\mathcal{U}_n^1}^{(-i)})\right),
\end{equation}
\begin{equation}
\label{eq-loo-2}
\hat{T}^{\text{LOO}}_{\mathcal{U}_n^1,\mathcal{U}_m^2}=\frac{1}{|\mathcal{U}_n^1|\vee|\mathcal{U}_m^2| \vee 1} \sum_{i=1}^{|\mathcal{U}_n^1|\vee|\mathcal{U}_m^2|}\left(T(\hat{f}_{\mathcal{U}_n^1}^{(-j_i)},\hat{g}_{\mathcal{U}_m^2}^{(-k_i)})+\psi_f(X_i;\hat{f}_{\mathcal{U}_n^1}^{(-j_i)},\hat{g}_{\mathcal{U}_m^2}^{(-k_i)})+\psi_g(Y_i;\hat{f}_{\mathcal{U}_n^1}^{(-j_i)},\hat{g}_{\mathcal{U}_m^2}^{(-k_i)})\right),
\end{equation}
where $j_1<j_2<\cdots<j_{|\mathcal{U}_n^1|}$ are indices of the $X_i$s which are in $\mathcal{U}_n^1$ and $k_1<k_2<\cdots<k_{|\mathcal{U}_m^2|}$ are indices of the $Y_i$s which are in $\mathcal{U}_m^2$. Here, for some subset $\mathcal{A}$ of $\{X_1,\cdots,X_n\}$, $\hat{f}_{\mathcal{A}}$ denotes the kernel density estimator using elements in $\mathcal{A},$ i.e., 
\begin{equation}
    \hat{f}_{\mathcal{A}}(x) = \frac{1}{(|\mathcal{A}| \vee 1)h} \sum_{X_i \in \mathcal{A}} K\left( \frac{x - X_i}{h} \right),
\end{equation}
where \( K(\cdot) \) is a kernel function and \( h > 0 \) is the bandwidth parameter, and for some $j\in\{1,\cdots,n\},\hat{f}_{\mathcal{A}}^{(-j)}$ denotes the kernel density estimator using elements in $\mathcal{A}\setminus\{X_j\}.$  Similarly, for some subset $\mathcal{B}$ of $\{Y_1,\cdots,Y_n\}$, $\hat{g}_{\mathcal{B}}$ denotes the kernel density estimator using elements in $\mathcal{B}$ and for some $k\in\{1,\cdots,m\},\hat{g}_{\mathcal{B}}^{(-k)}$ denotes the kernel density estimator using elements in $\mathcal{B}\setminus\{Y_k\}.$ 

Although the above discussion focuses on extending a particular class of estimators using influence functions, we reemphasize that the core idea underlying our approach is broadly applicable and can be extended to a wider range of estimators and problems beyond this specific setting.

\section{Asymptotic properties of density functional estimators}
\label{sec:theory}
In this section, we focus on establishing the asymptotic properties of our estimators of density functionals. Remarkably, when it comes to proving consistency, we do not require any new assumptions beyond those used in \cite{NIPS2015_06138bc5}, even in the presence of countably many atoms. To study convergence rates, however, we consider the triangular array set-up, where for each sample size $n$, the discrete components have finite support that is allowed to grow with $n$. A similar triangular array setup is also required for deriving the convergence rate of our KDE, as previously discussed in \Cref{thm:kde}.

In what follows, we make the following regularity condition on the influence function, which corresponds to Assumption 4 in \cite{NIPS2015_06138bc5} and is essential for establishing the theoretical results.

\begin{assumption}
\label{assump1}    
For a functional $T(f)$ of one distribution, the influence function  $\psi$ satisfies
    \begin{equation}
    \label{eq:assmp-1d}
    \mathbb E\left[\left(\psi(X;{f}^\prime)-\psi(X;{f})\right)^2\right]= \mathcal{O}(\|{f}^\prime-f\|^2),
\end{equation}
and for a functional $T(f,g)$ of two distributions, the influence functions $\psi_f,\psi_g$ satisfy
\begin{align}
\label{eq:assmp-2d-1}
    \mathbb E_f\left[\left(\psi_f(X;{f}^\prime,{g}^\prime)-\psi_f(X;{f},g)\right)^2\right]= \mathcal{O}(\|{f}^\prime-f\|^2+\|{g}^\prime-g\|^2), \text{ as } \|{f}^\prime-f\|,\|{g}^\prime-g\|\to 0.\\
    \label{eq:assmp-2d-2}
    \mathbb E_g\left[\left(\psi_g(X;{f}^\prime,{g}^\prime)-\psi_g(X;{f},g)\right)^2\right]= \mathcal{O}(\|{f}^\prime-f\|^2+\|{g}^\prime-g\|^2), \text{ as } \|{f}^\prime-f\|,\|{g}^\prime-g\|\to 0.
\end{align}
\end{assumption}


We now state the results for the one-sample estimator $ \hat{T}^{\text{DS}}_{\mathcal{U}_n^1}$.
\begin{theorem} 
\label{thm:conv-ds}
Let $f\in \Sigma(s, L, B_0, B)$ and  $\psi$ satisfy 
\Cref{assump1}. Then, $\mathbb E|\hat{T}^{\text{DS}}_{\mathcal{U}_n^1}- T(F)|\to 0$.  
 Further, suppose $H_1$ has finite support $\mathcal{S}_n$, which may grow with $n$, and let its probability mass function (p.m.f.) be $\{p^{(n)}_s\}_{s\in \mathcal{S}_n}$. Assume that the minimum mass of an atom satisfies $\min_{s\in \mathcal{S}_n}p_s^{(n)}\geq\frac{1}{\pi_1}(1-(cn^{-\frac{6s}{2s+d}})^\frac{1}{n-1})$, for some constant $c>0$. Then $\mathbb E|\hat{T}^{\text{DS}}_{\mathcal{U}_n^1}- T(F)|$ is $\mathcal{O}\left(n^{\frac{-2s}{2s+d}}\right)$ if $s < d/2$ and $\mathcal{O}(n^{-1/2})$ when $s \geq d/2$. Additionally, when $H_1$ has fixed finite support $S$, $s > d/2$ and $\psi\neq 0$, for $i=1,2,$
 \begin{equation}
 \label{conv-ds}
\sqrt{n}  \left( \hat{T}^{\text{DS}}_{\mathcal{U}_n^1} - T(F) \right)    \xrightarrow{d}N\left(0,\frac{1}{1-\pi_1}\mathbb V_f(\psi(X,f))\right), \text{ as } n\to\infty.
\end{equation}
\end{theorem}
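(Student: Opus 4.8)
The plan is to reduce the analysis to the continuous-only case studied in Krishnamurthy et al. (NIPS 2015), by showing that with overwhelming probability the set of unique observations $\mathcal{U}_n^1$ behaves like an i.i.d.\ sample of (random) size $N_n := |\mathcal{U}_n^1|$ from $F$, plus a vanishing number of ``contaminating'' atoms. First I would set up the decomposition: condition on the event $E_n$ that every atom $s \in \mathcal{S}_n$ appears either zero times or at least twice in $X_1,\dots,X_n$. On $E_n$, the set $\mathcal{U}_n^1$ consists \emph{exclusively} of draws from the continuous part $F$; moreover, conditionally on the number $M_n$ of samples that fell in the continuous component and on $E_n$, those points are i.i.d.\ $\sim F$, and $\mathcal{U}_n^1$ is exactly all of them, so $N_n = M_n \sim \mathrm{Binomial}(n, 1-\pi_1)$ conditionally on $E_n$. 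The first key estimate is a union bound: $\P(E_n^c) \le \sum_{s \in \mathcal{S}_n} \binom{n}{1}(\pi_1 p_s^{(n)})(1-\pi_1 p_s^{(n)})^{n-1}$, which under the stated lower bound $\min_s p_s^{(n)} \ge \frac{1}{\pi_1}(1 - (c n^{-6s/(2s+d)})^{1/(n-1)})$ forces each term, hence the sum (after also controlling $|\mathcal{S}_n|$, which is at most $1/\min_s p_s^{(n)} \cdot \pi_1^{-1}$-ish, so at most polynomial in $n$), to be $O(n^{-6s/(2s+d)})$ — fast enough to be negligible against the target rate $n^{-2s/(2s+d)}$ even after multiplying by the trivially bounded functional error on $E_n^c$ (here one needs $\psi, \phi, \nu$ bounded on the relevant range, which follows from Lipschitzness plus $f \in \Sigma(s,L,B_0,B)$ being bounded away from $0$ and $\infty$; I would note this explicitly).

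Next, working on $E_n$, I would invoke the consistency/rate analysis of the DS estimator from the cited paper. The only wrinkle relative to their Theorem is that the ``sample size'' driving $\hat f_{\mathcal{U}_n^{1,1}}$ and the influence-function average is the \emph{random} quantity $N_n$ rather than a fixed $n$. Since $N_n/n \to 1-\pi_1$ a.s.\ by the SLLN (and $N_n \ge (1-\pi_1)n/2$ except on an exponentially small event, by Hoeffding), the KDE bias–variance bound $\mathbb{E}\|\hat f_{\mathcal{U}_n^{1,1}} - f\|^2 = O(h^{2s} + (N_n h^d)^{-1})$ holds with $N_n \asymp n$; combined with Assumption 2.3 on $\psi$ and the functional Taylor expansion \eqref{eq:tylor}, the bias term of $\hat T^{\text{DS}}_{\mathcal{U}_n^1}$ is $O(h^{2s})$ and the ``correction'' term contributes $O(\|\hat f - f\|^2) = O(h^{2s} + (nh^d)^{-1})$, which at $h = \alpha n^{-1/(2s+d)}$ gives $O(n^{-2s/(2s+d)})$ when $s<d/2$; when $s \ge d/2$ the parametric term $\mathrm{Var}$ of the empirical influence-function mean, of order $N_n^{-1} \asymp n^{-1}$, dominates, giving $O(n^{-1/2})$ in expectation of the absolute value. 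For the consistency claim (no rate, countable support) I would not even need $E_n$: it suffices that the number of ``impostor'' unique points coming from atoms of mass shrinking with $n$ is $o(N_n)$ in probability — more carefully, $\mathbb{E}|\mathcal{U}_n^1 \cap \mathcal{S}_n| = \sum_{s} n \pi_1 p_s^{(n)} (1-\pi_1 p_s^{(n)})^{n-1}$, and one shows this is $o(n)$ for \emph{any} countable p.m.f.\ (split into atoms with $p_s$ large, finitely many, each contributing $o(n)$, and the tail, whose total mass is small), so $\hat f_{\mathcal{U}_n^{1,1}}$ is still $L_2$-consistent for $f$ by a dominated-convergence / Scheffé argument, and the rest follows from continuity of $T$ and Assumption 2.3.

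For the asymptotic normality \eqref{conv-ds} under fixed finite support $S$, $s>d/2$, $\psi \ne 0$: now $\P(E_n^c) \to 0$ exponentially (each atom has fixed mass $\ge p_{\min}>0$, so appears $\ge 2$ times w.p.\ $1 - O(n e^{-cn})$), so we may condition on $E_n$ at no asymptotic cost. On $E_n$, $N_n \sim \mathrm{Binomial}(n, 1-\pi_1)$ and $\mathcal{U}_n^1$ is an i.i.d.\ $F$-sample of size $N_n$; I would apply the CLT for $\hat T^{\text{DS}}$ from the cited paper \emph{conditionally on $N_n$} with effective sample size $N_n$, obtaining $\sqrt{N_n}(\hat T^{\text{DS}}_{\mathcal{U}_n^1} - T(F)) \xrightarrow{d} N(0, \mathbb{V}_f(\psi(X,f)))$ (the higher-order and bias terms vanish faster than $N_n^{-1/2}$ precisely because $s > d/2$), and then convert to the $\sqrt n$ scaling via Slutsky and $N_n/n \xrightarrow{\text{a.s.}} 1-\pi_1$, which replaces the variance by $\frac{1}{1-\pi_1}\mathbb{V}_f(\psi(X,f))$; a small amount of care (e.g.\ a conditioning/subsequence or characteristic-function argument, or an Anscombe-type random-index CLT) is needed to pass from the conditional-on-$N_n$ statement to the unconditional one, since $N_n$ is random, but this is standard.

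The main obstacle, I expect, is the bookkeeping in the \emph{countable-support} consistency argument and in verifying that the $n^{-6s/(2s+d)}$-type lower bound on the atom masses indeed makes $\P(E_n^c)$ (and the expected count of impostor unique points) negligible against \emph{both} the $s<d/2$ rate and the $s\ge d/2$ rate simultaneously — in particular controlling the number of atoms $|\mathcal{S}_n|$ is implicit and must be handled (it is bounded because the masses are bounded below and sum to at most $1$). Everything downstream is essentially a black-box application of the cited paper's DS analysis with a random-but-concentrated effective sample size $N_n \asymp n$.
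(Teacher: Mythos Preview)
Your overall architecture—reduce to the pure-continuous analysis by isolating the ``impostor'' atoms that sneak into $\mathcal U_n^1$—matches the paper's, and your arguments for consistency (countable support) and for the CLT under fixed finite $S$ with $s>d/2$ are essentially correct and parallel the paper's (which also uses $\mathcal R_n=\mathcal S$ eventually, then a random-index CLT plus Slutsky).

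There is, however, a genuine gap in your rate argument. You bound $\P(E_n^c)\le\sum_{s}n\pi_1 p_s^{(n)}(1-\pi_1 p_s^{(n)})^{n-1}$ and claim this is $O(n^{-6s/(2s+d)})$; in fact, using $\sum_s p_s^{(n)}=1$ and the assumed lower bound on $\min_s p_s^{(n)}$, the sum is only $O(n\cdot n^{-6s/(2s+d)})=O(n^{1-6s/(2s+d)})$—you dropped the factor $n$ coming from $\binom n1$. Multiplying this by a bounded functional error on $E_n^c$ gives $O(n^{1-6s/(2s+d)})$, and one checks that $n^{1-6s/(2s+d)}\le n^{-2s/(2s+d)}$ iff $4s/(2s+d)\ge 1$ iff $s\ge d/2$. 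So your conditioning-on-$E_n$ scheme with the crude ``bounded error on $E_n^c$'' delivers the stated rate only in the regime $s\ge d/2$; for $s<d/2$ it is too coarse. (A related slip: $M_n$ is \emph{not} $\mathrm{Binomial}(n,1-\pi_1)$ conditionally on $E_n$, since $E_n$ depends on the $\Lambda_i$'s; this is harmless asymptotically but worth stating correctly.)

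The paper avoids this by \emph{not} conditioning on $E_n$. Instead it introduces a coupled estimator $\hat T^{\text{DS}}_{\mathcal U_n^1,V}$ obtained by keeping the \emph{same} index set $\{i:X_i\in\mathcal U_n^1\}$ but replacing each $X_i$ by the latent continuous draw $V_i$, and bounds $\mathbb E|\hat T^{\text{DS}}_{\mathcal U_n^1,V}-\hat T^{\text{DS}}_{\mathcal U_n^1}|$ term-by-term. The key quantity is then $\mathbb E[\mathds 1(X_1\in\mathcal S\setminus\mathcal R_n)]=O(n^{-6s/(2s+d)})$ (your sum \emph{divided} by $n$), and Cauchy--Schwarz/H\"older against the bounded-in-expectation factors $n/|\mathcal U_n^{1,j}|$ lets one take a \emph{square} or \emph{cube root} of it, yielding $O(n^{-3s/(2s+d)})$ and $O(n^{-2s/(2s+d)})$ for the respective pieces—exactly matching the target rate for $s<d/2$. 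In short: the right object to control is the \emph{expected number of impostors per sample}, not the probability that there is at least one, and the factor of $n$ you lose by working with $\P(E_n^c)$ is precisely what breaks the $s<d/2$ rate.
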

Notably, the assumptions in the above theorem are identical to those in Theorem 14 of \cite{NIPS2015_06138bc5}. While the original estimator enjoys $L_2$ convergence under purely continuous settings, our analysis guarantees only $L_1$ convergence due to the added complexity introduced by the presence of atoms in the distribution. A similar result holds for the two-sample estimator as well, under analogous assumptions.
\begin{theorem}
 \label{thm:conv-ds-2}
If $f,g\in \Sigma(s, L, B_0, B)$ and  $\psi_f,\psi_g$ satisfy 
\Cref{assump1}, then $\mathbb E|\hat{T}^{\text{DS}}_{\mathcal{U}_n^1,\mathcal{U}_m^2}- T(F,G)|\to 0$. Further, suppose $H_1$ and $H_2$ have finite supports $\mathcal{S}_n$ and $\mathcal{S}_m^\prime$, which may grow with $n$ and $m$, and let their probability mass functions (p.m.f.) be $\{p^{(n)}_s\}_{s\in \mathcal{S}_n}$ and $\{q^{(m)}_s\}_{s\in \mathcal{S}_m^\prime}$ respectively. Assume that the minimum masses of an atom satisfy $\min_{s\in \mathcal{S}_n}p_s^{(n)}\geq\frac{1}{\pi_1}(1-(cn^{-\frac{6s}{2s+d}})^\frac{1}{n-1})$ and $\min_{s\in \mathcal{S}_m^\prime}q_s^{(m)}\geq\frac{1}{\pi_2}(1-(cm^{-\frac{6s}{2s+d}})^\frac{1}{m-1})$, for some constant $c>0$. Then, $\mathbb E|\hat{T}^{\text{DS}}_{\mathcal{U}_n^1,\mathcal{U}_m^2}- T(F,G)|$ is $\mathcal{O}\left(n^{\frac{-2s}{2s+d}}+m^{\frac{-2s}{2s+d}}\right)$ if $s < d/2$ and $\mathcal{O}(n^{-1/2}+m^{-1/2})$ when $s \geq d/2$. Additionally, when $H_1$ has fixed finite support $S$, $s > d/2$ and $\psi_f,\psi_g\neq 0$,
 \begin{equation}
 \label{conv-ds-2}
\sqrt{n}(\hat{T}^{\text{DS}}_{{\mathcal{U}_n^1,\mathcal{U}_m^2}}-T(F,G))\xrightarrow{d}
N\left(0, \frac{1}{\zeta(1-\pi_1)}\mathbb V_f(\psi_f(X;f,g)))+\frac{1}{(1-\zeta)(1-\pi_2)}\mathbb V_g(\psi_g(X;f,g))\right),
 \end{equation}
as $n,m\to\infty$ in such way that $n/(n+m) \to \zeta\in(0, 1)$. 
\end{theorem}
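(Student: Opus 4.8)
The plan is to run the proof of \Cref{thm:conv-ds} twice --- once for each of the independent samples $\{X_i\}$ and $\{Y_i\}$ --- using the two-distribution version of the Taylor expansion \eqref{eq:tylor}. For the $(\mathrm{DS},1)$ half (the other half being symmetric), rearranging that expansion around $(\hat f_{\mathcal U_n^{1,1}},\hat g_{\mathcal U_m^{2,1}})$ gives
\begin{align*}
\hat T^{\mathrm{DS},1}_{\mathcal U_n^1,\mathcal U_m^2}-T(F,G)
&=\Big(\tfrac{1}{|\mathcal U_n^{1,2}|\vee1}\!\!\sum_{X_i\in\mathcal U_n^{1,2}}\!\!\psi_f(X_i;\hat f_{\mathcal U_n^{1,1}},\hat g_{\mathcal U_m^{2,1}})-\mathbb E_F\psi_f(X;\hat f_{\mathcal U_n^{1,1}},\hat g_{\mathcal U_m^{2,1}})\Big)\\
&\quad+\Big(\tfrac{1}{|\mathcal U_m^{2,2}|\vee1}\!\!\sum_{Y_i\in\mathcal U_m^{2,2}}\!\!\psi_g(Y_i;\hat f_{\mathcal U_n^{1,1}},\hat g_{\mathcal U_m^{2,1}})-\mathbb E_G\psi_g(Y;\hat f_{\mathcal U_n^{1,1}},\hat g_{\mathcal U_m^{2,1}})\Big)+R_n,
\end{align*}
with $R_n=\mathcal O(\|\hat f_{\mathcal U_n^{1,1}}-f\|^2+\|\hat g_{\mathcal U_m^{2,1}}-g\|^2)$, using that $\phi,\nu$ are Lipschitz and twice differentiable and that $f,g\in\Sigma(s,L,B_0,B)$ keep $\psi_f,\psi_g$ bounded on the compact domain. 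Since $\mathbb E_F\psi_f(X;f,g)=\mathbb E_G\psi_g(Y;f,g)=0$, \Cref{assump1} lets us swap the plug-in influence functions for $\psi_f(\cdot;f,g)$ and $\psi_g(\cdot;f,g)$ at an additive cost of order $n^{-1/2}\big(\|\hat f_{\mathcal U_n^{1,1}}-f\|+\|\hat g_{\mathcal U_m^{2,1}}-g\|\big)=o_P(n^{-1/2})$; averaging the two halves then collapses each influence term into a full average over the continuous draws.

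The structural observation --- the same one behind \Cref{thm:kde} and \Cref{thm:conv-ds} --- is that almost surely every draw from $F$ is distinct from all other observations (the atoms of $H_1$ form a Lebesgue-null set), so $\mathcal U_n^1$ contains all $N_1:=\#\{i\le n:X_i\sim F\}\sim\mathrm{Bin}(n,1-\pi_1)$ continuous points; its only ``spurious'' members are atoms of $H_1$ that happen to occur exactly once, and we write $M_n^{(1)}$ for their number (and $N_2,M_m^{(2)}$ for the $Y$-sample). Conditionally on the component labels and the counts, $\mathcal U_n^1$ minus its spurious atoms is an i.i.d.\ sample from $F$, its two half-splits independent i.i.d.\ samples, and all of this is independent of the corresponding size-$N_2$ i.i.d.\ sample from $G$. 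A Chernoff bound gives $N_1\ge(1-\pi_1)n/2$ and $N_2\ge(1-\pi_2)m/2$ off an event of probability $e^{-\Omega(n\wedge m)}$, and $\mathbb E M_n^{(1)}=\sum_{s\in\mathcal S_n}n\pi_1p_s^{(n)}(1-\pi_1p_s^{(n)})^{n-1}\le c\pi_1\,n^{1-6s/(2s+d)}$ by the mass hypothesis and $\sum_s\pi_1p_s^{(n)}=\pi_1$ (for fixed finite support this is instead $\mathcal O(n\rho^n)$ with $\rho<1$; and plain consistency needs only $\mathbb E M_n^{(1)}/n\to0$, which holds for any countable support by dominated convergence since $(1-\pi_1p_s)^{n-1}\to0$ for each atom $s$, with no mass condition). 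Deleting $M_n^{(1)}$ points from an average of $\gtrsim n$ points perturbs each kernel estimate --- in $L_1$ by $\mathcal O(M_n^{(1)}/n)$, and, after the $h^{-d/2}$ factor, in $L_2$ by $\mathcal O(h^{-d/2}M_n^{(1)}/n)$ --- and each influence average by $\mathcal O(M_n^{(1)}/n)$; by Lipschitzness of $T,\psi_f,\psi_g$ and the bound on $\mathbb E M_n^{(1)}$, the total contribution of the spurious atoms (including through $R_n$) has expectation dominated by $n^{-2s/(2s+d)}\vee n^{-1/2}$ in every regime and is $o(n^{-1/2})$ when $s>d/2$; this is where the exponent $6s$ is spent.

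On the complementary high-probability configuration, $\hat T^{\mathrm{DS}}_{\mathcal U_n^1,\mathcal U_m^2}$ is exactly the two-sample data-split estimator of \cite{NIPS2015_06138bc5} applied to i.i.d.\ samples of sizes $N_1,N_2$; conditioning on $(N_1,N_2)$ and running the continuous-case two-sample analysis as in the proof of \Cref{thm:conv-ds} gives $\mathbb E[\,|\hat T-T(F,G)|\mid N_1,N_2\,]=\mathcal O(N_1^{-r}+N_2^{-r})$ with $r=2s/(2s+d)$ for $s<d/2$ and $r=1/2$ for $s\ge d/2$; integrating over $N_1\ge(1-\pi_1)n/2$, $N_2\ge(1-\pi_2)m/2$, adding the exponentially small and spurious-atom terms, and combining the two half-estimators yields the claimed $L_1$ rates, while the no-assumption consistency statement follows from the same decomposition using only $nh\to\infty$, $h\to0$, $\mathbb E M_n^{(1)}/n\to0$, and the consistency half of \cite{NIPS2015_06138bc5}. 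For the CLT ($H_1$ of fixed finite support, $s>d/2$, $\psi_f,\psi_g\neq 0$): the spurious atoms are absent with exponentially high probability and $R_n=\mathcal O_P(n^{-2s/(2s+d)})=o_P(n^{-1/2})$, so after the swap of the first paragraph and the half-averaging,
\[
\hat T^{\mathrm{DS}}_{\mathcal U_n^1,\mathcal U_m^2}-T(F,G)=\tfrac{1}{N_1}\!\!\sum_{i:X_i\sim F}\!\!\psi_f(X_i;f,g)+\tfrac{1}{N_2}\!\!\sum_{i:Y_i\sim G}\!\!\psi_g(Y_i;f,g)+o_P(n^{-1/2}),
\]
a sum of two independent centered i.i.d.\ averages; the classical CLT and Slutsky's theorem, with $N_1/n\to1-\pi_1$, $N_2/m\to1-\pi_2$ and $n/(n+m)\to\zeta$, give a mean-zero Gaussian limit whose variance is the sum (by independence of the samples) of the $\psi_f$ and $\psi_g$ contributions scaled by their effective sample proportions --- i.e.\ \eqref{conv-ds-2}.

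The step I expect to be the main obstacle is the control of the spurious atoms in the triangular-array regime: one must push the bound on $\mathbb E M_n^{(1)}$ through the quadratic remainder $R_n$ and through the $L_2$ perturbation of the kernel density estimates --- where the $h^{-d/2}$ blow-up interacts with the possibly growing support $\mathcal S_n$ --- so that these stay below the target rate simultaneously for $s<d/2$, $s=d/2$ and $s>d/2$, and one must verify that conditioning on $(N_1,N_2)$ and on the ``no spurious atom'' event leaves intact the i.i.d./independence structure on which both the \cite{NIPS2015_06138bc5} analysis and the CLT rely.
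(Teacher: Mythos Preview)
Your proposal is correct and follows essentially the same route as the paper: couple $\hat T^{\mathrm{DS}}_{\mathcal U_n^1,\mathcal U_m^2}$ to a ``clean'' two-sample DS estimator built from continuous draws only, control the coupling error via the spurious-atom count (whose expectation is $\mathcal O(n^{-6s/(2s+d)})$ under the mass hypothesis), invoke the two-sample MSE bound of \cite{NIPS2015_06138bc5} conditionally on the random continuous sample sizes (with a Hoeffding bound handling the small-sample event), and obtain the CLT from the Taylor expansion, the random-index CLT, and Slutsky.

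The one technical difference worth noting is the coupling device. You \emph{remove} the spurious atoms; the paper instead \emph{replaces} each $X_i\in\mathcal U_n^1$ by its latent continuous counterpart $V_i$ (so the clean estimator keeps the same random sample size $|\mathcal U_n^1|$), and then bounds $\hat T^{\mathrm{DS}}_{\mathcal U_n^1,\mathcal U_m^2,V,W}-\hat T^{\mathrm{DS}}_{\mathcal U_n^1,\mathcal U_m^2}$ term-by-term using only the $L_1$ KDE perturbation $\int|\hat f_V-\hat f|\le 2M_n/|\mathcal U_n^1|$ together with the Lipschitzness of $T$ and Assumption~\ref{assump1}. This neatly sidesteps exactly the obstacle you flag: since the $L_2$ remainder $R_n$ lives entirely inside the clean $V$-version and is therefore absorbed into the \cite{NIPS2015_06138bc5} bound, one never has to push the spurious-atom count through $\|\hat f_{\mathcal U_n^{1,1}}-f\|_2^2$ or fight the $h^{-d/2}$ blow-up. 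Your removal-based coupling can be made to work as well (the $R_n$ worry is only live for the CLT, where fixed finite support makes the spurious atoms vanish eventually a.s.), but the $V$-replacement is cleaner for precisely the reason you anticipated.
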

Having established consistency and asymptotic properties of our DS estimators, we now turn our attention to the LOO estimators and state the corresponding results.
\begin{theorem}
\label{thm:conv-loo}
Let $f\in \Sigma(s, L, B_0, B)$ and  $\psi$ satisfy 
\Cref{assump1}. Then,
 $\mathbb E|\hat{T}^{\text{LOO}}_{\mathcal{U}_n^1}- T(F)|\to 0$, as $n\to\infty$. Further, suppose $H_1$ has finite support $\mathcal{S}_n$, which may grow with $n$, and let its probability mass function (p.m.f.) be $\{p^{(n)}_s\}_{s\in \mathcal{S}_n}$. Assume that the minimum mass of an atom satisfies $\min_{s\in \mathcal{S}_n}p_s^{(n)}\geq\frac{1}{\pi_1}(1-(cn^{-\frac{6s}{2s+d}})^\frac{1}{n-1})$, for some constant $c>0$. Then $\mathbb E|\hat{T}^{\text{LOO}}_{\mathcal{U}_n^1}- T(F)|$ is $\mathcal{O}\left(n^{\frac{-2s}{2s+d}}\right)$ if $s < d/2$ and $\mathcal{O}(n^{-1/2})$ when $s \geq d/2$.
\end{theorem}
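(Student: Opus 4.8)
The plan is to mirror the proof of \Cref{thm:conv-ds}: couple the set of unique observations $\mathcal U_n^1$ with the (unobservable) purely continuous subsample and import the leave-one-out analysis of \cite{NIPS2015_06138bc5}. Almost surely every draw from the continuous component $F$ appears exactly once, and every repeated value must be an atom of $H_1$; hence $\mathcal U_n^1 = \mathcal C_n \sqcup \mathcal A_n$, where $\mathcal C_n$ is the set of continuous draws --- an i.i.d.\ $F$-sample of random size $N_c \sim \mathrm{Bin}(n, 1-\pi_1)$ --- and $\mathcal A_n$ collects the atoms occurring exactly once, of size $A_n$. Crucially, conditionally on the counts and on which indices are discrete, $\mathcal C_n$ remains i.i.d.\ $F$ and independent of $\mathcal A_n$. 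On the event $\{A_n = 0\}$ the proposed estimator coincides with the oracle $\hat T^{\text{LOO}}_{\mathcal C_n}$, i.e.\ the analogue of \eqref{eq-loo} with $\mathcal C_n$ in place of $\mathcal U_n^1$, to which the continuous-case LOO guarantee of \cite{NIPS2015_06138bc5} applies after conditioning on $N_c$ (and $N_c \asymp n$ with overwhelming probability by a Chernoff bound on the binomial). So it remains to (a) control $A_n$ and (b) bound the perturbation $\hat T^{\text{LOO}}_{\mathcal U_n^1} - \hat T^{\text{LOO}}_{\mathcal C_n}$ caused by the $A_n$ spurious points.

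For (a), for each atom $s$ one has $\mathbb P(s\text{ occurs exactly once}) = n\pi_1 p_s^{(n)}(1-\pi_1 p_s^{(n)})^{n-1}$, and the stated condition $\min_s p_s^{(n)} \ge \tfrac1{\pi_1}\bigl(1-(cn^{-6s/(2s+d)})^{1/(n-1)}\bigr)$ is exactly $(1-\pi_1 p_s^{(n)})^{n-1} \le cn^{-6s/(2s+d)}$; summing over $s$ and using $\sum_s \pi_1 p_s^{(n)} \le 1$ yields $\mathbb E[A_n] \le c\,n^{1-6s/(2s+d)}$. For consistency alone (countable support, no rate) it suffices that $\mathbb E[A_n]/n = \sum_s \pi_1 p_s^{(n)}(1-\pi_1 p_s^{(n)})^{n-1} \to 0$, which holds by dominated convergence (each term $\to 0$, dominated by the summable $\pi_1 p_s^{(n)}$); hence $A_n/n \xrightarrow{p} 0$, so $\hat f_{\mathcal U_n^1}$ stays consistent and the \cite{NIPS2015_06138bc5} consistency argument for the LOO estimator goes through, giving $\mathbb E|\hat T^{\text{LOO}}_{\mathcal U_n^1}-T(F)| \to 0$.

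For (b) --- the main obstacle --- decompose $\hat T^{\text{LOO}}_{\mathcal U_n^1} - \hat T^{\text{LOO}}_{\mathcal C_n}$ into a renormalization term ($\tfrac1{N_c+A_n}$ vs.\ $\tfrac1{N_c}$), the change in each common summand $T(\hat f^{(-i)})+\psi(X_i;\hat f^{(-i)})$ caused by adding $A_n$ kernel bumps to the density estimate, and the $A_n$ new summands indexed by $\mathcal A_n$. Using that the clipped KDE keeps $\psi(\cdot;\hat f)$ uniformly bounded and that $\int \hat f = 1$ on the compact domain keeps $T(\hat f)=\phi(\int\nu(\hat f)\,d\mu)$ bounded, the first and third terms are $O(A_n/N_c)$. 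For the middle term, $T$ is $L_1$-Lipschitz ($|T(f_1)-T(f_2)|\le L_\phi L_\nu\|f_1-f_2\|_1$) and adding the bumps moves the KDE by $O(A_n/n)$ in $L_1$; for the $\psi$-part one invokes \Cref{assump1} with $X_i$ an independent $F$-draw given the corresponding leave-one-out estimate, bounding $\mathbb E|\psi(X_i;\hat f_{\mathcal U_n^1}^{(-i)})-\psi(X_i;\hat f_{\mathcal C_n}^{(-i)})|$ by $O(\|\hat f_{\mathcal U_n^1}^{(-i)}-\hat f_{\mathcal C_n}^{(-i)}\|_2)=O(A_n/(n\sqrt h))$, since an $\tfrac1{nh}K(\tfrac{\cdot - s}{h})$ bump has $L_2$-norm $\Theta(1/(n\sqrt h))$. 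Taking expectations with $\mathbb E[A_n]=O(n^{1-6s/(2s+d)})$ and $h=\alpha n^{-1/(2s+d)}$ shows every term is of strictly smaller order than $n^{-2s/(2s+d)}$ (for $s<d/2$) and than $n^{-1/2}$ (for $s\ge d/2$) --- this is precisely why the exponent $6s/(2s+d)$ is imposed rather than the $2s/(2s+d)$ that sufficed for the KDE in \Cref{thm:kde}: it must absorb the $h^{-1/2}$ from $L_2$-norming localized bumps, the $1/n$ averaging, and the room needed in the second-order functional Taylor remainder \eqref{eq:tylor}. The remaining work is to carry the \cite{NIPS2015_06138bc5} leave-one-out variance bound through with $\mathcal U_n^1$ in place of an i.i.d.\ sample --- legitimate by the conditional-independence observation above --- which, combined with their bound on the oracle $\hat T^{\text{LOO}}_{\mathcal C_n}$ (conditioned on $N_c$) and the binomial concentration of $N_c$, delivers the claimed rates.
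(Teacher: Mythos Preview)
Your strategy is correct and tracks the paper's closely; the one substantive difference is the choice of oracle. The paper couples $\hat T^{\text{LOO}}_{\mathcal U_n^1}$ to $\hat T^{\text{LOO}}_{\mathcal U_n^1,V}$, obtained by keeping the \emph{same} index set $\mathcal U_n^1$ but replacing each $X_i$ by its latent continuous coordinate $V_i$ (so the oracle is a LOO estimator on $|\mathcal U_n^1|$ i.i.d.\ $F$-draws, since $V_i$ is independent of the event $\{X_i\in\mathcal U_n^1\}$), whereas you drop the singly-occurring atoms entirely and couple to $\hat T^{\text{LOO}}_{\mathcal C_n}$ on the $N_c$ genuine continuous draws. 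Both reductions boil down to controlling the number $A_n$ of atoms that sneak into $\mathcal U_n^1$ --- the paper parametrizes this as $n\,\mathbb E[\mathds 1(X_1\in\mathcal S\setminus\mathcal R_n)]$ and handles it via \Cref{lem:exp-decay} --- and both feed the oracle into the \cite{NIPS2015_06138bc5} MSE bound after conditioning on its random size and applying a Hoeffding/Chernoff tail for the binomial. The paper's oracle buys two simplifications: the normalizing constant $|\mathcal U_n^1|$ is unchanged, so your renormalization term vanishes, and the $\psi$-perturbation is bounded through the $L_1$ distance $\int|\hat f^{(-i)}_{\mathcal U_n^1,V}-\hat f^{(-i)}_{\mathcal U_n^1}|\le \tfrac{2}{|\mathcal U_n^1|-1}\sum_j\mathds 1(X_j\in\mathcal S\setminus\mathcal R_n)$, with no bandwidth factor at all. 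Your $L_2$ route picks up $h^{-d/2}$ (not $h^{-1/2}$ --- mind the dimension), which for $s<d/8$ would overrun the $n^{-6s/(2s+d)}$ budget; switching that step to $L_1$, or simply adopting the paper's ``replace rather than drop'' oracle, removes this wrinkle and the rest of your argument goes through.
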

We now move to LOO estimators for functionals of two distributions.
\begin{theorem}
\label{thm:conv-loo-2}
Let $f,g\in \Sigma(s, L, B_0, B)$ and  $\psi_f,\psi_g$ satisfy 
\Cref{assump1}. Then,
 $\mathbb E|\hat{T}^{\text{LOO}}_{\mathcal{U}_n^1,\mathcal{U}_m^2}- T(F,G)|\to 0$, as $n,m\to\infty$.  Further, suppose $H_1$ and $H_2$ have finite supports $\mathcal{S}_n$ and $\mathcal{S}_m^\prime$, which may grow with $n$ and $m$, and let their probability mass functions (p.m.f.) be $\{p^{(n)}_s\}_{s\in \mathcal{S}_n}$ and $\{q^{(m)}_s\}_{s\in \mathcal{S}_m^\prime}$ respectively. Assume that the minimum masses of an atom satisfy $\min_{s\in \mathcal{S}_n}p_s^{(n)}\geq\frac{1}{\pi_1}(1-(cn^{-\frac{6s}{2s+d}})^\frac{1}{n-1})$ and $\min_{s\in \mathcal{S}_m^\prime}q_s^{(m)}\geq\frac{1}{\pi_2}(1-(cm^{-\frac{6s}{2s+d}})^\frac{1}{m-1})$, for some constant $c>0$. Then, $\mathbb E|\hat{T}^{\text{LOO}}_{\mathcal{U}_n^1,\mathcal{U}_m^2}- T(F,G)|$ is $\mathcal{O}\left(n^{\frac{-2s}{2s+d}}+m^{\frac{-2s}{2s+d}}\right)$ if $s < d/2$ and $\mathcal{O}(n^{-1/2}+m^{-1/2})$ when $s \geq d/2$.
\end{theorem}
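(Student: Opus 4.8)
The plan is to prove the theorem by the two-sample analogue of the argument behind \Cref{thm:conv-loo}, importing the treatment of two distributions and of the index-cycling from the proof of \Cref{thm:conv-ds-2}; the one genuinely new feature is the control of the ``stray atoms'' --- atoms of $H_1$ or $H_2$ that happen to be realised exactly once. First I would label each $X_i$ by its origin: the $N_n := \#\{i : X_i \sim F\} \sim \mathrm{Bin}(n,1-\pi_1)$ genuine $F$-draws are, with probability one, pairwise distinct and distinct from every $H_1$-draw, hence all lie in $\mathcal{U}_n^1$, and conditionally on $N_n$ they form an i.i.d.\ sample from $F$; write $A_n := |\mathcal{U}_n^1| - N_n$ for the number of remaining (stray-atom) elements of $\mathcal{U}_n^1$, and define $M_m, A_m'$ analogously on the $Y$-side. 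Let $\hat f^\circ,\hat g^\circ$ denote the KDEs built from the genuine draws only, and $\hat\theta_i := (\hat f_{\mathcal{U}_n^1}^{(-j_i)}, \hat g_{\mathcal{U}_m^2}^{(-k_i)})$ the pair of leave-one-out KDEs appearing in the $i$-th summand of \eqref{eq-loo-2}.

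Two quantitative inputs drive the argument. (i) A leave-one-out KDE bound: $\sup_j \mathbb E\|\hat f_{\mathcal{U}_n^1}^{(-j)} - f\|_2^2 = \mathcal{O}(n^{-2s/(2s+d)})$ (and the analogue for $\hat g$), which is the expected-$L_2$ refinement of \Cref{thm:kde}; the minimum-mass hypothesis $\min_{s}p_s^{(n)}\ge\frac1{\pi_1}(1-(cn^{-6s/(2s+d)})^{1/(n-1)})$ --- equivalently $\sup_{s}(1-\pi_1 p_s^{(n)})^{n-1}\le cn^{-6s/(2s+d)}$ --- is exactly what makes this hold. (ii) A stray-atom count bound: since $\P(s\text{ is realised exactly once})=n\pi_1 p_s^{(n)}(1-\pi_1 p_s^{(n)})^{n-1}$, summing over $\mathcal{S}_n$ (total mass $\le 1$) gives $\mathbb E[A_n]\lesssim n^{1-6s/(2s+d)}$, hence $\mathbb E[A_n/n]\lesssim n^{-6s/(2s+d)}\le n^{-2s/(2s+d)}$, and similarly for $A_m'$; when $H_1$ merely has countable support, dominated convergence gives $\mathbb E[A_n/n]=\pi_1\sum_s p_s^{(n)}(1-\pi_1 p_s^{(n)})^{n-1}\to 0$. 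A Chernoff bound makes $\P(N_n\le(1-\pi_1)n/2)$ and $\P(M_m\le(1-\pi_2)m/2)$ exponentially small; restrict to the complement $\mathcal{E}$ of their union, on which $|\mathcal{U}_n^1|\asymp n$ and $|\mathcal{U}_m^2|\asymp m$.

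Given (i) and (ii), the rest mirrors the analysis of the two-sample LOO estimator in \cite{NIPS2015_06138bc5}. Taylor-expanding each $T(\hat\theta_i)$ around $(f,g)$ via \eqref{eq:tylor} and its two-distribution analogue, the correction terms $\psi_f(X_{j_i};\hat\theta_i)+\psi_g(Y_{k_i};\hat\theta_i)$ cancel the first-order bias \emph{whenever the $i$-th unique observations $X_{j_i}, Y_{k_i}$ are genuine draws}, because then $X_{j_i}\mid\hat\theta_i\sim F$ and $Y_{k_i}\mid\hat\theta_i\sim G$; the quadratic remainder is $\mathcal{O}(\|\hat f_{\mathcal{U}_n^1}^{(-j_i)}-f\|_2^2+\|\hat g_{\mathcal{U}_m^2}^{(-k_i)}-g\|_2^2)$, which averages to $\mathcal{O}(n^{-2s/(2s+d)}+m^{-2s/(2s+d)})$ by (i) and \Cref{assump1}. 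The at most $A_n+A_m'$ summands for which $X_{j_i}$ or $Y_{k_i}$ is a stray atom are bounded crudely by $\mathcal{O}(1)$ each --- $\phi,\nu$ being Lipschitz and $B_0<f,g<B$, the functionals $T$ and the influence functions $\psi_f,\psi_g$ are bounded over $\Sigma(s,L,B_0,B)$ --- contributing $\mathcal{O}(\mathbb E[A_n/n]+\mathbb E[A_m'/m])$ by (ii). The variance of the remaining (clean) part is $\mathcal{O}(1/n+1/m)$, as in \cite{NIPS2015_06138bc5}. Summing these contributions and handling $\mathcal{E}^c$ via the exponential bound gives $\mathbb E|\hat T^{\text{LOO}}_{\mathcal{U}_n^1,\mathcal{U}_m^2}-T(F,G)|=\mathcal{O}(n^{-2s/(2s+d)}+m^{-2s/(2s+d)})$ for $s<d/2$ and $\mathcal{O}(n^{-1/2}+m^{-1/2})$ for $s\ge d/2$; for the unconditional consistency statement (countable support, no minimum-mass hypothesis) one replaces (i) by the $L_1$ consistency of $\hat f_{\mathcal{U}_n^1}$ from \Cref{thm:kde} and (ii) by $\mathbb E[A_n/n]\to 0$, together with the consistency results of \cite{NIPS2015_06138bc5}.

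I expect the main obstacle to be establishing input (i) with the correct exponent. Writing $\hat f_{\mathcal{U}_n^1}=\tfrac{N_n}{|\mathcal{U}_n^1|}\hat f^\circ+\tfrac{A_n}{|\mathcal{U}_n^1|}\hat f_{\mathrm{stray}}$ (and likewise for the leave-one-out versions), one has $\|\hat f_{\mathcal{U}_n^1}-f\|_2^2\lesssim\|\hat f^\circ-f\|_2^2+(A_n/n)^2\|\hat f_{\mathrm{stray}}\|_2^2$; but $\hat f_{\mathrm{stray}}$ is a KDE of points bearing no relation to $f$, so in unfavourable configurations (several stray atoms falling within one bandwidth) $\|\hat f_{\mathrm{stray}}\|_2^2$ can be as large as $\asymp h^{-d}$, and taming $\mathbb E[(A_n/n)^2]h^{-d}$ down to $\mathcal{O}(n^{-2s/(2s+d)})$ demands precisely the second-moment control on $A_n$ afforded by the stronger exponent $6s/(2s+d)$, as opposed to the $2s/(2s+d)$ that sufficed for the raw KDE in \Cref{thm:kde}. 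A subsidiary technical point, inherited from the two-sample construction, is the cycling of indices when $|\mathcal{U}_n^1|\ne|\mathcal{U}_m^2|$: as for \Cref{thm:conv-ds-2}, one checks on $\mathcal{E}$ that $|\mathcal{U}_n^1|\asymp n$ and $|\mathcal{U}_m^2|\asymp m$, so that the reused summands do not change the order of either the bias or the variance.
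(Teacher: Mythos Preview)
Your overall strategy --- separate $\mathcal{U}_n^1$ into genuine $F$-draws and stray atoms, then Taylor-expand $T$ around $(\hat f_{\mathcal{U}_n^1},\hat g_{\mathcal{U}_m^2})$ directly --- differs from the paper's, and the difference bites precisely at the obstacle you flag. Your input (i) requires $\mathbb E\|\hat f_{\mathcal{U}_n^1}-f\|_2^2=\mathcal{O}(n^{-2s/(2s+d)})$, and your decomposition produces a stray-atom contribution of order $(A_n/n)^2\|\hat f_{\mathrm{stray}}\|_2^2\lesssim(A_n/n)^2h^{-d}$. With $h\asymp n^{-1/(2s+d)}$ and $\mathbb E[(A_n/n)^2]\le\mathbb E[A_n/n]\lesssim n^{-6s/(2s+d)}$, this is only $\mathcal{O}(n^{(d-6s)/(2s+d)})$, which is $\le n^{-2s/(2s+d)}$ \emph{only when $s\ge d/4$}. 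For $s<d/4$ the Taylor remainder dominates and the claimed rate does not follow; no sharper second-moment bound on $A_n$ is available under the stated hypothesis, since many atoms can sit at the minimum mass and can cluster within one bandwidth. The countable-support consistency argument inherits the same difficulty: \Cref{thm:kde} gives only $L_1$ convergence of $\hat f_{\mathcal{U}_n^1}$, whereas the Taylor remainder is $L_2$, and $h^{-d}\to\infty$.

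The paper avoids this by never demanding $L_2$ control of $\hat f_{\mathcal{U}_n^1}$ itself. It introduces an auxiliary estimator $\hat T^{\text{LOO}}_{\mathcal{U}_n^1,\mathcal{U}_m^2,V,W}$ in which, for every index $i$ with $X_i\in\mathcal{U}_n^1$, the point $X_i$ is replaced by the latent continuous draw $V_i$ (and likewise $Y_i\mapsto Z_i$). Because $V_i$ is independent of the event $[X_i\in\mathcal{U}_n^1]$, conditionally on $|\mathcal{U}_n^1|,|\mathcal{U}_m^2|$ this is the ordinary LOO estimator on i.i.d.\ samples from $F$ and $G$; a Hoeffding bound on $|\mathcal{U}_n^1|,|\mathcal{U}_m^2|$ then transfers the MSE bound of \cite{NIPS2015_06138bc5} directly to the $V$-estimator. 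What remains, $\mathbb E|\hat T-\hat T^V|$, is handled term-by-term using only the Lipschitz property of $T$ and \Cref{assump1} together with the \emph{$L_1$} bound $\int|\hat f^V_{\mathcal{U}_n^1}-\hat f_{\mathcal{U}_n^1}|\lesssim A_n/|\mathcal{U}_n^1|$; a three-factor H\"older inequality converts $\mathbb E[\mathds{1}(X_1\in\mathcal{S}\setminus\mathcal{R}_n)]\lesssim n^{-6s/(2s+d)}$ into the target $n^{-2s/(2s+d)}$ via a cube root, and this works uniformly over all $s>0$.
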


From the above theorems, it follows that our modified estimators still achieve $n^{-1/2}$ consistency in the mixed setup whenever the density $f$ is sufficiently smooth, i.e., $s>d/2$. Hence, in this case, we achieve the minimax optimal rate for the pure continuous setup \citep{birge1995estimation}, even in the presence of atoms in the data distribution. Therefore, our method achieves optimal statistical efficiency even in the presence of atoms in the data-generating distribution, demonstrating both robustness and sharpness of performance in the mixed setting.

\Cref{tab:simple} summarizes our main theoretical results alongside their counterparts from \cite{NIPS2015_06138bc5}. It is worth noting that if the data are indeed generated from a purely continuous distribution (i.e., without any atoms), then our estimators reduce exactly to the standard ones, and the corresponding performance guarantees remain unchanged.
 While our results (in the presence of atoms in the data distribution) are similar to theirs (without atoms, i.e., in a purely continuous setup), they offer additional flexibility by accommodating finite support of the discrete distribution that can grow with sample size in a triangular-array setup. Importantly, consistency still holds even when the discrete component has countably many atoms.

\begin{table}[!ht]
\centering
\caption{Comparison of our theoretical results with those in \cite{NIPS2015_06138bc5}. The assumption that $f$ (and/or $g$) lies in $\Sigma(s, L, B_0, B)$ is a common assumption in all the theoretical results. Assumptions except those are listed in the second column. MAE denotes mean absolute error, $\mathbb E|\hat{T}- T|$ and MSE denotes mean square error, $\mathbb E(\hat{T}- T)^2$.\\}
\label{tab:thm-comparison}
\label{tab:simple}
    \resizebox{\linewidth}{!}{
    \begin{tabular}{lccc}
    \toprule
    \addlinespace
Type & Assumptions &  {{Finitely many atoms (our results)}} &\specialcell{No atoms (\cite{NIPS2015_06138bc5}),\\special case of our method)} \\[3.3mm]
\hline

\multirow{2}{*}{\specialcell{DS \\(1 dist)}}& \eqref{eq:assmp-1d} & $\text{MAE}=\begin{cases}
    \mathcal{O}\left(n^{\frac{-2s}{2s+d}}\right),s<d/2\\
    \mathcal{O}(n^{-1/2}), s\geq d/2
\end{cases}$ &  $\text{MSE}=\begin{cases}
    \mathcal{O}\left(n^{\frac{-4s}{2s+d}}\right),s<d/2\\
    \mathcal{O}(n^{-1}), s\geq d/2
\end{cases}$
 \\
 \cline{2-4}
 & \specialcell{$s > d/2$, \\ $\psi\neq 0$,  \eqref{eq:assmp-1d}}
 & $\sqrt{n}$-asymptotic normality  & $\sqrt{n}$-asymptotic normality
 \\

\hline

\multirow{2}{*}{\specialcell{DS \\(2 dist)}} & \eqref{eq:assmp-2d-1}, \eqref{eq:assmp-2d-2} & $\text{MAE}=\begin{cases}
    \mathcal{O}\left(n^{\frac{-2s}{2s+d}}+m^{\frac{-2s}{2s+d}}\right),s<d/2\\
    \mathcal{O}(n^{-1/2}+m^{-1/2}), s\geq d/2
\end{cases}$ &  $\text{MSE}=\begin{cases}
    \mathcal{O}\left(n^{\frac{-4s}{2s+d}}+m^{\frac{-4s}{2s+d}}\right),s<d/2\\
    \mathcal{O}(n^{-1}+m^{-1}), s\geq d/2
\end{cases}$\\ 
\cline{2-4}
&\specialcell{$s > \frac{d}{2}$, \eqref{eq:assmp-2d-1}, \eqref{eq:assmp-2d-2},\\$\psi_f,\psi_g\neq 0$ }    & $\sqrt{n}$-asymptotic normality  & $\sqrt{n}$-asymptotic normality
 \\

\hline

\specialcell{LOO \\(1 dist)} &  \eqref{eq:assmp-1d} & $\text{MAE}=\begin{cases}
    \mathcal{O}\left(n^{\frac{-2s}{2s+d}}\right),s<d/2\\
    \mathcal{O}(n^{-1/2}), s\geq d/2
\end{cases}$ &  $\text{MSE}=\begin{cases}
    \mathcal{O}\left(n^{\frac{-4s}{2s+d}}\right),s<d/2\\
    \mathcal{O}(n^{-1}), s\geq d/2
\end{cases}$
 \\

\hline

\specialcell{LOO \\(2 dist)} & \eqref{eq:assmp-2d-1}, \eqref{eq:assmp-2d-2} & $\text{MAE}=\begin{cases}
    \mathcal{O}\left(n^{\frac{-2s}{2s+d}}+m^{\frac{-2s}{2s+d}}\right),s<d/2\\
    \mathcal{O}(n^{-1/2}+m^{-1/2}), s\geq d/2
\end{cases}$ &  $\text{MSE}=\begin{cases}
    \mathcal{O}\left(n^{\frac{-4s}{2s+d}}+m^{\frac{-4s}{2s+d}}\right),s<d/2\\
    \mathcal{O}(n^{-1}+m^{-1}), s\geq d/2
\end{cases}$
\\

\hline
\end{tabular}}
\end{table}

\section{Experiments}
We now present a series of simulation experiments demonstrating the practical advantages of our atom-aware methodology over standard estimators.

\label{sec:expt}
\subsection{Density estimator} We first evaluate the proposed method for estimating the density of a mixture of univariate continuous and discrete distributions. We consider data generated from the mixture model   $0.6 \mathcal{N}(0,1) + 0.4 \text{Binomial}(10, 0.5).$
The continuous component is estimated using a kernel density estimator with a Gaussian kernel and bandwidth 
$h = 1.06 \cdot \hat{\sigma} n^{-1/5}$,
where \( \hat{\sigma} \) is the empirical standard deviation of the \textit{unique} observed values. The discrete component is estimated by assigning a point mass at each repeated observed value with its relative frequency. \Cref{fig:1} shows estimated vs. true densities and PMFs for different sample size values $n=100,1000,10000$ along with the standard KDE approach (which naïvely applies a continuous estimator to the entire dataset), implemented using the \texttt{kde} function from the \texttt{ks} package in \textsf{R}.

Now, we present a similar experiment with multivariate continuous and discrete distributions. We consider i.i.d. observations generated from the mixture model $0.6 (X,Y) + 0.4 (Z,0)$, where $(X,Y)\sim\mathcal{N}_2(0,I_2)$ and $Z\sim \text{Pois}(1)$.
 \Cref{fig:kde-2d} shows our modified KDE, along with the standard KDE approach (which naïvely applies a continuous estimator to the entire dataset), implemented using the \texttt{kde} function from the \texttt{ks} package in \textsf{R}. The results highlight that the standard method fails to capture the structure of the mixture, whereas our straightforward modification leads to accurate estimation.

\begin{figure*}[!htb]
\centering
\centering
\subfloat[Usual KDE fails in the presence of atoms.]{\includegraphics[width=0.5\linewidth]{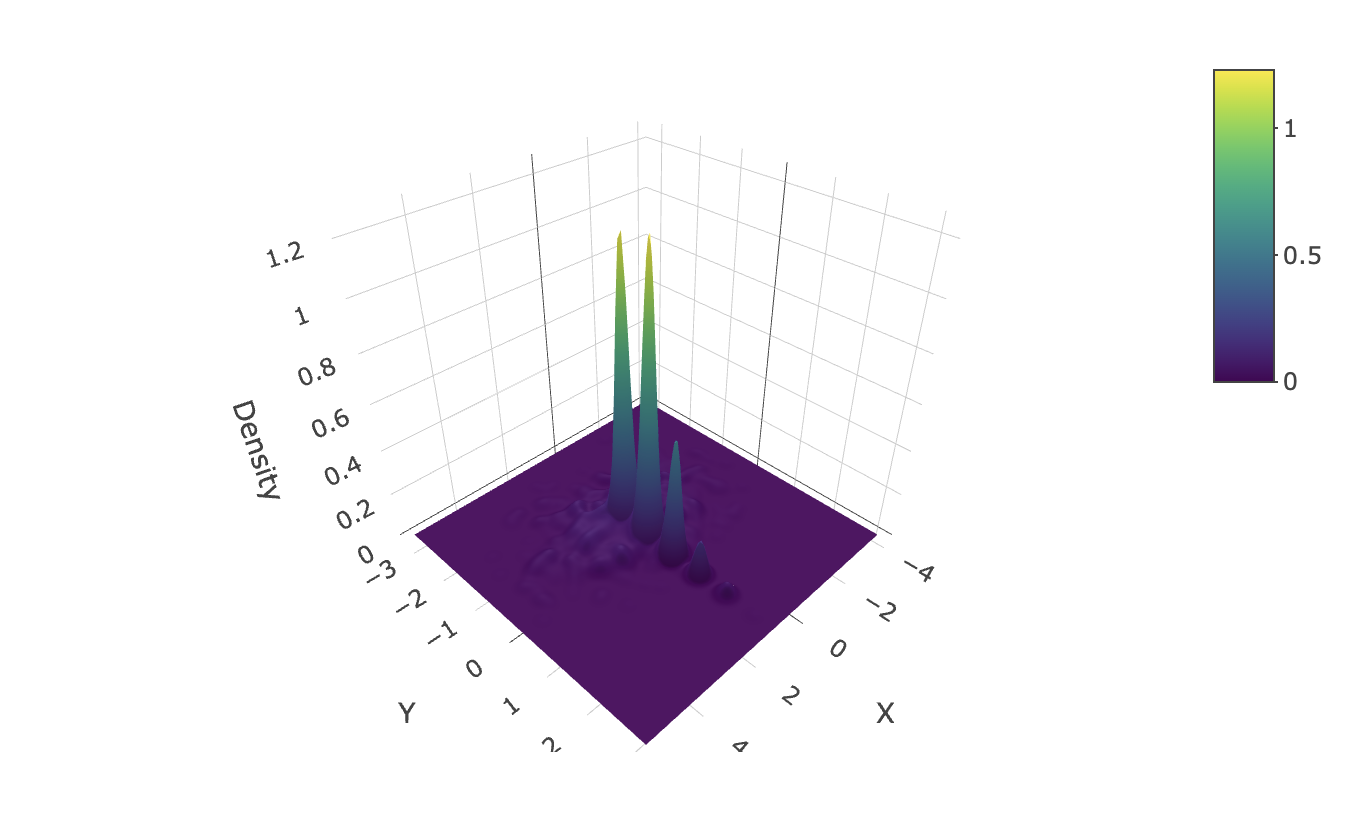}} 
\subfloat[Our simple modification works.]{\includegraphics[width=0.5\linewidth]{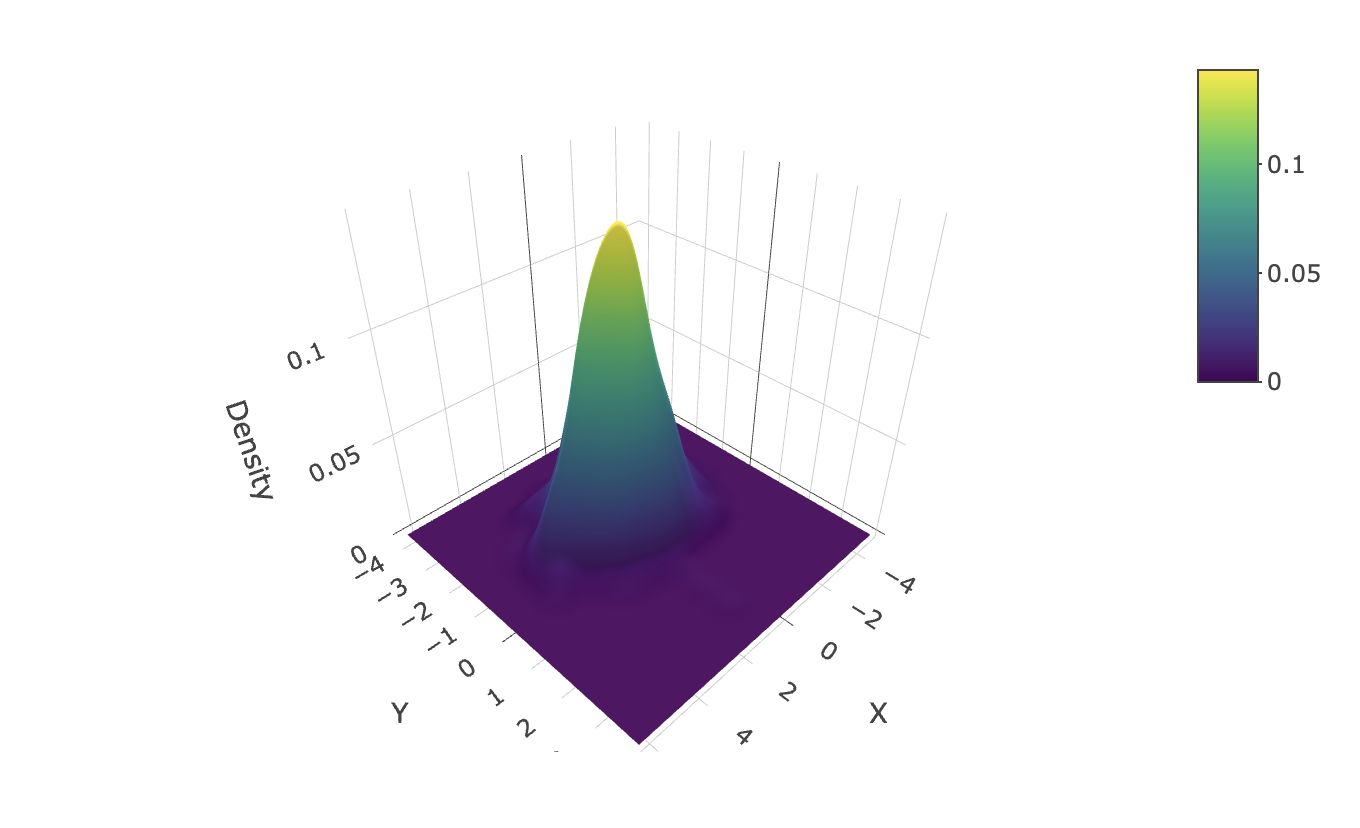}}
\caption[]{Density estimated using $n=1000$ samples drawn from the mixture $0.6 (X,Y) + 0.4 (Z,0)$, where $(X,Y)\sim\mathcal{N}_2(0,I_2)$ and $Z\sim \text{Pois}(1)$.} 
\label{fig:kde-2d}  
\end{figure*}

\subsection{Entropy estimator}
We generate i.i.d.\ samples from a mixture of a continuous and a discrete distribution, where the discrete component is a scaled Poisson distribution, $\mathrm{Poisson}(1)/5$ (supported on a countable set). For the continuous part, we consider two cases: (i) the uniform distribution on $[0,1]$, and (ii)  density $0.5 + 5t^5$ for $t \in [0,1]$. We then compute our leave-one-out (LOO) estimator $\hat{T}^{\text{LOO}}_{\mathcal{U}_n^1}$ for Shannon entropy under both settings. To assess performance, we report the average absolute error over $100$ independent runs and compare our method against two baselines: (a) the LOO estimator from \cite{NIPS2015_06138bc5}, which uses the full data and hence, inconsistent when atoms are present. (a) the oracle estimator: the estimator is the same, but it now has access to the labels indicating whether each point was generated from the continuous component, and uses only the continuous part for estimation. The results, shown in \Cref{fig:entropy}, highlight that the mean absolute error of our method is very close to that of the oracle.

\begin{figure*}[!htb]
\centering
\centering
\subfloat[$f_1(t)=1, 0\leq t\leq 1$]{\includegraphics[width=0.48\linewidth]{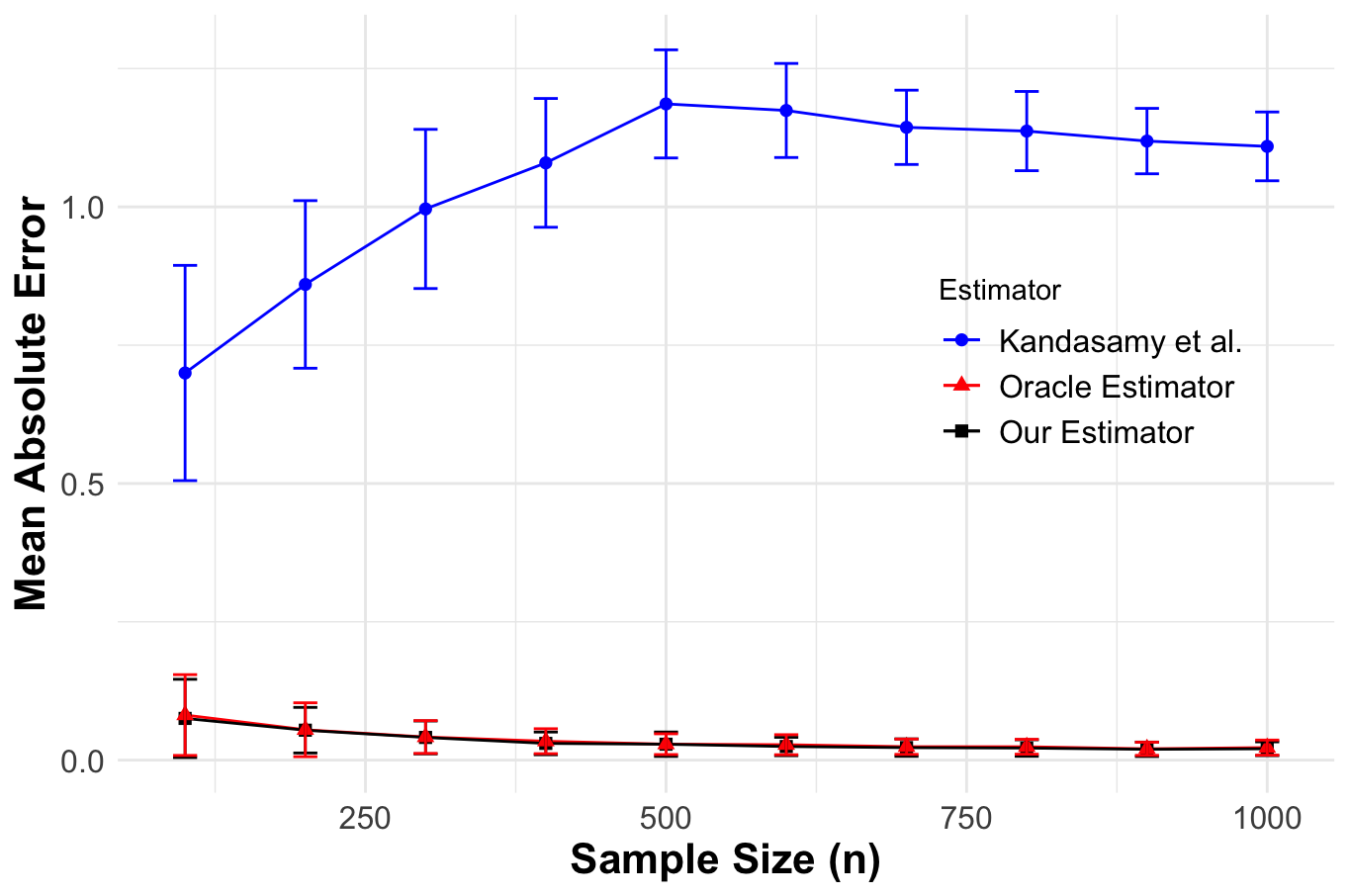}} 
\subfloat[$f_1(t)=0.5 + 5t^5, 0\leq t\leq 1$]{\includegraphics[width=0.48\linewidth]{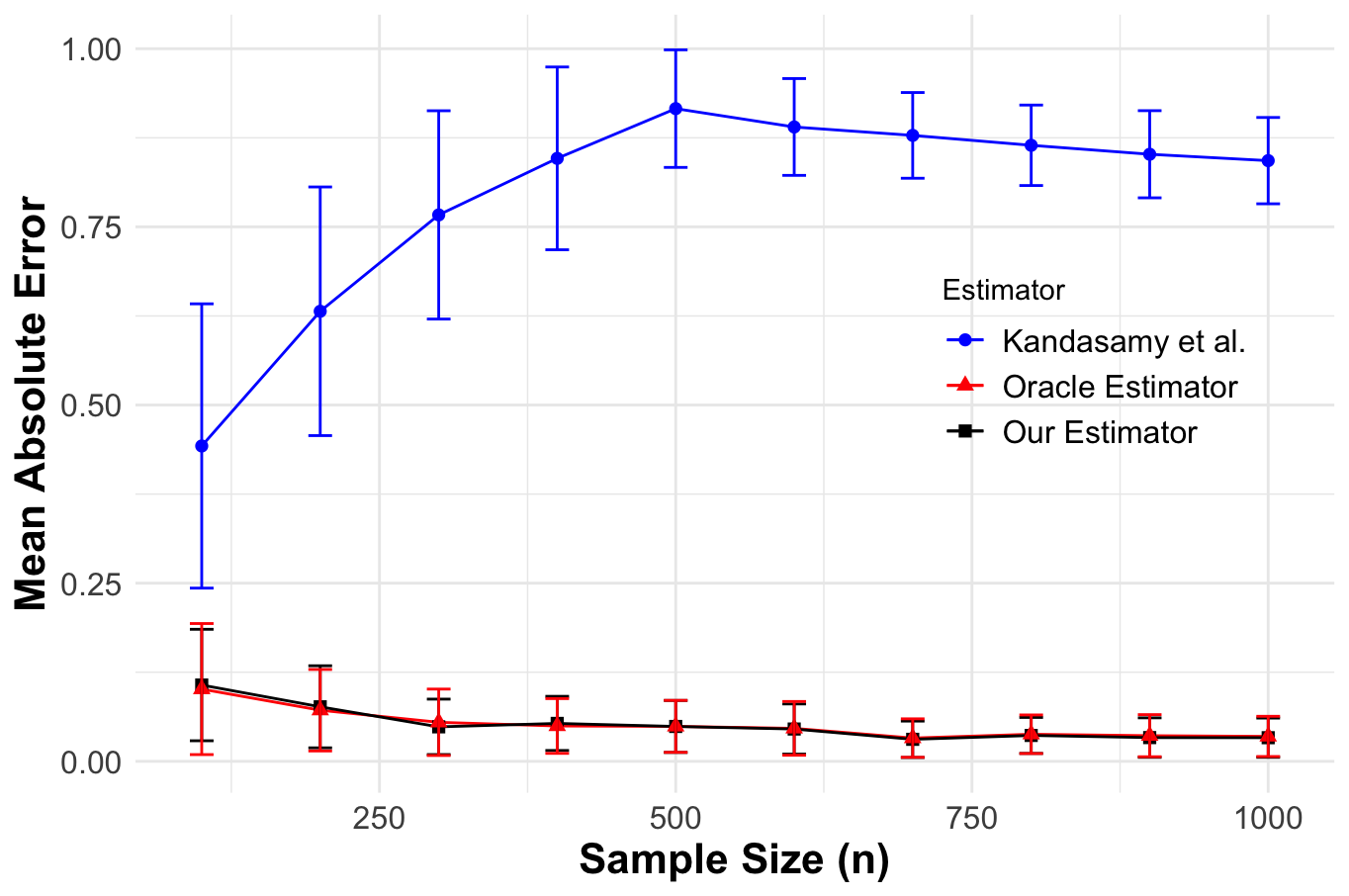}}
\caption[]{The average of the absolute error of entropy estimation is plotted against the sample size. Here, $60\%$ of the data is drawn from the density $f_1$ and the remaining $40\%$ from $\mathrm{Poisson}(1)/5$. Our atom-aware estimator closely matches the performance of the oracle that has access to the labels, and their mean absolute error approaches zero as the sample size increases. However, the original estimator of \cite{NIPS2015_06138bc5} fails due to its inability to handle atoms in the distribution.
} 
\label{fig:entropy}  
\end{figure*}

\section{Conclusion and future work}
\label{sec:conc}
We presented a simple yet powerful framework for nonparametric estimation of density and density functionals of the continuous component in the presence of mixed discrete-continuous data. By leveraging the empirical observation that unique values are likely drawn from the continuous component while repeated ones stem from the discrete part, our method cleanly separates the two components without prior knowledge of the discrete support. We showed that this simple idea integrates naturally with existing estimators and maintains their consistency and optimality under standard smoothness assumptions. Our theoretical results and empirical evaluations highlight the flexibility and effectiveness of our approach, opening the door to robust estimation in broader mixed-data scenarios. We use \cite{NIPS2015_06138bc5} just as a concrete testbed; the core idea is broadly applicable and not tied to any specific methodology.

Our work opens several avenues for further investigation. It might be interesting to extend our atom-aware methodology to other nonparametric density and functional estimation frameworks that currently assume purely continuous distributions. In particular, the family of ensemble estimators proposed by  \cite{moon2018ensemble} aggregates multiple plug-in KDE divergence estimators, and it is plausible that, with suitable modifications, they can be adapted to mixed discrete-continuous distributions while preserving consistency and optimality guarantees. One can also employ this idea for k-nearest neighbour (k-NN) density estimation and fixed-k-NN
density functional estimators \citep{singh2016finite}.
Another natural direction involves estimating functionals of the discrete component from a discrete-continuous mixture.

Finally, we note that many real datasets are inherently mixed in nature, and a broader range of consistent and efficient estimators that are robust to such heterogeneity could substantially enhance the reliability of modern data-driven applications.

\bibliographystyle{abbrvnat}
\bibliography{ref.bib}

\newpage
\appendix
\small
\allowdisplaybreaks
\pagenumbering{arabic}   
\setcounter{page}{1}     
\renewcommand\theequation{A.\arabic{equation}}
\setcounter{equation}{0}

\begin{center}
  \vspace*{0.5cm}
  {\LARGE\bfseries Supplementary Material:}\\[1em]
  {\large Density estimation with atoms, and functional
estimation for mixed discrete-continuous data}\\[1em]
\end{center}
\vspace{0.5cm}

\section{Mathematical details}

Note that, for $i=1,\cdots,n$, we can write $X_i\sim(1-\pi_1)F+\pi H_1$ as 
\begin{equation}
    X_i=(1-\Lambda_i)V_i+\Lambda_i U_i, ~~~\text{ where }\Lambda_i\sim \text{ Ber}(\pi_1), V_i\sim F, U_i\sim H_1
\end{equation}
and all $U_i,V_i,\Lambda_i$ are independent. Similarly, for $i=1,\cdots,m$, we have
\begin{equation}
    Y_i=(1-\Gamma_i)Z_i+\Gamma_i W_i, ~~~\text{ where }\Gamma_i\sim \text{ Ber}(\pi_2), Z_i\sim G, W_i\sim H_2
\end{equation}
and all $W_i,Z_i,\Gamma_i$ are independent.
\subsection{Auxiliary Lemmas}

\begin{lemma}
\label{lem:rn-to-s}
Let $S$ be the countable support of $H_1$. Then,
    $\mathcal{R}_n\uparrow\mathcal{S}$ almost surely as $n\to\infty$, i.e., $\P\left[\cup_{n=1}^\infty\mathcal{R}_n=\mathcal{S}\right]=1.$
\end{lemma}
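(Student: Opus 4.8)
The plan is to write $\mathcal{R}_\infty := \bigcup_{n=1}^\infty \mathcal{R}_n$ and prove the two inclusions $\mathcal{R}_\infty \subseteq \mathcal{S}$ and $\mathcal{S} \subseteq \mathcal{R}_\infty$ separately, each holding with probability one; since $\mathcal{R}_1 \subseteq \mathcal{R}_2 \subseteq \cdots$ by construction (a value occurring at least twice among $X_1,\dots,X_n$ still occurs at least twice among $X_1,\dots,X_{n+1}$), this gives $\mathcal{R}_n \uparrow \mathcal{S}$ almost surely.

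For $\mathcal{R}_\infty \subseteq \mathcal{S}$, I would use the representation $X_i = (1-\Lambda_i)V_i + \Lambda_i U_i$ from the start of the appendix, with $V_i \overset{\text{iid}}{\sim} F$ atomless and $U_i \overset{\text{iid}}{\sim} H_1$ supported on $\mathcal{S}$. If some $x \notin \mathcal{S}$ lies in $\mathcal{R}_n$, then $X_i = X_j = x$ for some $i \neq j \leq n$; since $\Lambda_i = 1$ would force $X_i = U_i \in \mathcal{S}$, we must have $\Lambda_i = 0$, and likewise $\Lambda_j = 0$, hence $V_i = V_j$. But $\mathbb{P}(V_i = V_j) = \int F(\{v\})\,F(dv) = 0$ for $i \neq j$ because $F$ has no atoms. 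A union bound over the finitely many pairs $(i,j)$ with $i \neq j \leq n$ gives $\mathbb{P}(\mathcal{R}_n \not\subseteq \mathcal{S}) = 0$, and a further union over $n \in \mathbb{N}$ (a countable set) gives $\mathbb{P}(\mathcal{R}_\infty \subseteq \mathcal{S}) = 1$.

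For the reverse inclusion $\mathcal{S} \subseteq \mathcal{R}_\infty$, I would fix $s \in \mathcal{S}$, so that $p_s := H_1(\{s\}) > 0$ and therefore $\mathbb{P}(X_i = s) \geq \pi_1 p_s > 0$ for every $i$. Since the $X_i$ are i.i.d., $\sum_i \mathbb{P}(X_i = s) = \infty$, and the second Borel--Cantelli lemma yields $\mathbb{P}(X_i = s \text{ for infinitely many } i) = 1$; in particular $s$ occurs at least twice among $X_1,\dots,X_n$ for all large $n$, so $s \in \mathcal{R}_\infty$ almost surely. Intersecting this event over the countable set $\mathcal{S}$ preserves probability one, giving $\mathbb{P}(\mathcal{S} \subseteq \mathcal{R}_\infty) = 1$. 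Combining with the first inclusion, $\mathcal{R}_\infty = \mathcal{S}$ a.s., which is the claim.

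The argument is elementary, and I do not expect any serious obstacle; the only point needing a little care is the first inclusion, where the exceptional null set must be controlled \emph{uniformly over all $x \notin \mathcal{S}$ and all $n$}, which is why the estimate is phrased through pairwise coincidences $X_i = X_j$ (reducible to $V_i = V_j$) rather than through a fixed value $x$ — but this is just a routine countable union bound once atomlessness of $F$ is used.
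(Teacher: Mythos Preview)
Your proposal is correct and follows essentially the same two-inclusion strategy as the paper. The only differences are cosmetic: the paper simply asserts $\mathcal{R}_n\subseteq\mathcal{S}$ a.s.\ without the explicit pairwise-coincidence argument you supply, and for $\mathcal{S}\subseteq\mathcal{R}_\infty$ it argues by contradiction via the SLLN where you invoke the second Borel--Cantelli lemma directly --- these are interchangeable here.
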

\begin{proof}
   We observe that $\mathcal{R}_n\subseteq \mathcal{S}$ almost surely for all $n$ and hence, $\cup_{n=1}^\infty\mathcal{R}_n\subseteq \mathcal{S}$ almost surely.
   
  Now, for the sake of contradiction, suppose that $\P\left[\cup_{n=1}^\infty\mathcal{R}_n=\mathcal{S}\right]<1.$  Then, it follows that $\P\left[\cup_{n=1}^\infty\mathcal{R}_n\subsetneq\mathcal{S}\right]>0.$ So,
  \begin{equation*}
      \sum_{x\in\mathcal{S}}\P[x\notin \cup_{n=1}^\infty\mathcal{R}_n]\geq \P(\cup_{x\in\mathcal{S}}[x\notin \cup_{n=1}^\infty\mathcal{R}_n])=\P\left[\cup_{n=1}^\infty\mathcal{R}_n\subsetneq\mathcal{S}\right]>0.
  \end{equation*}
Hence, there exists some $x\in\mathcal{S}$ with $\P[x\notin \cup_{n=1}^\infty\mathcal{R}_n]>0$. But $x\in\mathcal{S}$ imples that $p_x>0$ and using SLLN, we have  $\P[x\notin \cup_{n=1}^\infty\mathcal{R}_n]=\P[\sum_{i=1}^\infty \mathds{1}(X_i=x)\leq 1]=0$, which is a contradiction. 
    Thus, we have shown $\P\left[\cup_{n=1}^\infty\mathcal{R}_n=\mathcal{S}\right]=1.$
\end{proof}

\begin{lemma}
\label{lem:rn-to-s-finite}
If $S$, the support of $H_1$, is finite, then
$\P\left[\cup_{n=1}^N\mathcal{R}_n=\mathcal{S} \text{ for all large enough } N\right]=1.$
\end{lemma}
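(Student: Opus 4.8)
The plan is to mimic the proof of \Cref{lem:rn-to-s}, but exploit finiteness of $S$ to upgrade the ``eventually covered'' conclusion from ``in the limit'' to ``for all large enough $N$''. Recall that $\mathcal{R}_n$ is the (random) set of values appearing at least twice among $X_1,\dots,X_n$; since every atom $x\in\mathcal{S}$ has $p_x>0$, the strong law of large numbers gives $\sum_{i=1}^n \mathds{1}(X_i=x)\to\infty$ almost surely, so in particular $x\in\mathcal{R}_n$ for all sufficiently large $n$ (with probability one, where the threshold may depend on $x$). Formally, for each fixed $x\in\mathcal{S}$ define the event $A_x=\{x\in\mathcal{R}_n \text{ for all large enough } n\}$; then $\P[A_x]=1$.

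The key step is the finite intersection. Because $\mathcal{S}$ has only finitely many elements, $A:=\bigcap_{x\in\mathcal{S}} A_x$ is a finite intersection of probability-one events, hence $\P[A]=1$. On the event $A$, for each $x\in\mathcal{S}$ there is a (random) index $N_x$ with $x\in\mathcal{R}_n$ for all $n\ge N_x$; setting $N^\star=\max_{x\in\mathcal{S}} N_x$ (a finite maximum, again using finiteness of $\mathcal{S}$), we get $\mathcal{S}\subseteq\mathcal{R}_n$ for all $n\ge N^\star$. Combined with the almost-sure inclusion $\mathcal{R}_n\subseteq\mathcal{S}$ (which holds for every $n$), this yields $\mathcal{R}_n=\mathcal{S}$, and a fortiori $\cup_{k=1}^n \mathcal{R}_k=\mathcal{S}$, for all $n\ge N^\star$. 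Taking $N\ge N^\star$ and noting $\cup_{k=1}^N\mathcal{R}_k \supseteq \mathcal{R}_{N^\star}=\mathcal{S}$ while also $\cup_{k=1}^N\mathcal{R}_k\subseteq\mathcal{S}$, we conclude $\cup_{k=1}^N\mathcal{R}_k=\mathcal{S}$ on the probability-one event $A$. This is exactly the claimed statement.

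The only genuine subtlety — the ``main obstacle'', though it is mild — is making sure the quantifier ``for all large enough $N$'' is handled correctly: the threshold $N^\star$ is a random variable, finite almost surely, and one must be careful that the maximum over $x\in\mathcal{S}$ is taken over a \emph{finite} set so that $N^\star<\infty$ on $A$; this is precisely where the finiteness hypothesis on $S$ is indispensable, and where the argument departs from \Cref{lem:rn-to-s}, in which a countable support forces one to settle for the weaker conclusion $\cup_{n=1}^\infty \mathcal{R}_n=\mathcal{S}$. I would also briefly note that the event $\{x\in\mathcal{R}_n \text{ for all large enough }n\}$ is measurable (it is a countable combination of the events $\{X_i=x\}$), so the probability statements are well-posed. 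Everything else is a direct application of the SLLN and elementary set manipulations already rehearsed in the proof of \Cref{lem:rn-to-s}.
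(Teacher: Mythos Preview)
Your proof is correct and follows essentially the same approach as the paper's: both hinge on the observation that each atom $x\in\mathcal{S}$ almost surely enters $\mathcal{R}_n$ eventually (via the SLLN), and then use finiteness of $\mathcal{S}$ to take a finite maximum of the random thresholds. The paper's version is slightly terser, invoking \Cref{lem:rn-to-s} as a black box and then noting that a finite set covered in the limit is covered at some finite stage, whereas you unpack the per-atom argument directly; but the logical content is the same.
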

\begin{proof}
    From \Cref{lem:rn-to-s}, there exists a null set $N$ such that $\P(N)=0$ and for all event $\omega\in N^c$, $\cup_{n=1}^\infty\mathcal{R}_n(\omega)=\mathcal{S}$.
    Since $S$ is finite, there exists $N_0(\omega)$ such that
    $\cup_{n=1}^N\mathcal{R}_n(\omega)=\mathcal{S}$, for all $N\geq N_0(\omega)$. Therefore, $\P\left[\cup_{n=1}^N\mathcal{R}_n=\mathcal{S} \text{ for all large enough } N\right]\geq \P(N^c)=1.$
\end{proof}

\begin{lemma}
\label{thm:pi-hat-convergence}
 $\frac{1}{n}\sum_{i=1}^n (\mathds{1}(X_i\in \mathcal{S})-\mathds{1}(X_i\in \mathcal{R}_n))\to 0$ almost surely as $n\to\infty$.    
\end{lemma}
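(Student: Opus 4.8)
The plan is to reduce the statement to the strong law of large numbers (SLLN) together with \Cref{lem:rn-to-s}. First I would record the basic structural fact already used at the start of the proof of \Cref{lem:rn-to-s}: since $F$ has a density, two samples drawn from its continuous component coincide with probability zero and a continuous sample never lands on the fixed countable set $\mathcal S$, so any value occurring at least twice among $X_1,\dots,X_n$ must be an atom of $H_1$. Hence $\mathcal R_n\subseteq\mathcal S$ almost surely, and consequently $\mathds 1(X_i\in\mathcal S)-\mathds 1(X_i\in\mathcal R_n)=\mathds 1(X_i\in\mathcal S\setminus\mathcal R_n)\ge 0$ a.s. It therefore suffices to show that $\frac1n\sum_{i=1}^n\mathds 1(X_i\in\mathcal S\setminus\mathcal R_n)$ has almost sure $\limsup$ at most $\varepsilon$, for every $\varepsilon>0$.

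Next I would truncate the (possibly infinite) support. Fix $\varepsilon>0$. Because $\mathcal S$ is countable and $\P(X_1\in\mathcal S)<\infty$ is a sum of atom masses, there is a finite set $\mathcal S_0\subseteq\mathcal S$ with $\P(X_1\in\mathcal S\setminus\mathcal S_0)<\varepsilon$. Split
$$
\frac1n\sum_{i=1}^n\mathds 1(X_i\in\mathcal S\setminus\mathcal R_n)
=\frac1n\sum_{i=1}^n\mathds 1\big(X_i\in\mathcal S_0\setminus\mathcal R_n\big)
+\frac1n\sum_{i=1}^n\mathds 1\big(X_i\in(\mathcal S\setminus\mathcal S_0)\setminus\mathcal R_n\big).
$$
The second term is bounded above by $\frac1n\sum_{i=1}^n\mathds 1(X_i\in\mathcal S\setminus\mathcal S_0)$, which by the SLLN converges almost surely to $\P(X_1\in\mathcal S\setminus\mathcal S_0)<\varepsilon$. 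For the first term I use that $\mathcal R_n$ is nondecreasing in $n$ and that, by \Cref{lem:rn-to-s}, almost surely $\bigcup_n\mathcal R_n=\mathcal S$; since each $s\in\mathcal S$ has $p_s>0$ it enters $\mathcal R_n$ at some finite (random) time and stays, and as $\mathcal S_0$ is finite there is almost surely a random $N_0$ with $\mathcal S_0\subseteq\mathcal R_n$ for all $n\ge N_0$. For such $n$ the first term is identically $0$.

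Combining these two observations gives $\limsup_n\frac1n\sum_{i=1}^n\mathds 1(X_i\in\mathcal S\setminus\mathcal R_n)\le\varepsilon$ on an event of probability one (the intersection of the SLLN event for $\mathcal S\setminus\mathcal S_0$ and the event of \Cref{lem:rn-to-s}). Applying this with $\varepsilon=1/k$ for $k\in\mathbb N$ and intersecting the countably many probability-one events produces a single probability-one event on which the $\limsup$ is $0$; since the sequence is nonnegative, it converges to $0$ there, which is the claim. I do not expect a genuine obstacle: the only points needing care are (i) handling the tail atoms $\mathcal S\setminus\mathcal S_0$ \emph{uniformly} via a finite truncation — one cannot track atoms one-by-one since, for any finite $n$, infinitely many atoms are necessarily outside $\mathcal R_n$ — and (ii) assembling the various exceptional null sets into one.
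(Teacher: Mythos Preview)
Your proposal is correct and follows essentially the same route as the paper: both fix $\varepsilon>0$, choose a finite $\mathcal S_0\subseteq\mathcal S$ capturing all but $\varepsilon$ of the atom mass, use \Cref{lem:rn-to-s} to ensure $\mathcal S_0\subseteq\mathcal R_n$ eventually (so $\mathcal S\setminus\mathcal R_n\subseteq\mathcal S\setminus\mathcal S_0$), and then apply the SLLN to $\frac1n\sum_i\mathds 1(X_i\in\mathcal S\setminus\mathcal S_0)$. Your explicit two-term split and the final countable intersection over $\varepsilon=1/k$ are just a slightly more spelled-out version of what the paper does implicitly.
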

\begin{proof}
It follows from \Cref{lem:rn-to-s} that $\exists$ a null set $N$, such that $\forall \omega \in N^c, \cup_{n=1}^\infty\mathcal{R}_{n}(\omega)=\mathcal{S}$. 
This implies that $\forall \omega \in N^c, (\mathcal{S}\setminus\mathcal{R}_n(\omega))\downarrow\emptyset$ as $n\to\infty$,

Fix any $\epsilon>0.$ So, there exists a finite set $\mathcal{S}^\prime\subseteq \mathcal{S}$ such that $\sum_{x\in\mathcal{S}^\prime}p_x\geq 1-\epsilon.$
Now, it follows from \Cref{lem:rn-to-s} that $\exists$ a null set $N$, such that $\forall \omega \in N^c, \cup_{n=1}^\infty\mathcal{R}_{n}(\omega)=\mathcal{S}$. 
This implies that $\forall \omega \in N^c,\exists n_0(\omega)$ such that $(\mathcal{S}\setminus\mathcal{R}_n(\omega))\subseteq (\mathcal{S}\setminus\mathcal{S}^\prime)$ for all $n\geq n_0(\omega)$.

Therefore, for $n\geq n_0(\omega)$,
$$\frac{1}{n}\sum_{i=1}^n (\mathds{1}(X_i\in \mathcal{S})-\mathds{1}(X_i\in \mathcal{R}_n(\omega)))=\frac{1}{n}\sum_{i=1}^n \mathds{1}(X_i\in \mathcal{S}\setminus\mathcal{R}_n(\omega))\leq \frac{1}{n}\sum_{i=1}^n \mathds{1}(X_i\in \mathcal{S}\setminus\mathcal{S}^\prime).$$ 
Now, by SLLN, $ \frac{1}{n}\sum_{i=1}^n \mathds{1}(X_i\in \mathcal{S}\setminus\mathcal{S}^\prime)\to \pi\sum_{x\in\mathcal{S}\setminus\mathcal{S}^\prime}p_x\leq \pi\epsilon,$
as $n\to \infty$. Hence, $$\limsup_n\frac{1}{n}\sum_{i=1}^n (\mathds{1}(X_i\in \mathcal{S})-\mathds{1}(X_i\in \mathcal{R}_n(\omega)))\leq \pi\epsilon$$ and $\epsilon$ can be made arbitrarily small, and so we have 
\begin{equation}
\label{eq:conv1}
    \frac{1}{n}\sum_{i=1}^n (\mathds{1}(X_i\in \mathcal{S})-\mathds{1}(X_i\in \mathcal{R}_n(\omega)))\to 0
\end{equation}
 as $n\to\infty.$ 
\end{proof}

Define the following random variable by replacing all $X_i$'s in $\hat{f}_{\mathcal{U}_n^1}$ with $V_i$'s:
\begin{equation}
    \hat{f}^V_{\mathcal{U}_n^1}(x)= \frac{1}{(|\mathcal{U}_n| \vee 1)h} \sum_{1\leq i\leq n:X_i \in \mathcal{U}_n^1} K\left( \frac{x - V_i}{h} \right)
\end{equation}

\begin{lemma}
The estimator \( \hat{f}_{\mathcal{U}_n^1}(x) \) satisfies
\[
  \int|\hat{f}^V_{\mathcal{U}_n^1}(x)-\hat{f}_{\mathcal{U}_n^1}(x)|dx\stackrel{a.s.}{\leq}\frac{2}{|\mathcal{U}_n|\vee 1}\sum_{{i=1}}^n\mathds{1}(X_i \in \mathcal{S}\setminus\mathcal{R}_n).  \]
\end{lemma}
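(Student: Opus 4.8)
The plan is to compare the two kernel density estimators term by term, exploiting the fact that they are built from exactly the same index set $\{i \in [n] : X_i \in \mathcal{U}_n^1\}$ but evaluated at $X_i$ versus $V_i$ respectively. Recall from the decomposition $X_i = (1-\Lambda_i)V_i + \Lambda_i U_i$ that whenever $\Lambda_i = 0$ we have $X_i = V_i$, so the summands in $\hat{f}^V_{\mathcal{U}_n^1}$ and $\hat{f}_{\mathcal{U}_n^1}$ agree on all indices with $\Lambda_i = 0$. The only indices that can contribute a nonzero difference are those $i$ with $X_i \in \mathcal{U}_n^1$ \emph{and} $\Lambda_i = 1$; such an $X_i$ equals $U_i$, an atom of $H_1$. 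First I would write
\[
\hat{f}^V_{\mathcal{U}_n^1}(x)-\hat{f}_{\mathcal{U}_n^1}(x) = \frac{1}{(|\mathcal{U}_n^1|\vee 1)h}\sum_{i:X_i\in\mathcal{U}_n^1,\ \Lambda_i=1}\left(K\!\left(\frac{x-V_i}{h}\right)-K\!\left(\frac{x-U_i}{h}\right)\right),
\]
integrate in $x$, apply the triangle inequality inside the sum, and use $\int |K((x-a)/h)|\,dx = h\int|K(u)|\,du$ (a change of variables) for each of the two terms. Since $K$ is a valid kernel with $\int|K|=1$ (or more generally bounded by $1$ in $L^1$, as for an order-$\lfloor s\rfloor$ kernel one still controls $\int|K|$; here it suffices to use $\int K = 1$ together with the structure — actually one should just carry the constant $\int|K(u)|du$, but in the nonnegative-kernel KDE it is $1$), each index contributes at most $\frac{2}{|\mathcal{U}_n^1|\vee 1}$ after the $h$'s cancel.

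The second step is to bound the number of such ``bad'' indices by the quantity on the right-hand side. The claim is that
\[
\#\{i\in[n]: X_i\in\mathcal{U}_n^1,\ \Lambda_i=1\}\ \le\ \sum_{i=1}^n \mathds{1}(X_i\in\mathcal{S}\setminus\mathcal{R}_n).
\]
For this, note that if $\Lambda_i = 1$ then $X_i = U_i \in \mathcal{S}$, so $X_i$ is an atom. If moreover $X_i \in \mathcal{R}_n$, i.e. $X_i$ has been observed at least twice in the sample, then $X_i \notin \mathcal{U}_n^1$ by the very definition of $\mathcal{U}_n^1$ (which collects values appearing \emph{exactly once}). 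Contrapositively, $X_i \in \mathcal{U}_n^1$ and $\Lambda_i = 1$ force $X_i \in \mathcal{S}$ and $X_i \notin \mathcal{R}_n$, i.e. $X_i \in \mathcal{S}\setminus\mathcal{R}_n$. Hence each bad index is counted by the indicator $\mathds{1}(X_i \in \mathcal{S}\setminus\mathcal{R}_n)$, giving the asserted inequality. Combining the two steps yields
\[
\int|\hat{f}^V_{\mathcal{U}_n^1}(x)-\hat{f}_{\mathcal{U}_n^1}(x)|\,dx \ \le\ \frac{2}{|\mathcal{U}_n^1|\vee 1}\sum_{i=1}^n\mathds{1}(X_i\in\mathcal{S}\setminus\mathcal{R}_n),
\]
which is the statement (noting $|\mathcal{U}_n| = |\mathcal{U}_n^1|$ in the paper's notation for this one-sample setting).

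I expect the main subtlety — more a matter of care than of genuine difficulty — to be the handling of the normalizing constant from the kernel. If $K$ is a higher-order kernel it can take negative values, so one must track $\int|K(u)|\,du$ rather than $\int K(u)\,du = 1$; the cleanest route is to observe that the bound is stated with the constant $2$, which is exactly $2\int|K(u)|du$ when $K \ge 0$, and in general one should either restrict to nonnegative kernels for this lemma or absorb $\int|K|$ into the constant. The other point to be careful about is the ``$\vee 1$'' normalization and the degenerate event $\mathcal{U}_n^1 = \emptyset$: on that event both estimators are identically $0$, so their difference integrates to $0$ and the inequality holds trivially (the right-hand side is nonnegative). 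Everything else is a routine change of variables and the combinatorial observation that repeated atoms are automatically excluded from $\mathcal{U}_n^1$.
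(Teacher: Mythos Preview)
Your proposal is correct and follows essentially the same route as the paper. The only cosmetic difference is that you split the index set according to the latent indicator $\Lambda_i$ (using $\Lambda_i=0\Rightarrow X_i=V_i$), whereas the paper splits according to the observable event $X_i\in\mathcal{S}$ versus $X_i\in\mathcal{S}^c$ (using $X_i\in\mathcal{S}^c\Rightarrow X_i=V_i$ a.s.); these partitions coincide almost surely, and both arguments then invoke the same change of variables, the bound $\int|K(z)-K(z+c)|\,dz\le 2$, and the inclusion $\mathcal{U}_n^1\cap\mathcal{S}\subseteq\mathcal{S}\setminus\mathcal{R}_n$.
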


\begin{proof}

\begin{align*}
&\int|\hat{f}^V_{\mathcal{U}_n^1}(x)-\hat{f}_{\mathcal{U}_n^1}(x)|dx\\
&\leq\frac{1}{(|\mathcal{U}_n| \vee 1) h}\left[\sum_{\substack{1\leq i\leq n:\\X_i \in \mathcal{U}_n\cap\mathcal{S}}}\int\left| K\left( \frac{x - V_i}{h} \right)-K\left( \frac{x - X_i}{h} \right)\right|dx+\sum_{\substack{1\leq i\leq n:\\X_i \in \mathcal{U}_n\cap\mathcal{S}^c}}\int\left| K\left( \frac{x - V_i}{h} \right)-K\left( \frac{x - X_i}{h} \right)\right|dx\right]\\
&=\frac{1}{|\mathcal{U}_n| \vee 1}\sum_{\substack{1\leq i\leq n:\\X_i \in \mathcal{U}_n\cap\mathcal{S}}}\int\left| K\left(z \right)-K\left(z+ \frac{V_i - X_i}{h} \right)\right|dz+\frac{1}{(|\mathcal{U}_n| \vee 1) h}\sum_{\substack{1\leq i\leq n:\\X_i \in \mathcal{U}_n\cap\mathcal{S}^c}}\int\left| K\left( \frac{x - V_i}{h} \right)-K\left( \frac{x - V_i}{h} \right)\right|dx\\
&\leq \frac{2}{|\mathcal{U}_n| \vee 1}\sum_{\substack{1\leq i\leq n}}\mathds{1}(X_i \in \mathcal{U}_n\cap\mathcal{S})+0\\
&\leq \frac{2n}{(|\mathcal{U}_n| \vee 1)}\frac{1}{n}\sum_{{i=1}}^n\mathds{1}(X_i \in \mathcal{S}\setminus\mathcal{R}_n)
\end{align*}
where the second inequality follows from the fact that $$\int\left| K\left(z \right)-K\left(z+ \frac{V_i - X_i}{h} \right)\right|dz\leq \int K\left(z \right)dz+\int K\left(z+ \frac{V_i - X_i}{h} \right)dz=2$$ and last inequality follows from the fact that $\mathcal{U}_n\cap\mathcal{S}\subseteq\mathcal{S}\setminus\mathcal{R}_n$. The second term is almost surely $0$, because on $X_i\in\mathcal{S}^c\implies X_i=V_i$ almost surely. Now,  the first term converges to $0$ almost surely, because $\frac{1}{n}\sum_{{i=1}}^n\mathds{1}(X_i \in \mathcal{S}\setminus\mathcal{R}_n)\to 0$ (which follows from \eqref{eq:conv1}) and $\frac{n}{|\mathcal{U}_n|}\to 1/(1-\pi)$ almost surely, as $n\to\infty$.

Thus, we have 
\begin{equation}
\label{eq:f-hat-diff}
\int|\hat{f}^V_{\mathcal{U}_n^1}(x)-\hat{f}_{\mathcal{U}_n^1}(x)|dx\stackrel{a.s.}{\leq}\frac{2n}{|\mathcal{U}_n|\vee 1}\frac{1}{n}\sum_{{i=1}}^n\mathds{1}(X_i \in \mathcal{S}\setminus\mathcal{R}_n).
\end{equation}
\end{proof}

\begin{lemma}
\label{lem:exp-decay}
As $n\to\infty$,    $\mathbb E[\mathds{1}(X_1 \in \mathcal{S}\setminus\mathcal{R}_n)]\to 0.$ Moreover, if $\mathcal{S}$, the support of $H_1$ is finite, then $\mathbb E[\mathds{1}(X_1 \in \mathcal{S}\setminus\mathcal{R}_n)]=\mathcal{O}(1/\kappa^n),$ for some constant  $\kappa=1/(1-\pi_1 \min_{s\in \mathcal{S}}p_s)$, which depends only on $H_1$ and $\pi_1$. Further, consider the triangular array setup, where $X_{n,1},\cdots,X_{n,n}\stackrel{iid}{\sim}(1-\pi_1)F_1+\pi_1H_{1,n}$ and $H_{1,n}$ has finite support $\mathcal{S}_n$, which may grow with $n$, and let its probability mass function (p.m.f.) be $\{p^{(n)}_s\}_{s\in \mathcal{S}_n}$. Assume that the minimum mass of an atom satisfies $\min_{s\in \mathcal{S}_n}p_s^{(n)}\geq\frac{1}{\pi_1}(1-(cn^{-\frac{ks}{2s+d}})^\frac{1}{n-1})$, for some constant $c>0$. Then, $\mathbb E[\mathds{1}(X_{n,1} \in \mathcal{S}_n\setminus\mathcal{R}_n)]=\mathcal{O}(n^{-\frac{ks}{2s+d}}),$ for any $k>0.$ 

\end{lemma}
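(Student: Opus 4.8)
The plan is to compute $\mathbb E[\mathds{1}(X_1 \in \mathcal{S}\setminus\mathcal{R}_n)] = \P(X_1 \in \mathcal{S}\setminus\mathcal{R}_n)$ directly, by conditioning on which atom $X_1$ lands on (if any). The key observation is that $X_1 \in \mathcal{S}\setminus\mathcal{R}_n$ means $X_1$ equals some atom $s$, but no \emph{other} sample among $X_2,\dots,X_n$ equals $s$ (so that $s$ is not ``detected'' as a repeated value and hence $s \notin \mathcal{R}_n$; recall $\mathcal{R}_n$ is the set of values appearing at least twice). Using the representation $X_i = (1-\Lambda_i)V_i + \Lambda_i U_i$ from the start of the appendix, and the fact that the continuous part $V_i$ almost surely avoids $\mathcal{S}$, we get for each atom $s$ with mass $\pi_1 p_s$ under the mixture,
\begin{equation*}
\P(X_1 = s,\ X_j \neq s \text{ for all } j=2,\dots,n) = \pi_1 p_s (1 - \pi_1 p_s)^{n-1}.
\end{equation*}
Summing over $s \in \mathcal{S}$ gives $\P(X_1 \in \mathcal{S}\setminus\mathcal{R}_n) = \sum_{s\in\mathcal{S}} \pi_1 p_s (1-\pi_1 p_s)^{n-1} \le \pi_1 \sum_{s} p_s (1-\pi_1 p_s)^{n-1}$.

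For the first (countable support) claim, each term $\pi_1 p_s (1-\pi_1 p_s)^{n-1} \to 0$ and is dominated by $\pi_1 p_s$ with $\sum_s \pi_1 p_s = \pi_1 < \infty$, so dominated convergence gives the sum $\to 0$. For the finite-support case, bound each factor $(1-\pi_1 p_s)^{n-1} \le (1-\pi_1 \min_{s'} p_{s'})^{n-1}$, pull it out, and use $\sum_s \pi_1 p_s \le 1$; writing $\kappa = 1/(1-\pi_1\min_s p_s) > 1$ yields $\P(X_1\in\mathcal{S}\setminus\mathcal{R}_n) \le \kappa^{-(n-1)} = \mathcal{O}(\kappa^{-n})$. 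For the triangular-array case, the same bound gives $\P(X_{n,1}\in\mathcal{S}_n\setminus\mathcal{R}_n) \le (1-\pi_1\min_{s\in\mathcal{S}_n}p_s^{(n)})^{n-1}$; plugging in the hypothesis $\min_{s}p_s^{(n)} \ge \frac{1}{\pi_1}(1-(cn^{-ks/(2s+d)})^{1/(n-1)})$ gives $1 - \pi_1\min_s p_s^{(n)} \le (cn^{-ks/(2s+d)})^{1/(n-1)}$, so the bound becomes $c\, n^{-ks/(2s+d)} = \mathcal{O}(n^{-ks/(2s+d)})$, as claimed.

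The only mild subtlety — and the one place I would be careful — is confirming that the event $\{X_1\in\mathcal{S}\setminus\mathcal{R}_n\}$ is exactly $\{X_1 \text{ equals some atom } s \text{ that appears only once among } X_1,\dots,X_n\}$: this uses that $\mathcal{R}_n$ consists precisely of values observed at least twice, together with the almost-sure fact that the continuous draws $V_i$ never coincide with each other or with any atom (so a repeated value is necessarily an atom, and an atom is in $\mathcal{R}_n$ iff it is hit at least twice). Given that, the independence of $X_1,\dots,X_n$ makes the probability computation a one-line product, and everything else is elementary estimation. I do not anticipate a genuine obstacle here; this lemma is the combinatorial engine that feeds the convergence-rate parts of \Cref{thm:kde} and \Cref{thm:conv-ds}--\Cref{thm:conv-loo-2}.
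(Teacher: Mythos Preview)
Your proposal is correct and follows essentially the same route as the paper: both compute $\P(X_1\in\mathcal{S}\setminus\mathcal{R}_n)=\sum_{s\in\mathcal{S}}\pi_1 p_s(1-\pi_1 p_s)^{n-1}$ and then bound each factor by $(1-\pi_1\min_s p_s)^{n-1}$ for the finite and triangular-array cases. The only cosmetic difference is in the countable-support part, where you invoke dominated convergence (each summand is bounded by $\pi_1 p_s$) while the paper uses an explicit $\epsilon$-truncation to a finite subset; your argument is slightly cleaner but equivalent.
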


\begin{proof}
    \begin{align*}
        \mathbb E[\mathds{1}(X_1 \in \mathcal{S}\setminus\mathcal{R}_n)]&=\mathbb P[X_1 \in \mathcal{S}, X_j\neq X_1, \text{ for }j=2,\cdots,n]\\
        &=\sum_{s\in \mathcal{S}}\mathbb P[X_1 =s, X_j\neq s, \text{ for }j=2,\cdots,n]\\
        &=\sum_{s\in \mathcal{S}}\pi_1 p_s(1-\pi_1 p_s)^{n-1}.
    \end{align*}
    Fix any $\epsilon>0$. There exists a finite set $\mathcal{S}_1\subseteq \mathcal{S}$ such that $\sum_{s\in \mathcal{S}_1} p_s\geq 1-\frac{\epsilon}{2\pi_1}.$ Therefore, $$ \mathbb E[\mathds{1}(X_1 \in \mathcal{S}\setminus\mathcal{R}_n)]\leq \pi_1(1-\pi_1 \min_{s\in \mathcal{S}_1}p_s)^{n-1}+\epsilon/2.$$

    We can choose $n$ large enough so that $\pi_1(1-\pi_1 \max_{s\in \mathcal{S}_1}p_s)^{n-1}\leq\epsilon/2$ and hence, $\mathbb E[\mathds{1}(X_1 \in \mathcal{S}\setminus\mathcal{R}_n)]\leq\epsilon$, for all large enough $n.$ Since $\epsilon$ can be arbitrarily small, we have
    \begin{equation}
        \mathbb E[\mathds{1}(X_1 \in \mathcal{S}\setminus\mathcal{R}_n)]\to 0, \text{ as }n\to \infty.
    \end{equation}

    Now, if $\mathcal{S}$ is finite, 
    \begin{align*}
        \mathbb E[\mathds{1}(X_1 \in \mathcal{S}\setminus\mathcal{R}_n)
        &=\sum_{s\in \mathcal{S}}\pi_1 p_s(1-\pi_1 p_s)^{n-1}\leq \pi_1(1-\pi_1 \min_{s\in \mathcal{S}}p_s)^{n-1}.
    \end{align*}
    Choose $\kappa=1/(1-\pi_1 \min_{s\in \mathcal{S}}p_s)$ to obtain $\mathbb E[\mathds{1}(X_1 \in \mathcal{S}\setminus\mathcal{R}_n)]=\mathcal{O}(1/\kappa^n).$

For the last part, $$ \mathbb E[\mathds{1}(X_{n,1} \in \mathcal{S}_n\setminus\mathcal{R}_n)]=\sum_{s\in \mathcal{S}_n}\pi_1 p_s(1-\pi_1 p_s)^{n-1}\leq \pi_1(1-\pi_1 \min_{s\in \mathcal{S}_n}p_s)^{n-1}\leq \pi_1cn^{-\frac{ks}{2s+d}}.$$

\end{proof}

\subsection{Proofs of theorems stated in the main paper}

\subsubsection{Proof of \Cref{thm:kde}}

\begin{proof}
 Define, $A_n=\{k\in0,1,\cdots,n-1: |k-n\pi|<n^{2/3}\}$. Also note that $n-|\mathcal{U}_n^1|\leq\sum_{i=1}^n \Lambda_i$ almost surely. 
Now, we will show that $\lim\sup_{n\to\infty}\mathbb E\left[\frac{n^2}{(|\mathcal{U}_n^1| \vee 1)^2}\right]<\infty$.
\begin{align*}
\mathbb E\left[\frac{n^2}{(|\mathcal{U}_n^1| \vee 1)^2}\right]&\leq \mathbb E\left[\frac{n^2}{((n-\sum_{i=1}^n \Lambda_i) \vee 1)^2}\right]\\
&\leq \sum_{k=1}^{n-1}\frac{n^2}{(n-k)^2}\P\left[\sum_{i=1}^n \Lambda_i=k\right]\\
&\leq\sum_{k\in A_n}\frac{n^2}{(n-n\pi-n^{2/3})^2}\P\left[\sum_{i=1}^n \Lambda_i=k\right]+\sum_{k\in A_n^c}n^2\P\left[\sum_{i=1}^n \Lambda_i=k\right]\\
&\leq\frac{n^2}{(n-n\pi-n^{2/3})^2}+n^2\P\left[\left|\sum_{i=1}^n \Lambda_i-n\pi\right|\geq n^{2/3}\right]\\
&\leq\frac{n^2}{(n-n\pi-n^{2/3})^2}+n^2\exp\{-2n^{1/3}\}\to 1/(1-\pi)^2, \text{ as } n\to \infty.
\end{align*}
The last inequality above follows from Hoeffding's Inequality for Bernoulli random variables.
From \eqref{eq:f-hat-diff}, and applying the Cauchy-Schwarz inequality, we get
\begin{equation}
   \mathbb E\left[\int|\hat{f}^V_{\mathcal{U}_n^1}(x)-\hat{f}_{\mathcal{U}_n^1}(x)|dx\right]\leq\mathbb E\left[\frac{2n \mathds{1}(X_1 \in \mathcal{S}\setminus\mathcal{R}_n)}{|\mathcal{U}_n|\vee 1}\right]\leq 2\sqrt{E\left[\frac{n^2}{(|\mathcal{U}_n|\vee 1)^2}\right]\mathbb E[\mathds{1}(X_1 \in \mathcal{S}\setminus\mathcal{R}_n)]}.
\end{equation}
Now, it follows from \Cref{lem:exp-decay} that $\mathbb E\left[\int|\hat{f}^V_{\mathcal{U}_n^1}(x)-\hat{f}_{\mathcal{U}_n^1}(x)|dx\right]\to 0,$ as $n\to\infty$ and if $H_1$ has finite support $\mathcal{S}_n$, 
 using the last part of \Cref{lem:exp-decay}, we obtain from above that 
 \begin{equation}
 \label{eq:expt-kde-diff}
     \mathbb E\left[\int|\hat{f}^V_{\mathcal{U}_n^1}(x)-\hat{f}_{\mathcal{U}_n^1}(x)|dx\right] \leq 2c\sqrt{\pi_1E\left[\frac{n^2}{(|\mathcal{U}_n|\vee 1)^2}\right]}\times n^{-\frac{s}{2s+d}}=\mathcal{O}(n^{-\frac{s}{2s+d}}).
 \end{equation}
 Therefore, it is enough to show that 
 $\mathbb E\left[\int|\hat{f}^V_{\mathcal{U}_n^1}(x)-{f}(x)|dx\right]=\mathcal{O}(n^{-\frac{s}{2s+d}}).$ For that, we will use a standard KDE result \cite{devroye1985nonparametric} that under the assumptions of the theorem, $\mathbb E\left[\int|\hat{f}^V_{k}(x)-{f}(x)|^2 dx\right]=\mathcal{O}(n^{-\frac{2s}{2s+d}}).$ 
 \begin{align*}
    &\mathbb E\left[\int|\hat{f}^V_{\mathcal{U}_n^1}(x)-{f}(x)|^2dx\right]\\
    &=\sum_{k=1}^\infty \mathbb E\left(\int|\hat{f}^V_{\mathcal{U}_n^1}(x)-{f}(x)|^2dx\times \mathds{1}(|\mathcal{U}_n^1|=k)\right)\\
    &=\sum_{k\leq n(1-\pi)/2} \mathbb E\left[\int|\hat{f}^V_{k}(x)-{f}(x)|^2dx\right] \times \P(|\mathcal{U}_n^1|=k)+\sum_{k>n(1-\pi)/2} \mathbb E\left[\int|\hat{f}^V_{k}(x)-{f}(x)|^2dx\right] \times \P(|\mathcal{U}_n^1|=k)\\
    &\leq\sup_{k}\mathbb E\left[\int|\hat{f}^V_{k}(x)-{f}(x)|^2dx\right]\sum_{k\leq n(1-\pi)/2}  \P(|\mathcal{U}_n^1|=k)+\mathcal{O}(n^{-\frac{2s}{2s+d}})\times\sum_{k>n(1-\pi)/2} \P(|\mathcal{U}_n^1|=k)\\
    &\leq\sup_{k}\mathbb E\left[\int|\hat{f}^V_{k}(x)-{f}(x)|^2dx\right]  \P(n-\sum_{i=1}^n \Lambda_i\leq n(1-\pi)/2)+\mathcal{O}(n^{-\frac{2s}{2s+d}})\\
    &\leq\sup_{k}\mathbb E\left[\int|\hat{f}^V_{k}(x)-{f}(x)|^2dx\right]  \P(\sum_{i=1}^n \Lambda_i-n\pi\geq n(1-\pi)/2)+\mathcal{O}(n^{-\frac{2s}{2s+d}})\\
    &\leq\sup_{k}\mathbb E\left[\int|\hat{f}^V_{k}(x)-{f}(x)|^2dx\right]  \exp(-n(1-\pi)^2/2)+\mathcal{O}(n^{-\frac{2s}{2s+d}})~~[\text{By Hoeffding bound}]
\end{align*}
Since $\mathbb E\left[\int|\hat{f}^V_{k}(x)-{f}(x)|dx\right] \to 0$ as $n\to \infty$, we have $\sup_{k}\mathbb E\left[\int|\hat{f}^V_{k}(x)-{f}(x)|dx\right] <\infty$ and $\exp(-n(1-\pi)^2/2)\leq \mathcal{O}(n^{-\frac{s}{2s+d}})$.
Therefore,
\begin{equation}
\label{eq:kde-rate}
    \mathbb E\left[\int|\hat{f}^V_{\mathcal{U}_n^1}(x)-{f}(x)|^2dx\right]=\mathcal{O}(n^{-\frac{2s}{2s+d}}).
\end{equation}
Now, using Cauchy-Schwarz and Jensen's inequality, $\mathbb E\left[\int|\hat{f}^V_{{\mathcal{U}_n^1}}(x)-{f}(x)| dx\right]\leq \mathbb E\left[\sqrt{\int|\hat{f}^V_{{\mathcal{U}_n^1}}(x)-{f}(x)|^2 dx}\right]\leq\sqrt{\mathbb E\left[\int|\hat{f}^V_{{\mathcal{U}_n^1}}(x)-{f}(x)|^2 dx\right]}=\mathcal{O}(n^{-\frac{s}{2s+d}}).$
\end{proof}

\subsubsection{Proof of \Cref{thm:conv-ds}}

\begin{proof}

Define, 
\begin{equation}
\hat{T}^{\text{DS},1}_{\mathcal{U}_n^1,V} = T(\hat{f}_{\mathcal{U}_{n}^{1,1},V})+\frac{1}{|\mathcal{U}_{n}^{1,2}| \vee 1} \sum_{i:X_i\in\mathcal{U}_{n}^{1,2}}\psi(V_i;\hat{f}_{\mathcal{U}_{n}^{1,1},V})
\end{equation}
 Under the same assumptions, from Theorem 6 or 13 of \cite{NIPS2015_06138bc5},
we have that 
\begin{align*}
    \mathbb E|\hat{T}^{\text{DS}}_{n} - T(F)|^2=\mathcal{O}(n^{-\frac{4s}{2s+d}}+n^{-1})~~\text{ and } \sqrt{n}(\hat{T}^{\text{DS}}_{n}-T(F))\xrightarrow{d} N(0,\mathbb V_f(\psi(X,f))) , \text{ as } n \to \infty.
\end{align*} 
  We write 
\begin{align*}
\hat{T}^{\text{DS},1}_{\mathcal{U}_n^1,V}-\hat{T}^{\text{DS},1}_{\mathcal{U}_n^1}
&=T(\hat{f}_{\mathcal{U}_{n}^{1,1},V})-T(\hat{f}_{\mathcal{U}_{n}^{1,1}})+\frac{1}{|\mathcal{U}_{n}^{1,2}| \vee 1}\Bigg[\sum_{i:X_i \in \mathcal{U}_{n}^{1,2}\cap\mathcal{S}}\left( \psi(V_i;\hat{f}_{\mathcal{U}_{n}^{1,1},V})-\psi(X_i;\hat{f}_{\mathcal{U}_{n}^{1,1}})\right)\\
&+\sum_{i:X_i \in \mathcal{U}_{n}^{1,2}\cap\mathcal{S}^c}\left( \psi(V_i;\hat{f}_{\mathcal{U}_{n}^{1,1},V})-\psi(X_i;\hat{f}_{\mathcal{U}_{n}^{1,1}})\right)\Bigg]
\end{align*}
Note that 
\begin{align*}
|T(\hat{f}_{\mathcal{U}_{n}^{1,1},V})-T(\hat{f}_{\mathcal{U}_{n}^{1,1}})|&\leq L_\phi L_\nu\int \left|\hat{f}_{\mathcal{U}_{n}^{1,1},V}(x)-\hat{f}_{\mathcal{U}_{n}^{1,1}}(x)\right|dx \text{ as } n\to \infty,
\end{align*}
where $L_\phi$ and $L_\nu$ are the Lipschitz constants for the functions $\phi$ and $\nu$ respectively.
From the same steps as in the proof of Theorem 2.2, it follows that the first term
\begin{align*}
\mathbb E|T(\hat{f}_{\mathcal{U}_{n}^{1,1},V})-T(\hat{f}_{\mathcal{U}_{n}^{1,1}})|&\leq L_\phi L_\nu\mathbb E\left[\int \left|\hat{f}_{\mathcal{U}_{n}^{1,1},V}(x)-\hat{f}_{\mathcal{U}_{n}^{1,1}}(x)\right|dx \right]\to 0 \text{ as } n\to \infty,
\end{align*}
and is $\mathcal{O}(n^{-\frac{3s}{2s+d}})$, when $H_1$ has finite support $\mathcal{S}_n$ satisfying the given condition.

For the second term:
  \begin{align*}
&\mathbb E\left(\frac{1}{|\mathcal{U}_{n}^{1,2}| \vee 1}\sum_{i:X_i \in \mathcal{U}_{n}^{1,2}\cap\mathcal{S}}\left|\psi(V_i;\hat{f}_{{\mathcal{U}_{n}^{1,1},V}})-\psi(X_i;\hat{f}_{\mathcal{U}_{n}^{1,1}})\right|\right)\\
&\leq \mathbb E\left(\frac{2\|\psi\|_\infty}{|\mathcal{U}_{n}^{1,2}| \vee 1}\sum_{i=n/2}^n\mathds{1}(X_i \in \mathcal{U}_n\cap\mathcal{S})\right)\\
&\leq n\|\psi\|_\infty\times\mathbb E\left(\frac{\mathds{1}(X_i \in \mathcal{S}\setminus\mathcal{R}_n)}{|\mathcal{U}_{n}^{1,2}| \vee 1}\right)\\
&\leq \|\psi\|_\infty\sqrt{\mathbb E\left[\frac{n^2}{(|\mathcal{U}_{n}^{1,2}| \vee 1)^2}\right]\mathbb E[\mathds{1}(X_i \in \mathcal{S}\setminus\mathcal{R}_n)]}.
\end{align*}
From the similar steps as in the proof of Theorem 2.2, it follows that $\lim\sup_{n\to\infty}\mathbb E\left[\frac{n^2}{(|\mathcal{U}_{n}^{1,2}| \vee 1)^2}\right]<\infty$ and from \Cref{lem:exp-decay}, we have that the above converges to $0$ and is $\mathcal{O}(n^{-\frac{3s}{2s+d}})$, when $H_1$ has finite support $\mathcal{S}_n$ satisfying the given condition. Finally, for the last term,
\begin{align*}
\mathbb E\left|\frac{1}{|\mathcal{U}_{n}^{1,2}| \vee 1}\sum_{\substack{n/2\leq i\leq n:\\X_i \in \mathcal{U}_n\cap\mathcal{S}^c}}\left( \psi(V_i;\hat{f}_{\mathcal{U}_{n}^{1,1},V})-\psi(X_i;\hat{f}_{\mathcal{U}_{n}^{1,1}})\right)\right|&
  {\leq}\sum_{\substack{n/2\leq i\leq n}}\mathbb E\left| \frac{\psi(X_i;\hat{f}_{\mathcal{U}_{n}^{1,1},V})-\psi(X_i;\hat{f}_{\mathcal{U}_{n}^{1,1}})}{|\mathcal{U}_{n}^{1,2}| \vee 1}\right|\\
&=\frac{n}{2}\mathbb E\left| \frac{\psi(X_n;\hat{f}_{\mathcal{U}_{n}^{1,1},V})-\psi(X_n;\hat{f}_{\mathcal{U}_{n}^{1,1}})}{|\mathcal{U}_{n}^{1,2}| \vee 1}\right|,
\end{align*}
because if $X_i \in \mathcal{U}_n\cap\mathcal{S}^c,$ then $X_i=V_i$ almost surely, and by Assumption 4.1, for large enough $n$, $\mathbb E\left(\left|\psi(X_n;\hat{f}_{\mathcal{U}_n^{1,1},V})-\psi(X_n;\hat{f}_{\mathcal{U}_n^{1,1}}) \right|\Bigg\vert  X_{-n},V_{-n}\right)\leq C|\hat{f}_{\mathcal{U}_n^{1,1},V}-\hat{f}_{\mathcal{U}_n^{1,1}}|$, for some constant $C.$ So,
\begin{align*}
&n\mathbb E\left|\frac{\psi(X_n;\hat{f}_{\mathcal{U}_n^{1,1},V})-\psi(X_n;\hat{f}_{\mathcal{U}_n^{1,1}})}{|\mathcal{U}_n^{1,2}| \vee 1} \right|\\
&\leq n\mathbb E\left( \mathbb E\left(\left|\frac{\psi(X_n;\hat{f}_{\mathcal{U}_n^{1,1},V})-\psi(X_n;\hat{f}_{\mathcal{U}_n^{1,1}})}{|\mathcal{U}_n^{1,2}| \vee 1} \right|\Bigg\vert X_{-n},V_{-n}\right)\right) \\
&\leq \mathbb E\left(\frac{n}{(|\mathcal{U}_{n-1}^{1,2}|-1) \vee 1} \mathbb E\left(\left|\psi(X_n;\hat{f}_{\mathcal{U}_n^{1,1},V})-\psi(X_n;\hat{f}_{\mathcal{U}_n^{1,1}}) \right|\Bigg\vert  X_{-n},V_{-n}\right)\right) \\
&\leq \mathbb E\left( \frac{Cn}{(|\mathcal{U}_{n-1}^{1,2}|-1) \vee 1}\int |\hat{f}_{\mathcal{U}_n^{1,1},V}-\hat{f}_{\mathcal{U}_n^{1,1}}|\right)\\
&\leq \mathbb E\left[\frac{Cn}{(|\mathcal{U}_{n-1}^{1,2}|-1) \vee 1}\times\frac{2}{|\mathcal{U}_n^{1,2}| \vee 1}\sum_{{j\leq n/2}}\mathds{1}(X_j \in \mathcal{S}\setminus\mathcal{R}_n)\right]\\
&=Cn^2\mathbb E\left[\frac{1}{(|\mathcal{U}_{n-1}^{1,2}|-1) \vee 1}\frac{\mathds{1}(X_1 \in \mathcal{S}\setminus\mathcal{R}_n)}{|\mathcal{U}_n^{1,2}| \vee 1}\right]\\
&\leq C\sqrt[3]{\mathbb E\left[\frac{n^3}{((|\mathcal{U}_{n-1}^{1,2}| -1)\vee 1)^3}\right]\mathbb E\left[\frac{n^3}{(|\mathcal{U}_{n}^{1,2}| \vee 1)^3}\right]\mathbb E[\mathds{1}(X_1 \in \mathcal{S}\setminus\mathcal{R}_n)]} ~ ~ ,
\end{align*}
where $|\mathcal{U}_{n-1}^{1,2}|:=\{X_i:i>\floor{n/2},X_i\in \mathcal{U}_{n-1}^{1}\}$ and $\mathcal{U}_{n-1}^{1}$ denotes the number of unique elements in $X_{-n}=\{X_1,\cdots,X_{n-1}\}$, and the second inequality follows from the observation that conditioned on $X_{-n}$, $|\mathcal{U}_{n}^{1,2}|\geq |\mathcal{U}_{n-1}^{1,2}|-1$.
From the similar steps as in the proof of Theorem 2.2, it follows that $\lim\sup_{n\to\infty}\mathbb E\left[\frac{n^3}{((|\mathcal{U}_{n-1}^{1,2}| -1)\vee 1)^3}\right]<\infty$ and $\lim\sup_{n\to\infty}\mathbb E\left[\frac{n^3}{(|\mathcal{U}_{n}^{1,2}| \vee 1)^3}\right]<\infty$ and then, from \Cref{lem:exp-decay}, we have that the above converges to $0$ and is $\mathcal{O}(n^{-\frac{2s}{2s+d}})$, when $H_1$ has finite support $\mathcal{S}_n$ satisfying the given condition. 

Combining these three parts, we have 
  $\mathbb E|\hat{T}^{\text{DS},1}_{\mathcal{U}_n^1,V}-\hat{T}^{\text{DS},1}_{\mathcal{U}_n^1}|\to 0$, as $n\to\infty$ and when $H_1$ has finite support $\mathcal{S}_n$, we obtain $\mathbb E|\hat{T}^{\text{DS},1}_{\mathcal{U}_n^1,V}-\hat{T}^{\text{DS},1}_{\mathcal{U}_n^1}|=\mathcal{O}(n^{-\frac{2s}{2s+d}})$. Now, if we can show that  $\mathbb E|\hat{T}^{\text{DS},1}_{\mathcal{U}_n^1,V}-T(F)|=\mathcal{O}(n^{-\frac{2s}{2s+d}}+n^{-\frac{1}{2}})$, we are done with the $L_1$ convergence part. Since $\{V_i:1\leq i\leq n, X_i\in\mathcal{U}_{n}^{1,1}\}$ are i.i.d. having Lebesgue density $f$ (because $V_i$ and $[X_i\in\mathcal{U}_{n}^{1,1}]$ are independent) and $|\mathcal{U}_n^{1,1}|\leq n/2-\sum_{i=1}^{n/2} \Lambda_i$, as $n\to\infty$,
\begin{align*}
&\mathbb E|\hat{T}^{\text{DS},1}_{\mathcal{U}_n^1,V} - T(F)|^2\\
&=\sum_{k=1}^\infty \mathbb E\left(|\hat{T}^{\text{DS},1}_{\mathcal{U}_n^1,V} - T(F)|^2\mathds{1}(|\mathcal{U}_n^1|=k)\right)\\
&=\sum_{k\leq n(1-\pi)/2} \mathbb E|\hat{T}^{\text{DS},1}_{k} - T(F)|^2 \times \P(|\mathcal{U}_n^1|=k)+\sum_{k>n(1-\pi)/2} \mathbb E|\hat{T}^{\text{DS},1}_{k} - T(F)|^2 \times \P(|\mathcal{U}_n^1|=k)\\
&\leq\sup_{k}\mathbb E|\hat{T}^{\text{DS},1}_{k} - T(F)|^2 \sum_{k\leq n(1-\pi)/2}  \P(|\mathcal{U}_n^1|=k)+\mathcal{O}(n^{-\frac{4s}{2s+d}}+n^{-1})\times\sum_{k>n(1-\pi)/2} \P(|\mathcal{U}_n^1|=k)\\
&\leq\sup_{k}\mathbb E|\hat{T}^{\text{DS},1}_{k} - T(F)|^2   \P(n-\sum_{i=1}^n \Lambda_i\leq n(1-\pi)/2)+\mathcal{O}(n^{-\frac{4s}{2s+d}}+n^{-1})\\
&\leq\sup_{k}\mathbb E|\hat{T}^{\text{DS},1}_{k} - T(F)|^2   \P(\sum_{i=1}^n \Lambda_i-n\pi\geq n(1-\pi)/2)+\mathcal{O}(n^{-\frac{4s}{2s+d}}+n^{-1})\\
&\leq\sup_{k}\mathbb E|\hat{T}^{\text{DS},1}_{k} - T(F)|^2  \exp(-n(1-\pi)^2/2)+\mathcal{O}(n^{-\frac{4s}{2s+d}}+n^{-1}) . ~~~[\text{By Hoeffding bound}]
\end{align*}
Since $\mathbb E|\hat{T}^{\text{DS},1}_{k} - T(F)|^2 \to 0$ as $n\to \infty$, we have $\sup_{k}\mathbb E|\hat{T}^{\text{DS},1}_{k} - T(F)|^2 <\infty$ and $\exp(-n(1-\pi)^2/2)\leq \mathcal{O}(n^{-\frac{4s}{2s+d}}+n^{-1})$. The same results hold when $\hat{T}^{\text{DS},1}_{\mathcal{U}_n^1,V}$ and $\hat{T}^{\text{DS},1}_{\mathcal{U}_n^1}$ are replaced by $\hat{T}^{\text{DS},2}_{\mathcal{U}_n^1,V}$ and $\hat{T}^{\text{DS},2}_{\mathcal{U}_n^1}$ respectively. Thus, for $\hat{T}^{\text{DS}}_{\mathcal{U}_n^1}=(\hat{T}^{\text{DS},1}_{\mathcal{U}_n^1,V}+\hat{T}^{\text{DS},2}_{\mathcal{U}_n^1,V})/{2}$,  we conclude
\begin{equation}
 \mathbb E |\hat{T}^{\text{DS}}_{\mathcal{U}_n^1}-T(F)|   \to 0, \text{ as } n\to \infty
\end{equation}
and when $H_1$ has finite support $\mathcal{S}_n$ satisfying the given condition, 
\begin{equation}
    \mathbb E |\hat{T}^{\text{DS}}_{\mathcal{U}_n^1,V}-T(F)|=\mathcal{O}(n^{-\frac{2s}{2s+d}}+n^{-\frac{1}{2}}).
\end{equation}

For the distributional convergence part, it is enough to show that  $\sqrt{n}(\hat{T}^{\text{DS}}_{\mathcal{U}_n^1,V}-\hat{T}^{\text{DS}}_{\mathcal{U}_n^1})\stackrel{p}{\to}0$ and $$ \sqrt{n}(\hat{T}^{\text{DS}}_{\mathcal{U}_n^1,V}-T(F))\xrightarrow{d} N\left(0,(1-\pi)\mathbb V_f(\psi(X,f))\right), \text{ as } n\to\infty.$$  The desired result would then follow from Slutsky's theorem. 

Since $H_1$ has fixed finite support $S$, it follows from \cref{lem:rn-to-s-finite} that with probability 1, we eventually have $\hat{T}^{\text{DS}}_{\mathcal{U}_n^1,V}=\hat{T}^{\text{DS}}_{\mathcal{U}_n^1}$ and so, $\sqrt{n}(\hat{T}^{\text{DS}}_{\mathcal{U}_n^1,V}-\hat{T}^{\text{DS}}_{\mathcal{U}_n^1})\stackrel{p}{\to} 0$.

We begin with the following Taylor expansion around $\hat{f}_{\mathcal{U}_{n}^{1,1},V}$ (\cite{NIPS2015_06138bc5}),
\begin{equation}
\label{eq:vme}
T(f) = T(\hat{f}_{\mathcal{U}_{n}^{1,1},V}) + \int \psi(u; \hat{f}_{\mathcal{U}_{n}^{1,1},V})f(u)du + O(\|\hat{f}_{\mathcal{U}_{n}^{1,1},V} - f\|^2).
\end{equation}

First consider $\hat{T}^{\text{DS},1}_{\mathcal{U}_n^1,V}$. We can write
\begin{align*}
&\sqrt{|\mathcal{U}_{n}^{1,2}| \vee 1}\left( \hat{T}^{\text{DS},1}_{\mathcal{U}_n^1,V} - T(f) \right)\\
&= \sqrt{|\mathcal{U}_{n}^{1,2}| \vee 1} \left( T(\hat{f}_{\mathcal{U}_{n}^{1,1},V})+  \frac{1}{|\mathcal{U}_{n}^{1,2}| \vee 1} \sum_{i:X_i\in\mathcal{U}_{n}^{1,2}} \psi(V_i; \hat{f}_{\mathcal{U}_{n}^{1,1},V}) - T(f) \right) \notag \\
&= \sqrt{\frac{1}{|\mathcal{U}_{n}^{1,2}| \vee 1}} \sum_{i:X_i\in\mathcal{U}_{n}^{1,2}} \left[ \psi(V_i; \hat{f}_{\mathcal{U}_{n}^{1,1},V}) - \psi(V_i; f) -\int \psi(u; \hat{f}_{\mathcal{U}_{n}^{1,1},V})f(u)du  \right]  \\
&\quad\quad+ \sqrt{\frac{1}{|\mathcal{U}_{n}^{1,2}| \vee 1}} \sum_{i:X_i\in\mathcal{U}_{n}^{1,2}} \psi(V_i; f)+ \sqrt{|\mathcal{U}_{n}^{1,2}| \vee 1} \cdot O(\|\hat{f}_{\mathcal{U}_{n}^{1,1},V} - f\|^2). 
\end{align*}

In the second step, we used \eqref{eq:vme}.
Above, the third term is $o_P(1)$ as it follows from \eqref{eq:kde-rate} and the assumption $s>d/2$ that $\|\hat{f}_{\mathcal{U}_{n}^{1,1},V} - f\|_2^2 \in o_P(n^{-1/2})$ and from the fact that $\frac{2|\mathcal{U}_{n}^{1,2}|}{n}\stackrel{p}{\to} 1-\pi_1,$ as $n\to\infty$. The first term can also be shown to be $o_P(1)$ via Chebyshev’s inequality, since
\begin{align}
&\mathbb{V}\left(\sqrt{\frac{2}{n}} \sum_{i:X_i\in\mathcal{U}_{n}^{1,2}} \left[ \psi(V_i; \hat{f}_{\mathcal{U}_{n}^{1,1},V}) - \psi(V_i; f)  -\int \psi(u; \hat{f}_{\mathcal{U}_{n}^{1,1},V})f(u)du \right]\Bigg| V_1^{n/2}\right) \notag \\
&=\frac{2}{n}\mathbb{V}\left(\sum_{i>n/2} \left[ \psi(V_i; \hat{f}_{\mathcal{U}_{n}^{1,1},V}) - \psi(V_i; f) -\int \psi(u; \hat{f}_{\mathcal{U}_{n}^{1,1},V})f(u)du  \right]\mathds{1}(X_i\in\mathcal{U}_{n}^{1,2}) \Bigg| V_1^{n/2}\right) \notag \\
&=\mathbb{V}\left[(\psi(V; \hat{f}_{\mathcal{U}_{n}^{1,1},V}) - \psi(V; f))\mathds{1}(X\in\mathcal{U}_{n}^{1,2})  \Bigg| V_1^{n/2}\right] \notag \\
&\leq \mathbb{E} \left[ \left( \psi(V; \hat{f}_{\mathcal{U}_{n}^{1,1},V}) - \psi(V; f) \right)^2 \Bigg| V_1^{n/2}\right] = O(\|\hat{f}_{\mathcal{U}_{n}^{1,1},V} - f\|^2_2) 
\end{align}
where the last step follows from Assumption 4.1. 
 Hence we have

\begin{equation}
\sqrt{\frac{n}{2}}  \left( \hat{T}^{\text{DS},1}_{\mathcal{U}_n^1,V} - T(f) \right) = {\frac{n/2}{{|\mathcal{U}_{n}^{1,2}| \vee 1}}} \sqrt{\frac{2}{n}}\sum_{i:X_i\in\mathcal{U}_{n}^{1,2}} \psi(V_i; f) + o_P(1)
\end{equation}

We can similarly show

\begin{equation}
\sqrt{\frac{n}{2}}  \left( \hat{T}^{\text{DS},2}_{\mathcal{U}_n^1,V} - T(f) \right) = {\frac{n/2}{{|\mathcal{U}_{n}^{1,1}| \vee 1}}} \sqrt{\frac{2}{n}}\sum_{i:X_i\in\mathcal{U}_{n}^{1,1}} \psi(V_i; f) + o_P(1)
\end{equation}

 We have
\begin{align*}
&\sqrt{n}  \left( \hat{T}^{\text{DS}}_{\mathcal{U}_n^1,V} - T(f) \right)\\
&=\frac{1}{\sqrt{2}}\left[\sqrt{\frac{n}{2}}  \left( \hat{T}^{\text{DS}}_{\mathcal{U}_n^1,V} - T(f) \right)+\sqrt{\frac{n}{2}}  \left( \hat{T}^{\text{DS}}_{\mathcal{U}_n^1,V} - T(f) \right)\right]\\
&=\frac{1}{\sqrt{1-\pi_1}}\times\frac{1}{\sqrt{n(1-\pi_1)}}\sum_{i:X_i\in\mathcal{U}_{n}^{1}} \psi(V_i; f) + \left({\frac{n/2}{{|\mathcal{U}_{n}^{1,1}| \vee 1}}}-\frac{1}{1-\pi_1}\right) \sqrt{\frac{1}{n}}\sum_{i:X_i\in\mathcal{U}_{n}^{1,1}} \psi(V_i; f)\\
&\quad+ \left({\frac{n/2}{{|\mathcal{U}_{n}^{1,2}| \vee 1}}}-\frac{1}{1-\pi_1}\right) \sqrt{\frac{1}{n}}\sum_{i:X_i\in\mathcal{U}_{n}^{1,2}} \psi(X_i; f)+ o_P(1)
\end{align*}
Since, ${\frac{{|\mathcal{U}_{n}^{1,1}| \vee 1}}{n/2}}\stackrel{p}{\to}1-\pi_1$, ${\frac{{|\mathcal{U}_{n}^{1,2}| \vee 1}}{n/2}}\stackrel{p}{\to}1-\pi_1$ and ${\frac{{|\mathcal{U}_{n}^{1}|}}{n}}\stackrel{p}{\to}1-\pi_1$, second and third term above are also $o_P(1)$, and by using random-index central limit theorem and Slutsky’s theorem, we obtain
\begin{equation}
\sqrt{n}  \left( \hat{T}^{\text{DS}}_{\mathcal{U}_n^1,V} - T(f) \right)    \xrightarrow{d}N\left(0,\frac{1}{1-\pi_1}\mathbb V_f(\psi(X,f))\right).
\end{equation}
Therefore, again using Slutsky,
\begin{equation}
\sqrt{n}  \left( \hat{T}^{\text{DS}}_{\mathcal{U}_n^1} - T(f) \right)    \xrightarrow{d}N\left(0,\frac{1}{1-\pi_1}\mathbb V_f(\psi(X,f))\right).
\end{equation}
\end{proof}

\subsubsection{Proof of \Cref{thm:conv-ds-2}}

\begin{proof}
Define,
\begin{equation}
\hat{T}^{\text{DS},1}_{\mathcal{U}_n^1,\mathcal{U}_m^2,V,W}=T(\hat{f}_{\mathcal{U}_n^{1,1},V},\hat{g}_{\mathcal{U}_m^{2,1},W})+\frac{\sum_{i:X_i \in \mathcal{U}_n^{1,2}}\psi_f(V_i;\hat{f}_{\mathcal{U}_n^{1,1},V},\hat{g}_{\mathcal{U}_m^{2,1},W})}{|\mathcal{U}_n^{1,2}| \vee 1}+\frac{\sum_{i:Y_i \in \mathcal{U}_m^{2,2}}\psi_g(W_i;\hat{f}_{\mathcal{U}_n^{1,1},V},\hat{g}_{\mathcal{U}_m^{2,1},W})}{|\mathcal{U}_m^{2,2}| \vee 1}.
\end{equation}
We write 
\begin{align*}
\hat{T}^{\text{DS},1}_{\mathcal{U}_n^1,\mathcal{U}_m^2,V,W}-\hat{T}^{\text{DS},1}_{\mathcal{U}_n^1,\mathcal{U}_m^2}&=T(\hat{f}_{\mathcal{U}_n^{1,1},V},\hat{g}_{\mathcal{U}_m^{2,1},W})-T(\hat{f}_{\mathcal{U}_n^{1,1}},\hat{g}_{\mathcal{U}_m^{2,1}})\\
&+\frac{\sum_{i:X_i \in \mathcal{U}_n^{1,2}\cap\mathcal{S}}\left(\psi_f(V_i;\hat{f}_{\mathcal{U}_n^{1,1},V},\hat{g}_{\mathcal{U}_m^{2,1},W})-\psi_f(X_i;\hat{f}_{\mathcal{U}_n^{1,1}},\hat{g}_{\mathcal{U}_m^{2,1}})\right)}{|\mathcal{U}_n^{1,2}| \vee 1}\\
&+\frac{\sum_{i:X_i \in \mathcal{U}_n^{1,2}\cap\mathcal{S}^c}\left(\psi_f(V_i;\hat{f}_{\mathcal{U}_n^{1,1},V},\hat{g}_{\mathcal{U}_m^{2,1},W})-\psi_f(X_i;\hat{f}_{\mathcal{U}_n^{1,1}},\hat{g}_{\mathcal{U}_m^{2,1}})\right)}{|\mathcal{U}_n^{1,2}| \vee 1}\\
&+\frac{\sum_{i:Y_i \in \mathcal{U}_m^{2,2}\cap\mathcal{S}}\left(\psi_g(W_i;\hat{f}_{\mathcal{U}_n^{1,1},V},\hat{g}_{\mathcal{U}_m^{2,1},W})-\psi_g(Y_i;\hat{f}_{\mathcal{U}_n^{1,1}},\hat{g}_{\mathcal{U}_m^{2,1}})\right)}{|\mathcal{U}_m^{2,2}| \vee 1}\\
&+\frac{\sum_{i:Y_i \in \mathcal{U}_m^{2,2}\cap\mathcal{S}^c}\left(\psi_g(W_i;\hat{f}_{\mathcal{U}_n^{1,1},V},\hat{g}_{\mathcal{U}_m^{2,1},W})-\psi_g(Y_i;\hat{f}_{\mathcal{U}_n^{1,1}},\hat{g}_{\mathcal{U}_m^{2,1}})\right)}{|\mathcal{U}_m^{2,2}| \vee 1}
\end{align*}
Now, each of these five terms is dealt with similarly as we did in the proof of Theorem 4.2 to show
\begin{equation}
\label{eq:1}
 \mathbb E |\hat{T}^{\text{LOO}}_{\mathcal{U}_n^1,\mathcal{U}_m^2,V,W}-\hat{T}^{\text{LOO}}_{\mathcal{U}_n^1,\mathcal{U}_m^2}|\to 0, \text{ as } n\to \infty
\end{equation}
and when $H_1$ has finite support $\mathcal{S}_n$,
\begin{equation}
\label{eq:2}
    \mathbb E |\hat{T}^{\text{DS}}_{\mathcal{U}_n^1,\mathcal{U}_m^2,V,W}-\hat{T}^{\text{DS}}_{\mathcal{U}_n^1,\mathcal{U}_m^2}|=\mathcal{O}(n^{-\frac{2s}{2s+d}}+m^{-\frac{2s}{2s+d}}).
\end{equation}
Since $\{V_i:1\leq i\leq n, X_i\in\mathcal{U}_n^1\}$ are i.i.d. having Lebesgue density $f$ (because $V_i$ and $[X_i\in\mathcal{U}_n^1]$ are independent) and $|\mathcal{U}_n^1|\leq n-\sum_{i=1}^n \Lambda_i$,$|\mathcal{U}_m^2|\leq m-\sum_{i=1}^m \Gamma_i$, under same assumptions, from Theorem 7 of \cite{NIPS2015_06138bc5},
we have that 
\begin{align*}
\mathbb E|\hat{T}^{\text{DS}}_{n,m} - T(F,G)|^2=\mathcal{O}(n^{-\frac{4s}{2s+d}}+n^{-1}+m^{-\frac{4s}{2s+d}}+m^{-1}), \text{ as } n,m \to \infty.
\end{align*}
\begin{align*}
&\mathbb E|\hat{T}^{\text{DS}}_{\mathcal{U}_n^1,\mathcal{U}_m^2,V,W} - T(F,G)|^2\\
&=\sum_{k,l=1}^\infty \mathbb E\left(|\hat{T}^{\text{DS}}_{\mathcal{U}_n^1,\mathcal{U}_m^2,V,W} - T(F,G)|^2\mathds{1}(|\mathcal{U}_n^1|=k,|\mathcal{U}_m^2|=l)\right)\\
&=\sum_{\substack{k\leq n(1-\pi_1)/2,\\\text{or }l\leq m(1-\pi_2)/2}} \mathbb E|\hat{T}^{\text{DS}}_{k,l} - T(F,G)|^2 \times \P(|\mathcal{U}_n^1|=k)\P(|\mathcal{U}_m^2|=l)\\
&\quad+\sum_{\substack{k>n(1-\pi_1)/2,\\\text{and }l>m(1-\pi_2)/2}}\mathbb E|\hat{T}^{\text{DS}}_{k,l} - T(F,G)|^2 \times \P(|\mathcal{U}_n^1|=k)\P(|\mathcal{U}_m^2|=l)\\
&\leq\sup_{k,l}\mathbb E|\hat{T}^{\text{DS}}_{k,l} - T(F,G)|^2 \P(|\mathcal{U}_n^1|\leq n(1-\pi_1)/2)\P(|\mathcal{U}_m^2|\leq m(1-\pi_2)/2)+\mathcal{O}(n^{-\frac{4s}{2s+d}}+n^{-1}+m^{-\frac{4s}{2s+d}}+m^{-1})\\
&\leq\sup_{k,l}\mathbb E|\hat{T}^{\text{DS}}_{k,l} - T(F,G)|^2   \P(n-\sum_{i=1}^n \Lambda_i\leq n(1-\pi_1)/2)\P(m-\sum_{i=1}^m \Gamma_i\leq m(1-\pi_2)/2)\\
&\quad\quad+\mathcal{O}(n^{-\frac{4s}{2s+d}}+n^{-1}+m^{-\frac{4s}{2s+d}}+m^{-1})\\
&\leq\sup_{k,l}\mathbb E|\hat{T}^{\text{DS}}_{k,l} - T(F,G)|^2   \P(\sum_{i=1}^n \Lambda_i-n\pi_1\geq n(1-\pi_1)/2)\P(\sum_{i=1}^m \Gamma_i-m\pi_2\geq m(1-\pi_2)/2)\\
&\quad\quad+\mathcal{O}(n^{-\frac{4s}{2s+d}}+n^{-1}+m^{-\frac{4s}{2s+d}}+m^{-1})\\
&\leq\sup_{k,l}\mathbb E|\hat{T}^{\text{DS}}_{k,l} - T(F,G)|^2  \exp(-n(1-\pi_1)^2/2-m(1-\pi_2)^2/2)+\mathcal{O}(n^{-\frac{4s}{2s+d}}+n^{-1}+m^{-\frac{4s}{2s+d}}+m^{-1}),
\end{align*}
where the last step follows from Hoeffding's bound.
Since $\mathbb E|\hat{T}^{\text{DS}}_{k,l} - T(F,G)|^2 \to 0$ as $n\to \infty$, we have $\sup_{k,l}\mathbb E|\hat{T}^{\text{DS}}_{k,l} - T(F,G)|^2 <\infty$ and $\exp(-n(1-\pi_1)^2/2-m(1-\pi_2)^2/2)\leq \mathcal{O}(n^{-\frac{4s}{2s+d}}+n^{-1}+m^{-\frac{4s}{2s+d}}+m^{-1})$.
Hence,    
\begin{equation}
    \mathbb E|\hat{T}^{\text{DS}}_{{\mathcal{U}_n^1,\mathcal{U}_m^2,V,W}} - T(F,G)|^2=\mathcal{O}(n^{-\frac{4s}{2s+d}}+n^{-1}+m^{-\frac{4s}{2s+d}}+m^{-1}), \text{ as } n,m \to \infty.
\end{equation}
Combining the above with \eqref{eq:1} and \eqref{eq:2}, we finally have
\begin{equation}
 \mathbb E |\hat{T}^{\text{DS}}_{\mathcal{U}_n^1,\mathcal{U}_m^2}-T(F,G)|   \to 0, \text{ as } n\to \infty
\end{equation}
and when $H_1$ has finite support $\mathcal{S}_n$,,
\begin{equation}
    \mathbb E |\hat{T}^{\text{DS}}_{\mathcal{U}_n^1,\mathcal{U}_m^2}-T(F,G)|=\mathcal{O}(n^{-\frac{2s}{2s+d}}+n^{-\frac{1}{2}}+m^{-\frac{2s}{2s+d}}+m^{-\frac{1}{2}}).
\end{equation}

For the distributional convergence part, it is enough to show that  $\sqrt{n}(\hat{T}^{\text{DS}}_{\mathcal{U}_n^1,\mathcal{U}_m^2,V,W}-\hat{T}^{\text{DS}}_{\mathcal{U}_n^1,\mathcal{U}_m^2})\stackrel{p}{\to}0$ and $$ \sqrt{n}(\hat{T}^{\text{DS}}_{\mathcal{U}_n^1,\mathcal{U}_m^2,V,W}-T(F,G))\stackrel{d}{\to}N\left(0, \frac{1}{\zeta(1-\pi_1)}\mathbb V_f(\psi_f(X;f,g)))+\frac{1}{(1-\zeta)(1-\pi_2)}\mathbb V_g(\psi_g(X;f,g))\right).$$  The desired result would then follow from Slutsky's theorem. 

Since $H_1$ has fixed finite support $\mathcal{S}$,, it follows from \cref{lem:rn-to-s-finite} that with probability 1, we eventually have $\hat{T}^{\text{DS}}_{\mathcal{U}_n^1,\mathcal{U}_m^2,V,W}=\hat{T}^{\text{DS}}_{\mathcal{U}_n^1,\mathcal{U}_m^2}$ and so, $\sqrt{n}(\hat{T}^{\text{DS}}_{\mathcal{U}_n^1,\mathcal{U}_m^2,V,W}-\hat{T}^{\text{DS}}_{\mathcal{U}_n^1,\mathcal{U}_m^2})\stackrel{p}{\to} 0$.

We begin with the following Taylor expansion around $\hat{f}_{\mathcal{U}_{n}^{1,1},V}$ and $\hat{g}_{\mathcal{U}_{m}^{2,1},W}$(\cite{NIPS2015_06138bc5}),
\begin{align}
\label{eq:vme-2}
\nonumber T(f,g)& = T(\hat{f}_{\mathcal{U}_{n}^{1,1},V},\hat{g}_{\mathcal{U}_{m}^{2,1},W}) + \int \psi_f(u; \hat{f}_{\mathcal{U}_{n}^{1,1},V},\hat{g}_{\mathcal{U}_{m}^{2,1},W})f(u)du + \int \psi_g(u; \hat{f}_{\mathcal{U}_{n}^{1,1},V},\hat{g}_{\mathcal{U}_{m}^{2,1},W})g(u)du\\
&\quad+ O(\|\hat{f}_{\mathcal{U}_{n}^{1,1},V} - f\|^2+\|\hat{g}_{\mathcal{U}_{m}^{2,1},W} - g\|^2).
\end{align}

First consider $\hat{T}^{\text{DS},1}_{\mathcal{U}_n^1,\mathcal{U}_m^2,V,W}$. We can write
\begin{align*}
&\sqrt{N}\left( \hat{T}^{\text{DS},1}_{\mathcal{U}_n^1,\mathcal{U}_m^2,V,W} - T(f,g) \right)\\
&= \sqrt{N} \Bigg( \frac{\sum_{i:X_i \in \mathcal{U}_n^{1,2}}\psi_f(V_i;\hat{f}_{\mathcal{U}_n^{1,1},V},\hat{g}_{\mathcal{U}_m^{2,1},W})}{|\mathcal{U}_n^{1,2}| \vee 1}+\frac{\sum_{i:Y_i \in \mathcal{U}_m^{2,2}}\psi_g(W_i;\hat{f}_{\mathcal{U}_n^{1,1},V},\hat{g}_{\mathcal{U}_m^{2,1},W})}{|\mathcal{U}_m^{2,2}| \vee 1}\\
&\quad- \int \psi_f(u; \hat{f}_{\mathcal{U}_{n}^{1,1},V},\hat{g}_{\mathcal{U}_{m}^{2,1},W})f(u)du - \int \psi_g(u; \hat{f}_{\mathcal{U}_{n}^{1,1},V},\hat{g}_{\mathcal{U}_{m}^{2,1},W})g(u)du -O(\|\hat{f}_{\mathcal{U}_{n}^{1,1},V} - f\|^2+\|\hat{g}_{\mathcal{U}_{m}^{2,1},W} - g\|^2)\Bigg) \notag \\
&=\frac{\sqrt{N}}{|\mathcal{U}_n^{1,2}| \vee 1}\sum_{i:X_i \in \mathcal{U}_n^{1,2}}\left(\psi_f(V_i;\hat{f}_{\mathcal{U}_n^{1,1},V},\hat{g}_{\mathcal{U}_m^{2,1},W})-\psi_f(V_i;f,g)-\int \psi_f(u; \hat{f}_{\mathcal{U}_{n}^{1,1},V},\hat{g}_{\mathcal{U}_{m}^{2,1},W})f(u)du \right)\\
&+\frac{\sqrt{N}}{|\mathcal{U}_m^{2,2}| \vee 1}\sum_{i:Y_i \in \mathcal{U}_m^{2,2}}\left(\psi_g(W_i;\hat{f}_{\mathcal{U}_n^{1,1},V},\hat{g}_{\mathcal{U}_m^{2,1},W})-\psi_g(W_i;f,g)-\int \psi_f(u; \hat{f}_{\mathcal{U}_{n}^{1,1},V},\hat{g}_{\mathcal{U}_{m}^{2,1},W})g(u)du \right)\\
&+ \frac{\sqrt{N}}{(|\mathcal{U}_n^{1,2}| \vee 1)}\sum_{i:X_i \in \mathcal{U}_n^{1,2}} \psi_f(V_i;f,g)+ \frac{\sqrt{N}}{(|\mathcal{U}_m^{2,2}| \vee 1)}\sum_{i:Y_i \in \mathcal{U}_m^{2,2}} \psi_g(W_i;f,g)\\
&+\sqrt{N}\times O(\|\hat{f}_{\mathcal{U}_{n}^{1,1},V} - f\|^2+\|\hat{g}_{\mathcal{U}_{m}^{2,1},W} - g\|^2).
\end{align*}
Above, the last term is $o_P(1)$ as it follows from \eqref{eq:kde-rate} and the assumption $s>d/2$ that $\|\hat{f}_{\mathcal{U}_{n}^{1,1},V} - f\|_2^2 = o_P(n^{-1/2})$ and $\|\hat{g}_{\mathcal{U}_{m}^{2,1},W} - g\|^2_2=o_P(m^{-1/2})$, as $n,m\to\infty$. The first and second terms can also be shown to be $o_P(1)$ via Chebyshev’s inequality and using assumption 4.1. Therefore, 
\begin{equation}
\sqrt{N}\left( \hat{T}^{\text{DS},1}_{\mathcal{U}_n^1,\mathcal{U}_m^2,V,W} - T(f,g) \right)=\frac{\sqrt{N}}{|\mathcal{U}_n^{1,2}| \vee 1}\sum_{i:X_i \in \mathcal{U}_n^{1,2}} \psi_f(V_i;f,g)+ \frac{\sqrt{N}}{|\mathcal{U}_m^{2,2}| \vee 1}\sum_{i:Y_i \in \mathcal{U}_m^{2,2}} \psi_g(W_i;f,g)+o_P(1)
\end{equation}
Similarly,
\begin{equation}
\sqrt{N}\left( \hat{T}^{\text{DS},2}_{\mathcal{U}_n^1,\mathcal{U}_m^2,V,W} - T(f,g) \right)=\frac{\sqrt{N}}{|\mathcal{U}_n^{1,1}| \vee 1}\sum_{i:X_i \in \mathcal{U}_n^{1,1}} \psi_f(V_i;f,g)+ \frac{\sqrt{N}}{|\mathcal{U}_m^{2,1}| \vee 1}\sum_{i:Y_i \in \mathcal{U}_m^{2,1}} \psi_g(W_i;f,g)+o_P(1)
\end{equation}
Hence,
\begin{align*}
    \sqrt{N}\left( \hat{T}^{\text{DS}}_{\mathcal{U}_n^1,\mathcal{U}_m^2,V,W} - T(f,g) \right)&=\sqrt{\frac{N}{n(1-\pi_1)}}\frac{1}{\sqrt{n(1-\pi_1)}}\sum_{i:X_i \in \mathcal{U}_n^{1}} \psi_f(V_i;f,g)+ \\
    &+\sqrt{\frac{N}{m(1-\pi_2)}}\frac{1}{\sqrt{m(1-\pi_2)}}\sum_{i:Y_i \in \mathcal{U}_m^{2}} \psi_g(W_i;f,g)\\
 &+\left(\frac{N}{2(|\mathcal{U}_n^{1,2}| \vee 1)}-\frac{N}{n(1-\pi_1)}\right)N^{-1/2}\sum_{i:X_i \in \mathcal{U}_n^{1,2}} \psi_f(V_i;f,g)\\
 &+\left(\frac{N}{2(|\mathcal{U}_n^{1,1}| \vee 1)}-\frac{N}{n(1-\pi_1)}\right)N^{-1/2}\sum_{i:X_i \in \mathcal{U}_n^{1,1}} \psi_f(V_i;f,g)\\
  &+ \left(\frac{N}{2(|\mathcal{U}_m^{2,2}| \vee 1)}-\frac{N}{m(1-\pi_2)}\right)N^{-1/2}\sum_{i:Y_i \in \mathcal{U}_m^{2,2}} \psi_g(W_i;f,g) \\
 &+ \left(\frac{N}{2(|\mathcal{U}_m^{2,1}| \vee 1)}-\frac{N}{m(1-\pi_2)}\right)N^{-1/2}\sum_{i:Y_i \in \mathcal{U}_m^{2,1}} \psi_g(W_i;f,g)  +o_P(1).
\end{align*}

Since, for $i=1,2,$ ${\frac{{2(|\mathcal{U}_{n}^{1,i}| \vee 1)}}{n}}\stackrel{p}{\to}1-\pi_1$, ${\frac{2{|\mathcal{U}_{m}^{2,i}| \vee 1}}{m}}\stackrel{p}{\to}1-\pi_2$, ${\frac{{|\mathcal{U}_{n}^{1}|}}{n}}\stackrel{p}{\to}1-\pi_1$ and ${\frac{{|\mathcal{U}_{m}^{2}|}}{m}}\stackrel{p}{\to}1-\pi_2$, all terms except first and second term above are  $o_P(1)$, and by using random-index central limit theorem and Slutsky’s theorem, we obtain
\begin{equation}
 \sqrt{n}(\hat{T}^{\text{DS}}_{\mathcal{U}_n^1,\mathcal{U}_m^2,V,W}-T(F,G))\stackrel{d}{\to}N\left(0, \frac{1}{\zeta(1-\pi_1)}\mathbb V_f(\psi_f(X;f,g)))+\frac{1}{(1-\zeta)(1-\pi_2)}\mathbb V_g(\psi_g(X;f,g))\right).
\end{equation}
Therefore, again using Slutsky,
\begin{equation}
 \sqrt{n}(\hat{T}^{\text{DS}}_{\mathcal{U}_n^1,\mathcal{U}_m^2}-T(F,G))\stackrel{d}{\to}N\left(0, \frac{1}{\zeta(1-\pi_1)}\mathbb V_f(\psi_f(X;f,g)))+\frac{1}{(1-\zeta)(1-\pi_2)}\mathbb V_g(\psi_g(X;f,g))\right).
\end{equation}
\end{proof}

\subsubsection{Proof of \Cref{thm:conv-loo}}

\begin{proof}
Define,
\begin{equation}
\hat{T}^{\text{LOO}}_{\mathcal{U}_n^1,V} = \frac{1}{|\mathcal{U}_n^1| \vee 1} \sum_{i:X_i \in \mathcal{U}_n^1} \left(T(\hat{f}_{\mathcal{U}_n^1,V}^{(-i)})+\psi(V_i;\hat{f}_{\mathcal{U}_n^1,V}^{(-i)})\right)
\end{equation}
Since $\{V_i:1\leq i\leq n, X_i\in\mathcal{U}_n^1\}$ are i.i.d. having Lebesgue density $f$ (because $V_i$ and $[X_i\in\mathcal{U}_n^1]$ are independent) and $|\mathcal{U}_n^1|\leq n-\sum_{i=1}^n \Lambda_i$, as $n\to\infty$, under same assumptions, from Theorem 5 of \cite{NIPS2015_06138bc5},
we have that 
\begin{align*}
    \mathbb E|\hat{T}^{\text{LOO}}_{n} - T(F)|^2=\mathcal{O}(n^{-\frac{4s}{2s+d}}+n^{-1}), \text{ as } n \to \infty.
\end{align*}
\begin{align*}
    &\mathbb E|\hat{T}^{\text{LOO}}_{\mathcal{U}_n^1,V} - T(F)|^2\\
    &=\sum_{k=1}^\infty \mathbb E\left(|\hat{T}^{\text{LOO}}_{\mathcal{U}_n^1,V} - T(F)|^2\mathds{1}(|\mathcal{U}_n^1|=k)\right)\\
    &=\sum_{k\leq n(1-\pi)/2} \mathbb E|\hat{T}^{\text{LOO}}_{k} - T(F)|^2 \times \P(|\mathcal{U}_n^1|=k)+\sum_{k>n(1-\pi)/2} \mathbb E|\hat{T}^{\text{LOO}}_{k} - T(F)|^2 \times \P(|\mathcal{U}_n^1|=k)\\
    &\leq\sup_{k}\mathbb E|\hat{T}^{\text{LOO}}_{k} - T(F)|^2 \sum_{k\leq n(1-\pi)/2}  \P(|\mathcal{U}_n^1|=k)+\mathcal{O}(n^{-\frac{4s}{2s+d}}+n^{-1})\times\sum_{k>n(1-\pi)/2} \P(|\mathcal{U}_n^1|=k)\\
    &\leq\sup_{k}\mathbb E|\hat{T}^{\text{LOO}}_{k} - T(F)|^2   \P(n-\sum_{i=1}^n \Lambda_i\leq n(1-\pi)/2)+\mathcal{O}(n^{-\frac{4s}{2s+d}}+n^{-1})\\
    &\leq\sup_{k}\mathbb E|\hat{T}^{\text{LOO}}_{k} - T(F)|^2   \P(\sum_{i=1}^n \Lambda_i-n\pi\geq n(1-\pi)/2)+\mathcal{O}(n^{-\frac{4s}{2s+d}}+n^{-1})\\
    &\leq\sup_{k}\mathbb E|\hat{T}^{\text{LOO}}_{k} - T(F)|^2  \exp(-n(1-\pi)^2/2)+\mathcal{O}(n^{-\frac{4s}{2s+d}}+n^{-1}) ~~~[\text{By Hoeffding bound}]
\end{align*}
Since $\mathbb E|\hat{T}^{\text{LOO}}_{k} - T(F)|^2 \to 0$ as $n\to \infty$, we have $\sup_{k}\mathbb E|\hat{T}^{\text{LOO}}_{k} - T(F)|^2 <\infty$ and $\exp(-n(1-\pi)^2/2)\leq \mathcal{O}(n^{-\frac{4s}{2s+d}}+n^{-1})$.
Hence,
\begin{equation}
\label{eq:original-order}
  \mathbb E|\hat{T}^{\text{LOO}}_{\mathcal{U}_n^1,V} - T(F)|^2  =\mathcal{O}(n^{-\frac{4s}{2s+d}}+n^{-1})
\end{equation}
  Therefore, to show $\mathbb E|\hat{T}^{\text{LOO}}_{\mathcal{U}_n^1} - T(F)|\to 0$, it is enough to show that, $\mathbb E|\hat{T}^{\text{LOO}}_{\mathcal{U}_n^1,V}-\hat{T}^{\text{LOO}}_{\mathcal{U}_n^1}|   \to 0$, as $n\to\infty.$
  Now,
\begin{align}
\label{eq:break-loo1}
\nonumber|\hat{T}^{\text{LOO}}_{\mathcal{U}_n^1,V}-\hat{T}^{\text{LOO}}_{\mathcal{U}_n^1}|
&\leq\frac{1}{|\mathcal{U}_n^1| \vee 1}\Bigg[\bigg|\sum_{\substack{1\leq i\leq n:\\X_i \in \mathcal{U}_n^1}} \left(T(\hat{f}_{\mathcal{U}_n^1,V}^{(-i)})-T(\hat{f}_{\mathcal{U}_n^1}^{(-i)})\right)\bigg|+\bigg|\sum_{\substack{1\leq i\leq n:\\X_i \in \mathcal{U}_n^1\cap\mathcal{S}}}\left( \psi(V_i;\hat{f}_{\mathcal{U}_n^1,V}^{(-i)})-\psi(X_i;\hat{f}_{\mathcal{U}_n^1}^{(-i)})\right)\bigg|\\
 &+\bigg|\sum_{\substack{1\leq i\leq n:\\X_i \in \mathcal{U}_n^1\cap\mathcal{S}^c}}\left( \psi(V_i;\hat{f}_{\mathcal{U}_n^1,V}^{(-i)})-\psi(X_i;\hat{f}_{\mathcal{U}_n^1}^{(-i)})\right)\bigg|\Bigg]
\end{align}
From \eqref{eq:f-hat-diff}, it follows that
\begin{equation}
\label{idiff}
\int|\hat{f}_{\mathcal{U}_n^1,V}^{(-i)}(x)-\hat{f}_{\mathcal{U}_n^1}^{(-i)}(x)|dx\stackrel{a.s.}{\leq}\frac{2}{(|\mathcal{U}_n^1|-1)\vee 1}\sum_{{j=1,\neq i}}^n\mathds{1}(X_j \in \mathcal{S}\setminus\mathcal{R}_n)
\end{equation}
For the first term:
  \begin{align*}
\frac{1}{|\mathcal{U}_n^1| \vee 1}\sum_{\substack{1\leq i\leq n:\\X_i \in \mathcal{U}_n^1}} \left(T(\hat{f}_{\mathcal{U}_n^1,V}^{(-i)})-T(\hat{f}_{\mathcal{U}_n^1}^{(-i)})\right)&\leq\frac{1}{|\mathcal{U}_n^1| \vee 1}\sum_{\substack{1\leq i\leq n:\\X_i \in \mathcal{U}_n^1}}  L_\phi L_\nu\int \left|\hat{f}_{\mathcal{U}_n^1,V}^{(-i)}(x)-\hat{f}_{\mathcal{U}_n^1}^{(-i)}(x)\right|dx\\
&\leq\frac{L_\phi L_\nu }{|\mathcal{U}_n^1| \vee 1}\sum_{\substack{1\leq i\leq n:\\X_i \in \mathcal{U}_n^1}}\frac{2}{(|\mathcal{U}_n^1|-1)\vee 1 }\sum_{j=1,\neq i}^n\mathds{1}(X_j \in \mathcal{U}_n^1\cap\mathcal{S})\\
&\leq 2L_\phi L_\nu \times\frac{1}{(|\mathcal{U}_n^1|-1)\vee 1}\times\sum_{j=1}^n\mathds{1}(X_j \in \mathcal{S}\setminus\mathcal{R}_n),
\end{align*}
where $L_\phi$ and $L_\nu$ are the Lipschitz constants for the functions $\phi$ and $\nu$ respectively and the last inequality follows from the fact that $\mathcal{U}_n^1\cap\mathcal{S}\subseteq\mathcal{S}\setminus\mathcal{R}_n$. Hence, from the above calculation, we have
\begin{equation}
\label{eq:term1-loo1}
    \mathbb E\left|\frac{1}{|\mathcal{U}_n^1| \vee 1}\sum_{\substack{1\leq i\leq n:\\X_i \in \mathcal{U}_n^1}} \left(T(\hat{f}_{\mathcal{U}_n^1,V}^{(-i)})-T(\hat{f}_{\mathcal{U}_n^1}^{(-i)})\right)\right|\leq  2L_\phi L_\nu\times n\mathbb E\left[\frac{\mathds{1}(X_1 \in \mathcal{S}\setminus\mathcal{R}_n)}{(|\mathcal{U}_n^1|-1) \vee 1}\right].
\end{equation}
For the expectation of the second term,
  \begin{align}
  \label{eq:term2-loo1}
\nonumber\mathbb E\left(\frac{1}{|\mathcal{U}_n^1| \vee 1}\sum_{\substack{1\leq i\leq n:\\X_i \in \mathcal{U}_n^1\cap\mathcal{S}}}\left|\psi(V_i;\hat{f}_{\mathcal{U}_n^1,V}^{(-i)})-\psi(X_i;\hat{f}_{\mathcal{U}_n^1}^{(-i)})\right|\right)&\leq \mathbb E\left(\frac{2\|\psi\|_\infty}{|\mathcal{U}_n^1| \vee 1}\sum_{i=1}^n\mathds{1}(X_i \in \mathcal{U}_n^1\cap\mathcal{S})\right)\\
&\leq 2\|\psi\|_\infty\times n\mathbb E\left[\frac{\mathds{1}(X_1 \in \mathcal{S}\setminus\mathcal{R}_n)}{(|\mathcal{U}_n^1|-1) \vee 1}\right].
\end{align}

Finally, we show that the expectation of the last term,
\begin{align}
\label{term3-loo1}
\nonumber\mathbb E\left|\frac{1}{|\mathcal{U}_n^1| \vee 1}\sum_{\substack{1\leq i\leq n:\\X_i \in \mathcal{U}_n^1\cap\mathcal{S}^c}}\left( \psi(V_i;\hat{f}_{\mathcal{U}_n^1,V}^{(-i)})-\psi(X_i;\hat{f}_{\mathcal{U}_n^1}^{(-i)})\right)\right|
  &{\leq}\sum_{\substack{1\leq i\leq n}}\mathbb E\left|\frac{\psi(V_i;\hat{f}_{\mathcal{U}_n^1,V}^{(-i)})-\psi(V_i;\hat{f}_{\mathcal{U}_n^1}^{(-i)})}{|\mathcal{U}_n^1| \vee 1} \right|\\
&=n \mathbb E\left|\frac{\psi(V_1;\hat{f}_{\mathcal{U}_n^1,V}^{(-1)})-\psi(V_1;\hat{f}_{\mathcal{U}_n^1}^{(-1)})}{|\mathcal{U}_n^1| \vee 1} \right|. 
\end{align}
Using Assumption 4.1, for large enough $n$, $\mathbb E\left(\left|\psi(V_1;\hat{f}_{\mathcal{U}_n^1,V}^{(-1)})-\psi(V_1;\hat{f}_{\mathcal{U}_n^1}^{(-1)}) \right|\Bigg\vert X_{-1},V_{-1}\right)\leq C|\hat{f}_{\mathcal{U}_n^1,V}^{(-1)}-\hat{f}_{\mathcal{U}_n^1}^{(-1)}|$, for some constant $C$. Therefore,
\begin{align*}
& n\mathbb E\left|\frac{\psi(V_1;\hat{f}_{\mathcal{U}_n^1,V}^{(-1)})-\psi(V_1;\hat{f}_{\mathcal{U}_n^1}^{(-1)})}{|\mathcal{U}_n^1| \vee 1} \right|\\
 &\leq n\mathbb E\left( \mathbb E\left(\left|\frac{\psi(V_1;\hat{f}_{\mathcal{U}_n^1,V}^{(-1)})-\psi(V_1;\hat{f}_{\mathcal{U}_n^1}^{(-1)})}{|\mathcal{U}_n^1| \vee 1} \right|\Bigg\vert X_{-1},V_{-1}\right)\right) \\
 &\leq \mathbb E\left(\frac{n}{(|\mathcal{U}_{n,-1}^1|-1) \vee 1} \mathbb E\left(\left|\psi(V_1;\hat{f}_{\mathcal{U}_n^1,V}^{(-1)})-\psi(V_1;\hat{f}_{\mathcal{U}_n^1}^{(-1)}) \right|\Bigg\vert X_{-1},V_{-1}\right)\right) \\
 &\leq \mathbb E\left( \frac{Cn}{(|\mathcal{U}_{n,-1}^1|-1) \vee 1}\int |\hat{f}_{\mathcal{U}_n^1,V}^{(-1)}-\hat{f}_{\mathcal{U}_n^1}^{(-1)}|\right)\\
&\leq \mathbb E\left[\frac{C}{(|\mathcal{U}_{n,-1}^1|-1) \vee 1}\times\frac{2}{(|\mathcal{U}_n^1|-1) \vee 1}\sum_{{j=1,\neq i}}^n\mathds{1}(X_j \in \mathcal{S}\setminus\mathcal{R}_n)\right]\\
&=2Cn(n-1)\mathbb E\left[\frac{1}{(|\mathcal{U}_{n,-1}^1|-1) \vee 1}\frac{\mathds{1}(X_1 \in \mathcal{S}\setminus\mathcal{R}_n)}{(|\mathcal{U}_n^1|-1) \vee 1}\right],
\end{align*}
where the last inequality follows from \Cref{idiff} and $|\mathcal{U}_{n,-1}^1|$ denotes the number of unique elements in $X_{-1}=\{X_2,\cdots,X_n\}$, and the second inequality follows from the observation that conditioned on $X_{-1}$, $|\mathcal{U}_{n}|\geq |\mathcal{U}_{n,-1}^1|-1$. Now, Cauchy-Schwarz inequality implies
\begin{align*}
  n\mathbb E\left[\frac{\mathds{1}(X_1 \in \mathcal{S}\setminus\mathcal{R}_n)}{(|\mathcal{U}_n^1|-1) \vee 1}\right]\leq\sqrt{ \mathbb E\left[\frac{n^2}{((|\mathcal{U}_n^1|-1) \vee 1)^2}\right]\mathbb E[\mathds{1}(X_1 \in \mathcal{S}\setminus\mathcal{R}_n)]}
\end{align*}
and
\begin{align*}
  \mathbb E\left[\frac{n(n-1)}{(|\mathcal{U}_{n,-1}^1|-1) \vee 1}\frac{\mathds{1}(X_1 \in \mathcal{S}\setminus\mathcal{R}_n)}{(|\mathcal{U}_n^1|-1) \vee 1}\right]\leq\sqrt[3]{\mathbb E\left[\frac{(n-1)^3}{((|\mathcal{U}_{n,-1}^1|-1) \vee 1)^3}\right] \mathbb E\left[\frac{n^3}{((|\mathcal{U}_n^1|-1) \vee 1)^3}\right]\mathbb E[\mathds{1}(X_1 \in \mathcal{S}\setminus\mathcal{R}_n)]}
\end{align*}
 Define, $A_n=\{k\in0,1,\cdots,n-2: |k-n\pi|<n^{2/3}\}$. Also note that $n-|\mathcal{U}_n^1|\leq\sum_{i=1}^n \Lambda_i$ almost surely. 
Therefore, for $\gamma=2,3$,
\begin{align*}
\mathbb E\left[\frac{n^\gamma}{((|\mathcal{U}_n^1|-1) \vee 1)^\gamma}\right]&\leq \mathbb E\left[\frac{n^\gamma}{((n-1-\sum_{i=1}^n \Lambda_i) \vee 1)^\gamma}\right]\\
&\leq \sum_{k=1}^{n-2}\frac{n^\gamma}{(n-1-k)^\gamma}\P\left[\sum_{i=1}^n \Lambda_i=k\right]\\
&\leq\sum_{k\in A_n}\frac{n^\gamma}{(n-1-n\pi-n^{2/3})^\gamma}\P\left[\sum_{i=1}^n \Lambda_i=k\right]+\sum_{k\in A_n^c}n^\gamma\P\left[\sum_{i=1}^n \Lambda_i=k\right]\\
&\leq\frac{n^\gamma}{(n-1-n\pi-n^{2/3})^\gamma}+n^\gamma\P\left[\left|\sum_{i=1}^n \Lambda_i-n\pi\right|\geq n^{2/3}\right]\\
&\leq\frac{n^\gamma}{(n-1-n\pi-n^{2/3})^\gamma}+n^\gamma\exp\{-2n^{1/3}\}\to 1/(1-\pi)^\gamma, \text{ as } n\to \infty.
\end{align*}
Similarly, one can show that $\lim\sup_{n\to\infty} \mathbb E\left[\frac{(n-1)^3}{((|\mathcal{U}_{n,-1}^1|-1) \vee 1)^3}\right]\leq 1/(1-\pi)^3$.

The above calculations, along with \Cref{lem:exp-decay} show that $ \mathbb E\left[\frac{n(n-1)}{(|\mathcal{U}_{n,-1}^1|-1) \vee 1}\frac{\mathds{1}(X_1 \in \mathcal{S}\setminus\mathcal{R}_n)}{(|\mathcal{U}_n^1|-1) \vee 1}\right]\to 0$ and $ n\mathbb E\left[\frac{\mathds{1}(X_1 \in \mathcal{S}\setminus\mathcal{R}_n)}{(|\mathcal{U}_n^1|-1) \vee 1}\right]\to 0$, as $n\to\infty$ and for finite $\mathcal{S}_n$ satisfying the given condition, $ \mathbb E\left[\frac{n(n-1)}{(|\mathcal{U}_{n,-1}^1|-1) \vee 1}\frac{\mathds{1}(X_1 \in \mathcal{S}\setminus\mathcal{R}_n)}{(|\mathcal{U}_n^1|-1) \vee 1}\right]=\mathcal{O}(n^{\frac{-2s}{2s+d}})$ and $ n\mathbb E\left[\frac{\mathds{1}(X_1 \in \mathcal{S}\setminus\mathcal{R}_n)}{(|\mathcal{U}_n^1|-1) \vee 1}\right]=\mathcal{O}(n^{\frac{-3s}{2s+d}})$.  Therefore, from \eqref{eq:term1-loo1}, \eqref{eq:term2-loo1} and \eqref{term3-loo1}, and taking expectation in both sides of \eqref{eq:break-loo1}, we obtain
\begin{equation}
 \mathbb E |\hat{T}^{\text{LOO}}_{\mathcal{U}_n^1,V}-\hat{T}^{\text{LOO}}_{\mathcal{U}_n^1}|   \to 0, \text{ as } n\to \infty
\end{equation}
and when $H_1$ has finite support $\mathcal{S}_n$,
\begin{equation}
    \mathbb E |\hat{T}^{\text{LOO}}_{\mathcal{U}_n^1,V}-\hat{T}^{\text{LOO}}_{\mathcal{U}_n^1}|=\mathcal{O}(n^{-\frac{2s}{2s+d}}).
\end{equation}
Combining the above with \eqref{eq:original-order}, we have
\begin{equation}
 \mathbb E |\hat{T}^{\text{LOO}}_{\mathcal{U}_n^1}-T(F)|   \to 0, \text{ as } n\to \infty
\end{equation}
and when $H_1$ has finite support $\mathcal{S}_n$,
\begin{equation}
    \mathbb E |\hat{T}^{\text{LOO}}_{\mathcal{U}_n^1}-T(F)|=\mathcal{O}(n^{-\frac{2s}{2s+d}}+n^{-\frac{1}{2}}).
\end{equation}

\end{proof}

\subsubsection{Proof of \Cref{thm:conv-loo-2}}

\begin{proof}
Define,
\begin{equation}
\hat{T}^{\text{LOO}}_{\mathcal{U}_n^1,\mathcal{U}_m^2,V,W} =\frac{1}{|\mathcal{U}_n^1|\vee|\mathcal{U}_m^2| \vee 1} \sum_{i=1}^{|\mathcal{U}_n^1|\vee|\mathcal{U}_m^2|}\left(T(\hat{f}_{\mathcal{U}_n^1,V}^{(-j_i)},\hat{g}_{\mathcal{U}_m^2,W}^{(-k_i)})+\psi_f(V_i;\hat{f}_{\mathcal{U}_n^1,V}^{(-j_i)},\hat{g}_{\mathcal{U}_m^2,W}^{(-k_i)})+\psi_g(Z_i;\hat{f}_{\mathcal{U}_n^1,V}^{(-j_i)},\hat{g}_{\mathcal{U}_m^2,Z}^{(-k_i)})\right),
\end{equation}
Since $\{V_i:1\leq i\leq n, X_i\in\mathcal{U}_n^1\}$ are i.i.d. having Lebesgue density $f$ (because $V_i$ and $[X_i\in\mathcal{U}_n^1]$ are independent) and $|\mathcal{U}_n^1|\leq n-\sum_{i=1}^n \Lambda_i$,$|\mathcal{U}_m^2|\leq m-\sum_{i=1}^m \Gamma_i$, under same assumptions, from Theorem 7 of \cite{NIPS2015_06138bc5},
we have that 
\begin{align*}
    \mathbb E|\hat{T}^{\text{LOO}}_{n,m} - T(F,G)|^2=\mathcal{O}(n^{-\frac{4s}{2s+d}}+n^{-1}+m^{-\frac{4s}{2s+d}}+m^{-1}), \text{ as } n,m \to \infty.
\end{align*}
\begin{align*}
    &\mathbb E|\hat{T}^{\text{LOO}}_{\mathcal{U}_n^1,\mathcal{U}_m^2,V,W} - T(F,G)|^2\\
    &=\sum_{k,l=1}^\infty \mathbb E\left(|\hat{T}^{\text{LOO}}_{\mathcal{U}_n^1,\mathcal{U}_m^2,V,W} - T(F,G)|^2\mathds{1}(|\mathcal{U}_n^1|=k,|\mathcal{U}_m^2|=l)\right)\\
    &=\sum_{\substack{k\leq n(1-\pi_1)/2,\\\text{or }l\leq m(1-\pi_2)/2}} \mathbb E|\hat{T}^{\text{LOO}}_{k,l} - T(F,G)|^2 \times \P(|\mathcal{U}_n^1|=k)\P(|\mathcal{U}_m^2|=l)\\
    &\quad+\sum_{\substack{k>n(1-\pi_1)/2,\\\text{and }l>m(1-\pi_2)/2}}\mathbb E|\hat{T}^{\text{LOO}}_{k,l} - T(F,G)|^2 \times \P(|\mathcal{U}_n^1|=k)\P(|\mathcal{U}_m^2|=l)\\
    &\leq\sup_{k,l}\mathbb E|\hat{T}^{\text{LOO}}_{k,l} - T(F,G)|^2 \P(|\mathcal{U}_n^1|\leq n(1-\pi_1)/2)\P(|\mathcal{U}_m^2|\leq m(1-\pi_2)/2)+\mathcal{O}(n^{-\frac{4s}{2s+d}}+n^{-1}+m^{-\frac{4s}{2s+d}}+m^{-1})\\
    &\leq\sup_{k,l}\mathbb E|\hat{T}^{\text{LOO}}_{k,l} - T(F,G)|^2   \P(n-\sum_{i=1}^n \Lambda_i\leq n(1-\pi_1)/2)\P(m-\sum_{i=1}^m \Gamma_i\leq m(1-\pi_2)/2)\\
    &\quad\quad+\mathcal{O}(n^{-\frac{4s}{2s+d}}+n^{-1}+m^{-\frac{4s}{2s+d}}+m^{-1})\\
    &\leq\sup_{k,l}\mathbb E|\hat{T}^{\text{LOO}}_{k,l} - T(F,G)|^2   \P(\sum_{i=1}^n \Lambda_i-n\pi_1\geq n(1-\pi_1)/2)\P(\sum_{i=1}^m \Gamma_i-m\pi_2\geq m(1-\pi_2)/2)\\
    &\quad\quad+\mathcal{O}(n^{-\frac{4s}{2s+d}}+n^{-1}+m^{-\frac{4s}{2s+d}}+m^{-1})\\
    &\leq\sup_{k,l}\mathbb E|\hat{T}^{\text{LOO}}_{k,l} - T(F,G)|^2  \exp(-n(1-\pi_1)^2/2-m(1-\pi_2)^2/2)+\mathcal{O}(n^{-\frac{4s}{2s+d}}+n^{-1}+m^{-\frac{4s}{2s+d}}+m^{-1}),
\end{align*}
where the last step follows from Hoeffding's bound.
Since $\mathbb E|\hat{T}^{\text{LOO}}_{k,l} - T(F,G)|^2 \to 0$ as $n\to \infty$, we have $\sup_{k,l}\mathbb E|\hat{T}^{\text{LOO}}_{k,l} - T(F,G)|^2 <\infty$ and $\exp(-n(1-\pi_1)^2/2-m(1-\pi_2)^2/2)\leq \mathcal{O}(n^{-\frac{4s}{2s+d}}+n^{-1}+m^{-\frac{4s}{2s+d}}+m^{-1})$.
Hence,    
\begin{equation}
\label{eq:original-order-2}
    \mathbb E|\hat{T}^{\text{LOO}}_{{\mathcal{U}_n^1,\mathcal{U}_m^2,V,W}} - T(F,G)|^2=\mathcal{O}(n^{-\frac{4s}{2s+d}}+n^{-1}+m^{-\frac{4s}{2s+d}}+m^{-1}), \text{ as } n,m \to \infty.
\end{equation}
\begin{align}
\label{eq:break-loo2}
\nonumber|\hat{T}^{\text{LOO}}_{\mathcal{U}_n^1,\mathcal{U}_m^2,V,W}-\hat{T}^{\text{LOO}}_{\mathcal{U}_n^1,\mathcal{U}_m^2}|
&\leq\frac{1}{|\mathcal{U}_n^1| \vee |\mathcal{U}_m^2| \vee1}\Bigg[\bigg|\sum_{i=1}^{|\mathcal{U}_n^1| \vee |\mathcal{U}_m^2|} \left(T(\hat{f}_{\mathcal{U}_n^1,V}^{(-j_i)},\hat{g}_{\mathcal{U}_m^2,W}^{(-k_i)})-T(\hat{f}_{\mathcal{U}_n^1}^{(-j_i)},\hat{g}_{\mathcal{U}_m^2}^{(-k_i)})\right)\bigg|\\
\nonumber&+\bigg|\sum_{\substack{1\leq i\leq |\mathcal{U}_n^1| \vee |\mathcal{U}_m^2|:\\X_{j_i} \in \mathcal{U}_n^1\cap\mathcal{S}}}\left( \psi_f(V_i;\hat{f}_{\mathcal{U}_n^1,V}^{(-j_i)},\hat{g}_{\mathcal{U}_m^2,W}^{(-k_i)})-\psi_f(X_i;\hat{f}_{\mathcal{U}_n^1}^{(-j_i)},\hat{g}_{\mathcal{U}_m^2}^{(-k_i)})\right)\bigg|\\
\nonumber&+\bigg|\sum_{\substack{1\leq i\leq |\mathcal{U}_n^1| \vee |\mathcal{U}_m^2|:\\X_{j_i} \in \mathcal{U}_n^1\cap\mathcal{S}^c}}\left( \psi_f(V_i;\hat{f}_{\mathcal{U}_n^1,V}^{(-j_i)},\hat{g}_{\mathcal{U}_m^2,W}^{(-k_i)})-\psi_f(X_i;\hat{f}_{\mathcal{U}_n^1}^{(-j_i)},\hat{g}_{\mathcal{U}_m^2}^{(-k_i)})\right)\bigg|\\
\nonumber &+\bigg|\sum_{\substack{1\leq i\leq |\mathcal{U}_n^1| \vee |\mathcal{U}_m^2|:\\Y_{k_i} \in \mathcal{U}_n^1\cap\mathcal{S}}}\left( \psi_g(Z_i;\hat{f}_{\mathcal{U}_n^1,V}^{(-j_i)},\hat{g}_{\mathcal{U}_m^2,Z}^{(-k_i)})-\psi_g(Y_i;\hat{f}_{\mathcal{U}_n^1}^{(-j_i)},\hat{g}_{\mathcal{U}_m^2}^{(-k_i)})\right)\bigg|\Bigg]\\
&+\bigg|\sum_{\substack{1\leq i\leq |\mathcal{U}_n^1| \vee |\mathcal{U}_m^2|:\\Y_{k_i} \in \mathcal{U}_n^1\cap\mathcal{S}^c}}\left( \psi_g(Z_i;\hat{f}_{\mathcal{U}_n^1,V}^{(-j_i)},\hat{g}_{\mathcal{U}_m^2,Z}^{(-k_i)})-\psi_g(Y_i;\hat{f}_{\mathcal{U}_n^1}^{(-j_i)},\hat{g}_{\mathcal{U}_m^2}^{(-k_i)})\right)\bigg|\Bigg]
\end{align}
Now, each of these five terms is dealt with similarly as we did in the proof of Theorem 4.4 to show
\begin{equation}
 \mathbb E |\hat{T}^{\text{LOO}}_{\mathcal{U}_n^1,\mathcal{U}_m^2,V,W}-\hat{T}^{\text{LOO}}_{\mathcal{U}_n^1,\mathcal{U}_m^2}|\to 0, \text{ as } n\to \infty
\end{equation}
and when $H_1$ has finite support $\mathcal{S}_n$ satisfying the given condition,
\begin{equation}
    \mathbb E |\hat{T}^{\text{LOO}}_{\mathcal{U}_n^1,\mathcal{U}_m^2,V,W}-\hat{T}^{\text{LOO}}_{\mathcal{U}_n^1,\mathcal{U}_m^2}|=\mathcal{O}(n^{-\frac{2s}{2s+d}}+m^{-\frac{2s}{2s+d}}).
\end{equation}
Combining the above with \eqref{eq:original-order-2}, we finally have
\begin{equation}
 \mathbb E |\hat{T}^{\text{LOO}}_{\mathcal{U}_n^1,\mathcal{U}_m^2}-T(F,G)|   \to 0, \text{ as } n\to \infty
\end{equation}
and when $H_1$ has finite support $\mathcal{S}_n$ satisfying the given condition,
\begin{equation}
    \mathbb E |\hat{T}^{\text{LOO}}_{\mathcal{U}_n^1,\mathcal{U}_m^2}-T(F,G)|=\mathcal{O}(n^{-\frac{2s}{2s+d}}+n^{-\frac{1}{2}}+m^{-\frac{2s}{2s+d}}+m^{-\frac{1}{2}}).
\end{equation}
\end{proof}

\section{Additional Experimental Result on Estimation of Rényi Divergence}

We generate two samples of the same size ($n=m$) from two different distributions, each formed as a mixture of a continuous and a discrete component. The discrete part of both distributions is given by a scaled Poisson distribution, $\mathrm{Poisson}(1)/5$, supported on a countable set. The continuous component of the first distribution is the uniform distribution on $[0,1]$, while the continuous component of the second distribution has density $0.5 + 5t^5$ for $t \in [0,1]$.

Our goal is to estimate the Rényi-0.75 divergence between these two mixed distributions using our leave-one-out (LOO) estimator, denoted $\hat{T}^{\text{LOO}}_{\mathcal{U}_n^1,\mathcal{U}_m^2}$. To evaluate its performance, we report the average absolute error across 100 independent runs. We compare our method against two baselines: (a) the LOO estimator from \cite{NIPS2015_06138bc5}, which uses the full data and hence, inconsistent when atoms are present. (a) the oracle estimator: the estimator is the same, but it now has access to the labels indicating whether each point was generated from the continuous component, and uses only the continuous part for estimation. The results, shown in \Cref{fig:renyi}, highlight that the mean absolute error of our method is very close to that of the oracle.

\begin{figure}
    \centering
    \includegraphics[width=0.5\linewidth]{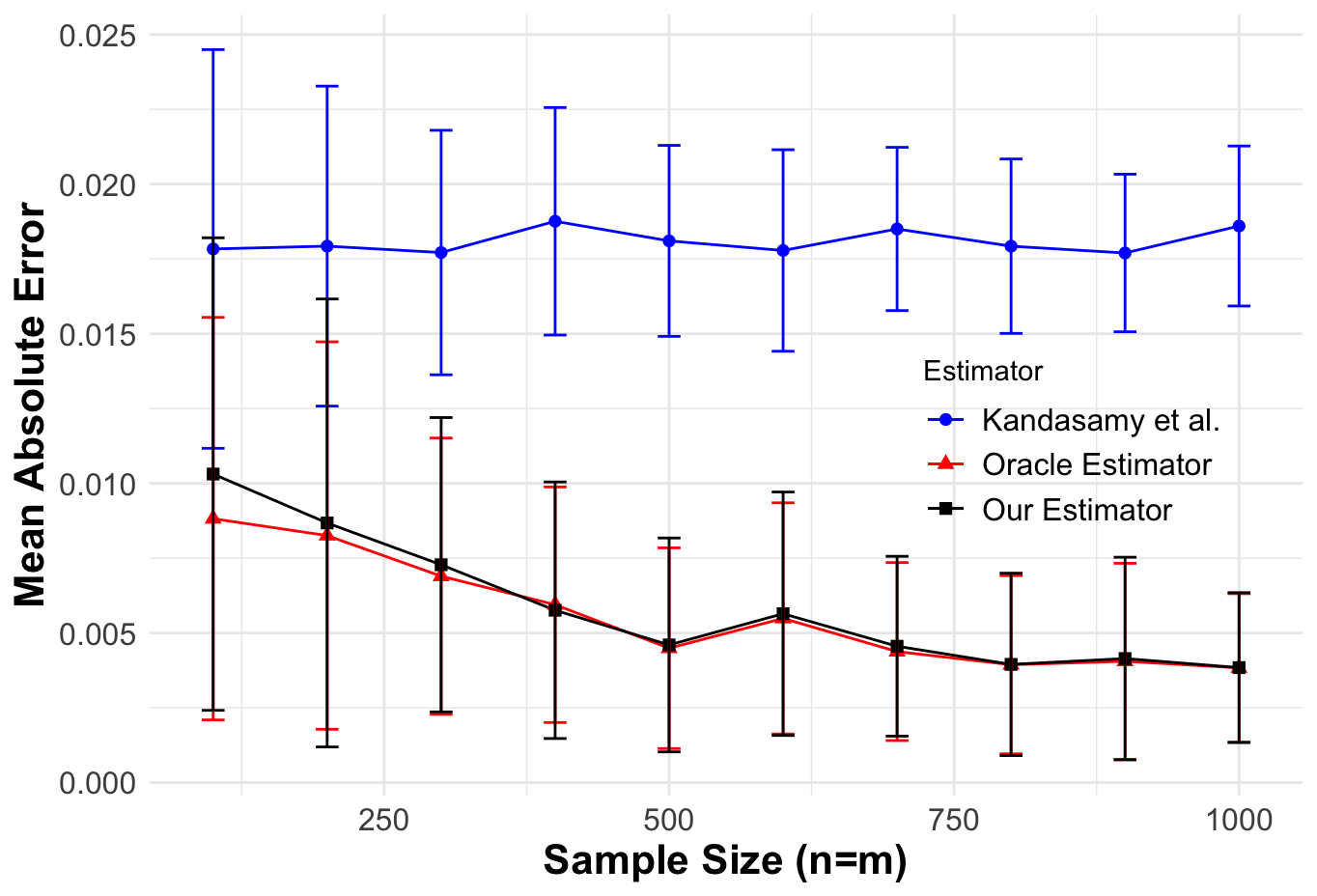}
\caption[]{The average of the absolute error of estimation of Rényi-0.75 divergence is plotted against the sample size, $n=m$. For the first sample, $60\%$ of the data is drawn from the Uniform$(0,1)$ and the remaining $40\%$ from $\mathrm{Poisson}(1)/5$. For the second sample, $60\%$ of the data is drawn from the density $f(t)=0.5+5t^9, t\in [0,1]$ and the remaining $40\%$ from $\mathrm{Poisson}(1)/5$. Our atom-aware estimator closely matches the performance of the oracle that has access to the labels, and their mean absolute error approaches zero as the sample size increases. However, the original estimator of \cite{NIPS2015_06138bc5} fails due to its inability to handle atoms in the distribution.
} 
\label{fig:renyi}  
\end{figure}

\end{document}